\numberwithin{equation}{section}
\newtheorem{theorem}{Theorem}[section]
\newtheorem{corollary}[theorem]{Corollary}
\newtheorem{lemma}[theorem]{Lemma}
\theoremstyle{definition}
\newtheorem{remark}[theorem]{Remark}
\newtheorem{assumption}[theorem]{Assumption}
\newcommand{\ind}{1\hspace{-2.1mm}{1}} 
\newcommand{\I}{\mathtt{i}}
\newcommand{\EE}{\mathbb{E}}
\newcommand{\RR}{\mathbb{R}}
\newcommand{\QQ}{\mathbb{Q}}
\newcommand{\PP}{\mathbb{P}}
\newcommand{\D}{\mathrm{d}}
\newcommand{\E}{\mathrm{e}}
\newcommand{\sgn}{\mathrm{sgn}}
\newcommand{\Nn}{\mathcal{N}}
\newcommand{\Ss}{\mathcal{S}}
\newcommand{\Pp}{\mathcal{P}}
\newcommand{\Rr}{\mathcal{R}}
\newcommand{\Rf}{\mathfrak{R}}
\newcommand{\Ii}{\mathcal{I}}
\newcommand{\Hh}{\mathcal{H}}
\newcommand{\SVI}{\mathrm{SVI}}
\begin{document}

\title{Large-Maturity Regimes of the Heston Forward Smile}
\author{Antoine Jacquier and Patrick Roome}
\address{Department of Mathematics, Imperial College London}
\email{a.jacquier@imperial.ac.uk, p.roome11@imperial.ac.uk}
\keywords {Stochastic volatility, Heston, forward implied volatility, asymptotic expansion, 
sharp large deviations}
\subjclass[2010]{60F10, 91G99, 91G60}
\date{\today}
\thanks{AJ acknowledges financial support from the EPSRC First Grant EP/M008436/1.
The authors would also like to thank an anonymous referee for useful comments.}
\maketitle

\begin{abstract}
We provide a full characterisation of the large-maturity forward implied volatility smile in the Heston model.
Although the leading decay is provided by a fairly classical large deviations behaviour, 
the algebraic expansion providing the higher-order terms highly depends on the parameters,
and different powers of the maturity come into play.
As a by-product of the analysis we provide new implied volatility asymptotics, both in the forward case 
and in the spot case, as well as extended SVI-type formulae.
The proofs are based on extensions and refinements of sharp large deviations theory,
in particular in cases where standard convexity arguments fail.
\end{abstract}


\section{Introduction}

Consider an asset price process $\left(\E^{X_t}\right)_{t\geq0}$ with $X_0=0$, paying no dividend, 
defined on a complete filtered probability space $(\Omega,\mathcal{F},(\mathcal{F}_t)_{t\geq0},\PP)$ 
with a given risk-neutral measure $\PP$, and assume that interest rates are zero. 
In the Black-Scholes-Merton (BSM) model, the dynamics of the logarithm of the asset price are given by
$\D X_t=-\frac{1}{2}\sigma^2  \D t +\sigma \D W_t$,
where $\sigma>0$ represents the instantaneous volatility and $W$ is a standard Brownian motion.
The no-arbitrage price of the call option at time zero is then given by the famous BSM formula~\cite{BS73,M73}:
$C_{\textrm{BS}}(\tau,k,\sigma) := \EE\left(\E^{X_\tau}-\E^k\right)_+
=\Nn\left(d_+\right)-\E^k\Nn\left(d_-\right)$,
with $d_\pm:=-\frac{k}{\sigma\sqrt{\tau}}\pm\frac{1}{2}\sigma\sqrt{\tau}$,
where $\Nn$ is the standard normal distribution function. 
For a given market price $C^{\textrm{obs}}(\tau,k)$ of the option at strike~$\E^k$ and maturity~$\tau$,
 the spot implied volatility $\sigma_{\tau}(k)$ is the unique solution to 
the equation $C^{\textrm{obs}}(\tau,k)=C_{\textrm{BS}}(\tau,k,\sigma_{\tau}(k))$.

For any~$t,\tau>0$ and $k\in\RR$, we define as in~\cite{B04,VL}
a  forward-start option with forward-start date~$t$, maturity~$\tau$ and strike~$\E^k$ 
as a European option with payoff 
$\left(\E^{X_{\tau}^{(t)}}-\E^k\right)^+$ where $\label{eq:XtTauDef}X_{\tau}^{(t)}:=X_{t+\tau}-X_t$ pathwise. 
By the stationary increment property, its value is simply~$C_{\textrm{BS}}(\tau,k,\sigma)$ in the BSM model. 
For a given market price~$C^{\textrm{obs}}(t,\tau,k)$ of the option at strike~$\E^k$, 
forward-start date~$t$ and maturity~$\tau$,
the forward implied volatility smile~$\sigma_{t,\tau}(k)$ is then defined (see also~\cite{B04})
 as the unique solution to 
$C^{\textrm{obs}}(t,\tau,k)=C_{\textrm{BS}}(\tau,k,\sigma_{t,\tau}(k))$. 
The forward smile is a generalisation of the spot implied volatility smile, and the two are equal when $t=0$. 

The literature on implied volatility asymptotics is extensive and has drawn upon a wide range of mathematical techniques.
Small-maturity asymptotics have received wide attention using heat kernel expansion results~\cite{Benarous}.
More recently, they have been studied using PDE methods~\cite{BBF,Hagan,Pascucci}, large deviations~\cite{DFJV,FJSmall}, saddlepoint methods~\cite{FJL11}, Malliavin calculus~\cite{Gobet,KT04} and differential geometry~\cite{GHLOW,PHL}.
Roger Lee~\cite{Lee} was the first to study extreme strike asymptotics, and further works on this have been carried out by Benaim and Friz~\cite{BF1, BF2} and in~\cite{Guli, GuliMass, GS, FGGS, DFJV, Marco}.
Large-maturity asymptotics have only been studied in~\cite{Tehr, FJ09, JMKR, JM12, FJM10} using large deviations and saddlepoint methods.
Fouque et al.~\cite{Fouque} have also successfully introduced perturbation techniques in order to study slow and fast mean-reverting stochastic volatility models.
Models with jumps (including L\'evy processes), studied in the above references for large maturities and extreme strikes, `explode' in small time, in a precise sense investigated in~\cite{Alos, AndLipton, Tank, Nutz, MijTankov, FL}.

On the other hand the literature on asymptotics of forward-start options and the forward smile is sparse.
Glasserman and Wu~\cite{GW10} use different notions of forward volatilities to assess their predictive values in determining future option prices and future implied volatility.
Keller-Ressel~\cite{KR} studies the forward smile asymptotic when the forward-start date $t$ becomes large ($\tau$ fixed).
Bompis~\cite{Bo13} produces an expansion for the forward smile  in local volatility models with bounded diffusion coefficient.
In~\cite{JR12} the authors compute small and large-maturity asymptotics for the forward smile in a general class of models (including stochastic volatility and time-changed exponential L\'evy models) where the forward characteristic function 
satisfies certain properties (in particular essential smoothness of the re-scaled limit).
In ~\cite{JR13} the authors prove that for fixed $t>0$ the Heston forward smile explodes as $\tau$ tends to zero.
Finally, empirical results on the forward smile have been carried out by practitioners in Balland~\cite{Ba}, Bergomi~\cite{B04}, B\"uhler~\cite{B02} and Gatheral ~\cite{G06}. 

Under some conditions on the parameters, it 
was shown in~\cite{JR12} that the smooth behaviour of the pointwise limit 
$\lim_{\tau\uparrow\infty}\tau^{-1}\log\EE(\E^{u X_{\tau}^{(t)}})$
yielded an asymptotic behaviour for the forward smile as 
$
\sigma_{t,\tau}^2(k\tau)=v_0^{\infty}(k)+v_1^{\infty}(k,t)\tau^{-1}+\mathcal{O}(\tau^{-2}),
$
where $v_0^{\infty}(\cdot)$ and $v_1^{\infty}(\cdot,t)$ are continuous functions on $\RR$.
In particular for $t=0$ (spot smiles), they recovered the result in~\cite{FJ09} (also under some restrictions on the parameters).
Interestingly, the limiting large-maturity forward smile $v_0^{\infty}$ does not depend on the forward-start date~$t$.
A number of practitioners  (see eg. Balland\cite{Ba}) have made the natural conjecture that the large-maturity forward smile should be the same as the large-maturity spot smile.
The result above rigorously shows us that this indeed holds if and only if the Heston correlation is close enough to zero.

It is natural to ask what happens when the parameter restrictions are violated.
We identify a number of regimes depending on the correlation and derive asymptotics in each regime.
The main results (Theorems~\ref{theorem:largematasympcalls} and~\ref{theorem:HestonLargeMatFwdSmileNonSteep}) state the following, as $\tau$ tends to infinity:
\begin{align*}
\EE\left(\E^{X^{(t)}_{\tau}}-\E^{k\tau}\right)^+
 & =
\Ii\left(k,\tau,V'(0),V'(1),\ind_{\{\kappa<\rho \xi\}}\right) + 
\frac{\phi(k,t)}{\tau^{\alpha}}\E^{-\tau\left(V ^*(k)-k \right)+\psi(k,t)\tau^\gamma}
\left(1+\mathcal{O}\left(\tau^{-\beta}\right)\right),\\
\sigma_{t,\tau}^2(k\tau)
 & =
v_0^{\infty}(k,t)+v_1^{\infty}(k,t)\tau^{-\lambda}+\Rr(\tau,\lambda),
\end{align*}
for any 
$k\in\RR$,
where $\Ii$ is some indicator function related to the intrinsic value of the option price, 
and~$\alpha$,~$\gamma$,~$\lambda$ strictly positive constants, depending on the level of the correlation.
The remainder~$\Rr$ decays to zero as $\tau$ tends to infinity.
If $t=0$ (spot smiles) we recover and extend the results in~\cite{FJ09}.

The paper is structured as follows. 
In Section~\ref{sec:largematreg} we introduce the different large-maturity regimes for the Heston model,
which will drive the asymptotic behaviour of forward option prices and forward implied volatilities.
In Section~\ref{sec:largematregfsaymp} we derive large-maturity forward-start option asymptotics in each regime and in Section~\ref{sec:largematregfsmaymp} we translate these results into forward smile asymptotics,
including extended SVI-type formulae (Section~\ref{sec:SVIExtended}).
Section~\ref{sec:nonsteepnumerics} provides numerics supporting the asymptotics developed in the paper and Section~\ref{sec:proofsnonsteep} gathers the proofs of the main results.

\textbf{Notations: } $\EE$ shall always denote expectation under a risk-neutral measure~$\PP$ given a priori.
We shall refer to the standard (as opposed to the forward) implied volatility as the spot smile and denote it $\sigma_\tau$.
The forward implied volatility will be denoted $\sigma_{t,\tau}$ as above
 and we let $\RR^*:=\RR\setminus\{0\}$ and $\RR^*_+:=(0,\infty)$.
For a sequence of sets $(\mathcal{D}_{ \varepsilon})_{\varepsilon>0}$ in $\RR$, 
we may, for convenience, use the notation $\lim_{\varepsilon \downarrow 0}\mathcal{D}_{ \varepsilon}$, 
by which we mean the following (whenever both sides are equal):
$
\liminf_{\varepsilon \downarrow 0}\mathcal{D}_{ \varepsilon}
:= \bigcup_{\varepsilon>0}\bigcap_{s\leq\varepsilon}\mathcal{D}_{s}
 = \bigcap_{\varepsilon>0}\bigcup_{s\leq\varepsilon}\mathcal{D}_{s}
=:\limsup_{\varepsilon\downarrow 0}\mathcal{D}_{\varepsilon}.
$
Finally, for a given set $A\subset\RR$, we let $A^o$ and $\overline{A}$ denote its interior and closure (in $\RR$),
 $\Re(z)$ and $\Im(z)$ the real and imaginary parts of a complex number~$z$,
and $\sgn(x) = 1$ if $x\geq 0$ and $-1$ otherwise.

\section{Large-maturity regimes}\label{sec:largematreg}

In this section we introduce the large-maturity regimes that will be used throughout the paper.
Each regime is determined by the Heston correlation and yields fundamentally different asymptotic behaviours for large-maturity forward-start options and the corresponding forward smile. 
This is due to the distinct behaviour of the moment explosions of the forward price process $(X_{\tau}^{(t)})_{\tau>0}$ in each regime . 
In the Heston model, the (log) stock price process is the unique strong solution to the following SDEs:
\begin{equation}\label{eq:Heston}
\begin{array}{rll}
\D X_t & = \displaystyle -\frac{1}{2}V_t\D t+ \sqrt{V_t}\D W_t, \quad & X_0=0,\\
\D V_t & = \kappa\left(\theta-V_t\right)\D t+\xi\sqrt{V_t}\D Z_t, \quad & V_0=v>0,\\
\D\left\langle W,Z\right\rangle_t & = \rho \D t,
\end{array}
\end{equation}
with $\kappa>0$, $\xi>0$, $\theta>0$ and $|\rho|<1$ and $(W_t)_{t\geq0}$ and $(Z_t)_{t\geq0}$ are two standard Brownian motions. 
We also introduce the notation $\mu:=2\kappa\theta/\xi^2$.
The Feller SDE for the variance process has a unique strong solution 
by the Yamada-Watanabe conditions~\cite[Proposition 2.13, page 291]{KS97}). 
The $X$ process is a stochastic integral of~$V$ and is therefore well-defined.  
The Feller condition, $2\kappa\theta\geq\xi^2$ (or $\mu\geq 1$), ensures that the origin is unattainable. 
Otherwise the origin is regular (hence attainable) and strongly reflecting
(see~\cite[Chapter 15]{KT81}). 
We however do not require the Feller condition in our analysis since we work with the forward moment generating function (mgf) of $X$.
Define the real numbers $\rho_-$ and $\rho_+$ by
\begin{equation}\label{eq:RhoPM}
\rho_{\pm} := \frac{\E^{-2 \kappa  t} \left(\xi(\E^{2\kappa t}-1) \pm(\E^{\kappa  t}+1) \sqrt{16 \kappa ^2 \E^{2 \kappa  t}+\xi ^2 (1-\E^{\kappa  t})^2}\right)}{8 \kappa },
\end{equation}
and note that $-1\leq \rho_-<0<\rho_+$ with $\rho_\pm=\pm 1$ if and only if $t=0$.
We now define the large-maturity regimes:
\begin{equation}\label{eq:Regimes}
\left.
\begin{array}{llr}
\Rf_1:  & \textit{Good correlation regime:} & \rho_{-}\leq\rho\leq\min(\rho_{+},\kappa/\xi); \\
\Rf_2:  & \textit{Asymmetric negative correlation regime:} & -1<\rho<\rho_{-}\text{ and }t>0;\\
\Rf_3:  & \textit{Asymmetric positive correlation regime:} & \rho_{+}<\rho<1\text{ and }t>0;\\
& \Rf_{3a}:  & \rho\leq \kappa/\xi;\\
& \Rf_{3b}:  & \rho>\kappa/\xi;\\
\Rf_4:  & \textit{Large correlation regime:} & 
\kappa/\xi<\rho\leq\min(\rho_{+},1).
\end{array}
\right.
\end{equation}
In the standard case $t=0$, $\Rf_1$ corresponds to $\kappa\geq \rho\xi$ and $\Rf_4$ is its complement.
We now define the following quantities:
\begin{equation}\label{eq:DefUpmU*pm}
u_{\pm}:=\frac{\xi-2\kappa\rho\pm\eta}{2\xi(1-\rho^2)}
\qquad \text{and}\qquad 
u_{\pm}^{*}:=\frac{\psi\pm\nu}{2 \xi (\E^{\kappa  t}-1)},
\end{equation}
with
\begin{equation}\label{eq:etanu}
\eta:=\sqrt{\xi^2(1-\rho^2)+(2\kappa-\rho\xi)^2},\qquad
\nu:=\sqrt{\psi^2-16\kappa^2\E^{\kappa t}}
\qquad \text{and}\qquad 
\psi:=\xi (\E^{\kappa  t}-1)-4\kappa\rho\E^{\kappa  t},
\end{equation}
as well as the interval $\mathcal{D}_\infty\subset\RR$ by
\begin{table}[h]\label{eq:DInfinityLargeMaturity}
  \begin{tabular}{| c || c | c | c | c | c | c |}
    \hline
&     $\Rf_1$ & $\Rf_2$ & $\Rf_{3a}$ & $\Rf_{3b}$ & $\Rf_4$\\ \hline
$\mathcal{D}_\infty$ & $[u_{-},u_{+}]$ & $[u_{-},u^*_{+})$ & 
$(u^*_{-},u_{+}]$ & $(u^*_{-},1]$ & $(u_{-},1]$
\\
\hline
  \end{tabular}
\end{table}

Note that for $t>0$,  $u_{+}>u_{+}^*>1$ if $\rho\leq\rho_{-}$
and $u_{-}<u_{-}^*<0$ if $\rho\geq\rho_{+}$
Also $\nu$ defined in~\eqref{eq:etanu} is a well-defined real number for all $\rho\in [-1,\rho_-]\cup[\rho_+,1]$.
Furthermore we always have $u_{-}<0$ and $u_{+}\geq 1$ with $u_{+}=1$ if and only if $\rho=\kappa/\xi$.
We define the real-valued functions $V$ and $H$ from $\mathcal{D}_{\infty}$ to $\RR$ by 
\begin{equation}\label{eq:VandH}
V(u)  := \frac{\mu}{2}\left(\kappa-\rho\xi u-d(u)\right)
\qquad\text{and}\qquad
H(u) :=
\frac{V(u)v \E^{-\kappa t}}{\kappa\theta -2 \beta _t V(u)}
-\mu\log\left(\frac{\kappa\theta-2 \beta _t V(u)}{\kappa\theta \left(1-\gamma \left(u\right)\right)}\right),
\end{equation}
with $d$, $\beta_t$ and $\gamma$ defined in~\eqref{eq:DGammaBeta}.
It is clear (see also~\cite{FJ09} and~\cite{JM12}) that the function~$V$
is infinitely differentiable, strictly convex and essentially smooth on the open interval $\left(u_{-},u_{+}\right)$ and that $V(0)=0$.
Furthermore $V(1)=0$ if and only if $\rho\leq\kappa/\xi$.
For any $k\in\RR$ the (saddlepoint) equation $V'(u^*(k))=k$ has a unique solution~$u^*(k)\in(u_-,u_+)$:
\begin{align}\label{eq:u*}
u^*(k):=\frac{\xi-2\kappa\rho+\left(\kappa\theta\rho+k\xi\right)\eta\left(k^2\xi^2+2k\kappa\theta\rho\xi+\kappa^2\theta^2\right)^{-1/2}}{2\xi \left(1-\rho^2\right)}.
\end{align}
Further let $V^*:\RR\to\RR_+$ denote the Fenchel-Legendre transform of $V$:
\begin{equation}\label{eq:VStarDefinition}
V^*(k):=\sup_{u\in\mathcal{D}_{\infty}}\left\{uk-V(u)\right\},
\qquad\text{for all }k\in\RR.
\end{equation}

The following lemma characterises $V^*$ and can be proved using straightforward calculus.
As we will see in Section~\ref{sec:LDP}, the function $V^*$ can be interpreted as a large deviations rate function for our problem.

\begin{lemma}\label{lemma:V*Characterisation}
Define the function $W(k,u)\equiv u k-V(u)$ for any $(k,u)\in\RR\times [u_-,u_+]$.
Then
\begin{itemize}
\item $\Rf_1$:
$V^*(k) \equiv W(k,u^*(k))$ on $\RR$;
\item $\Rf_2$:
$V^*(k) \equiv W(k,u^*(k))$ on $(-\infty, V'(u_+^*)]$ and 
$V^*(k) \equiv W(k,u^*_+)$ on $(V'(u^*_{+}), +\infty)$;
\item $\Rf_{3a}$:
$V^*(k) \equiv W(k,u^*_-)$ on $(-\infty, V'(u^*_{-}))$ and 
$V^*(k) \equiv W(k,u^*(k))$ on $[ V'(u_-^*), +\infty)$;
\item $\Rf_{3b}$:
$$
V^*(k) \equiv
\left\{
\begin{array}{ll}
W(k,u^*_-), \quad & \text{on } (-\infty, V'(u^*_-)),\\
W(k,u^*(k)), \quad & \text{on } [V'(u^*_-), V'(1)],\\
W(k,1), \quad & \text{on } (V'(1), +\infty);
\end{array}
\right.
$$
\item $\Rf_4$:
$V^*(k) \equiv W(k,u^*(k))$ on $(-\infty, V'(1)]$ and 
$V^*(k) \equiv W(k,1)$ on $(V'(1),+\infty)$.
\end{itemize}
\end{lemma}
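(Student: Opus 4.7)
Since $V$ is strictly convex and infinitely differentiable on $(u_-,u_+)$, the map $u\mapsto W(k,u)=uk-V(u)$ is strictly concave and $C^{\infty}$ there, with derivative $k-V'(u)$ vanishing at a unique interior point. Essential smoothness of $V$ (so that $V'(u)\to\pm\infty$ as $u\to u_\pm$) combined with the strict monotonicity of $V'$ guarantees that the saddlepoint $u^*(k)\in(u_-,u_+)$ exists for every $k\in\RR$, and its explicit form is~\eqref{eq:u*}. By strict concavity, for any (possibly half-open) subinterval $I\subseteq[u_-,u_+]$ with endpoints $a\leq b$,
\[
\sup_{u\in I}W(k,u) =
\begin{cases}
W(k,u^*(k)), & \text{if } a\leq u^*(k)\leq b,\\
W(k,a), & \text{if } u^*(k)<a,\\
W(k,b), & \text{if } u^*(k)>b,
\end{cases}
\]
where continuity of $W(k,\cdot)$ on $[u_-,u_+]$ ensures that, on a half-open interval, the supremum still equals $W(k,\cdot)$ evaluated at the excluded endpoint.

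The proof of the lemma then reduces to comparing $u^*(k)$ with the endpoints of $\mathcal{D}_\infty$ in each regime, and since $V'$ is strictly increasing the comparison $u^*(k)\lessgtr c$ translates into $k\lessgtr V'(c)$. In $\Rf_1$, $\mathcal{D}_\infty=[u_-,u_+]$ and the essential-smoothness boundaries $u_\pm$ never bind, so $V^*(k)\equiv W(k,u^*(k))$ on $\RR$. In $\Rf_2$ only the right endpoint $u_+^*<u_+$ is finite and can bind, yielding the threshold $V'(u_+^*)$; symmetrically in $\Rf_{3a}$ only $u_-^*>u_-$ can bind, with threshold $V'(u_-^*)$. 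In $\Rf_{3b}$ both $u_-^*$ and $u=1$ are finite active endpoints, so three pieces appear with thresholds $V'(u_-^*)$ and $V'(1)$; in $\Rf_4$ only $u=1$ binds, with threshold $V'(1)$.

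The mild technical checks are that $V'(u_\pm^*)$ and $V'(1)$ are indeed finite in the relevant regimes (which follows from the strict inclusions $u_\pm^*\in(u_-,u_+)$ and $1\in(u_-,u_+]$ noted immediately after~\eqref{eq:DefUpmU*pm}), that the adjacent case formulas agree at threshold values of $k$ (so that $V^*$ is continuous across the pieces, using $u^*(V'(c))=c$), and that in the half-open cases the supremum value coincides with $W(k,\cdot)$ at the excluded endpoint by continuity of $W(k,\cdot)$. None of these raises any genuine difficulty; the only real work, as the authors indicate by the phrase \emph{straightforward calculus}, is the careful bookkeeping of the five regimes, guided by the geometric picture above.
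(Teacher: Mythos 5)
Your proposal is correct and fills in exactly the "straightforward calculus" argument the paper alludes to without spelling out: strict concavity of $u\mapsto W(k,u)$, essential smoothness of $V$ guaranteeing an interior saddlepoint $u^*(k)\in(u_-,u_+)$, the location of the supremum over a (possibly half-open) subinterval relative to $u^*(k)$, and the translation of endpoint comparisons into thresholds on $k$ via strict monotonicity of $V'$. This is the same approach the paper intends, so nothing more needs to be said.
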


\section{Forward-start option asymptotics}\label{sec:largematregfsaymp}

In order to specify the forward-start option asymptotics we need to introduce some functions and constants.
As outlined in Theorem~\ref{theorem:largematasympcalls}, 
each of them is defined in a specific regime and strike region where it is well defined and real valued. 
In the formulae below, $\gamma$, $\beta_t$ are defined in~\eqref{eq:DGammaBeta}, $u^*_{\pm}$ in~\eqref{eq:DefUpmU*pm} and $V$ in~\eqref{eq:VandH}. 
\begin{equation}\label{eq:a}
\left\{
\begin{array}{rlrl}
a_1^{\pm}(k) & := \displaystyle \mp \frac{2|k-V'(u_{\pm}^*)|}{\zeta^2_{\pm}(k)},
& a_2^{\pm}(k) & := \displaystyle \frac{\mu\E^{-\kappa t}}{16\beta _t^2}
\frac{\xi ^2 v V''(u_{\pm}^*)-8 \beta _t^2 \E^{\kappa  t} V'(u_{\pm}^*) \left(k-V'(u_{\pm}^*)\right)}{V'(u_{\pm}^*)\left(k-V'(u_{\pm}^*)\right)^2}, \\
\widetilde{a}_1^{\pm} & := \displaystyle \mp\left|\frac{\E^{-\kappa t}\kappa\theta v}{4 V'(u_{\pm}^*)V''(u_{\pm}^*)\beta_t^2}\right|^{1/3},
& \widetilde{a}_2^{\pm}  & :=\displaystyle -\frac{(\kappa\theta\E^{-\kappa t})^{2/3}}{12 \xi ^2 v^{1/3}\beta _t^{4/3} } \frac{16 V'(u_{\pm}^*) V''(u_{\pm}^*) \beta _t^2 \E^{\kappa  t}+\xi ^2 v V'''(u_{\pm}^*)}{2^{1/3} |V'(u_{\pm}^*)|^{2/3} V''(u_{\pm}^*)^{5/3}},
\end{array}
\right.
\end{equation}
where
\begin{equation}\label{eq:variance}
\zeta^2_{\pm}(k)
 := \displaystyle 4 \beta _t \left(\frac{V'(u_{\pm}^*)(k-V'(u_{\pm}^*))^3}{\kappa\theta v \E^{-\kappa t}}\right)^{1/2}; 
\end{equation}
\begin{equation}\label{eq:e}
\left\{
\begin{array}{rlrl}
e_0^{\pm}(k) & := \displaystyle -2\beta_t a_1^{\pm}(k)V'(u_{\pm}^*),
&  e_1^{\pm}(k) & :=-\beta_t \left[V''(u_{\pm}^*)a_1^{\pm}(k)^2+2V'(u_{\pm}^*) a_2^{\pm}(k)\right], \\
\widetilde{e}_0^{\pm} & := \displaystyle  -2\beta_t \widetilde{a}_1^{\pm}V'(u_{\pm}^*),
& \widetilde{e}_1^{\pm} & := -\beta_t \left[V''(u_{\pm}^*)(\widetilde{a}_1^{\pm})^2+2V'(u_{\pm}^*) \widetilde{a}_2^{\pm}\right], 
\end{array}
\right.
\end{equation}
\begin{equation}\label{eq:c0}
\left\{
\begin{array}{ll}
c_0^{\pm}(k) := \displaystyle -2a_1^{\pm}(k)\left(k-V'(u_{\pm}^*)\right),
\quad 
 c_2^{\pm}(k) :=\displaystyle \left(\frac{\kappa\theta\left(1-\gamma(u_{\pm}^*) \right) }{e_0^{\pm}(k)}\right)^{\mu}, & \\
c_1^{\pm}(k) := \displaystyle   v \E^{-\kappa  t} \left(\frac{a_1^{\pm}(k) V'(u_{\pm}^*)}{e_0^{\pm}(k)}-\frac{\kappa\theta e_1^{\pm}(k)}{2 e_0^{\pm}(k)^2 \beta _t}\right)-a_2^{\pm}(k) \left(k-V'(u_{\pm}^*)\right)+\frac{1}{2}
   a_1^{\pm}(k)^2 V''(u_{\pm}^*),& 
\end{array}
\right.
\end{equation}

\begin{equation}\label{eq:c0bound}
\left\{
\begin{array}{ll}
\widetilde{c}_0^{\pm} := \displaystyle \frac{3}{2} (\widetilde{a}_1^{\pm})^2 V''(u_{\pm}^*), 
\quad 
 \widetilde{c}_2^{\pm} := \left(\frac{\kappa\theta\left(1-\gamma(u_{\pm}^*) \right)}{\widetilde{e}_0^{\pm}}\right)^{\mu},
\quad
 g_0 := \frac{v \E^{-\kappa t} V(1)}{\kappa\theta -2 \beta _t V(1)}, & \\
\widetilde{c}_1^{\pm} := \displaystyle   v \E^{-\kappa  t} \left(\frac{\widetilde{a}_1^{\pm} V'(u_{\pm}^*)}{\widetilde{e}_0^{\pm}}-\frac{\kappa\theta\widetilde{e}_1^{\pm}}{2 (\widetilde{e}_0^{\pm})^2 \beta _t}\right)+\widetilde{a}_1^{\pm}\widetilde{a}_2^{\pm} V''(u_{\pm}^*) +\frac{(\widetilde{a}_1^{\pm})^3 V'''(u_{\pm}^*)}{6},& 
\end{array}
\right.
\end{equation}
\begin{equation}\label{eq:phi0largetime}
\phi_0(k)
 := \displaystyle \frac{1}{\sqrt{2\pi V''(u^*(k))}} 
\left\{
\begin{array}{ll}
\displaystyle  \frac{\exp\left({H(u^*(k))}\right)}{u^*(k) (u^*(k)-1)},
& \text{if }k\in\RR\setminus\{V'(0),V'(1)\},\\
\displaystyle  \left(-1-\sgn(k)\left(\frac{V'''(u^*(k))}{6 V''(u^*(k))}-H'(u^*(k))\right)\right),
& \text{if }k\in\{V'(0),V'(1)\}.
\end{array}
\right.
\end{equation}
\begin{equation}
\left\{
\begin{array}{llrl}
\phi_\pm (k) & := \displaystyle  \frac{c_2^\pm(k)\E^{c_1^\pm(k)}}{\zeta_{\pm}(k)u^*_\pm(u^*_\pm-1)\sqrt{2\pi}},
& \widetilde{\phi}_\pm & := \displaystyle \frac{\tilde{c}_2^\pm\E^{\tilde{c}_1^\pm}}{u^*_\pm(u^*_\pm-1)\sqrt{6\pi V''(u^*_{\pm})}},\\
\phi_2 (k) & := \displaystyle \frac{-\E^{g_0}}{\Gamma(1+\mu)}
\left(\frac{2\mu (\kappa-\rho\xi)^2 (k-V'(1))}{\kappa\theta-2\beta_t V(1)}\right)^{\mu},
& \phi_1 & := \displaystyle \frac{-\E^{g_0}}{2\Gamma (1+\mu/2)}
\left(\frac{\mu(\kappa-\rho\xi)^2 \sqrt{2 V''(1) }}{\kappa\theta-2\beta_t V(1)}\right)^{\mu},
\end{array}
\right.
\end{equation}

Since $u^*_-<0$ and $u^*_+>1$, we always have $V'(u^*_+)>0$ and $V'(u^*_-)<0$.
Furthermore, $V''(u^*_{\pm})>0$ and one can show that $\gamma(u^*_{\pm})\neq1$;
therefore all the functions and constants in ~\eqref{eq:a},~\eqref{eq:variance},~\eqref{eq:e},~\eqref{eq:c0} and ~\eqref{eq:c0bound} are well-defined and real-valued.
$\phi_0$ is well-defined since $V''(u^*(k))>0$ and $\phi_2$ and the constant $\phi_1$ are well-defined since $\kappa\theta-2\beta_t V(1)>0$.
Finally define the following combinations and the function $\Ii:\RR\times\RR_{+}^*\times\RR^3\to\RR$ :
\begin{equation}
\begin{array}{llllll}
\Hh_0: 
 & \alpha = \frac{1}{2}, & \beta = 1, & \gamma = 0, & \phi \equiv \phi_0, & \psi \equiv 0,
 \\
\widetilde{\Hh}_\pm: 
 & \alpha = \frac{\mu}{3}-\frac{1}{2}, & \beta = \frac{1}{3}, & \gamma = \frac{1}{3}, 
 & \phi \equiv \widetilde{\phi}_{\pm}, & \psi \equiv \tilde{c}_0^{\pm},
 \\
\Hh_\pm: 
 & \alpha = \frac{\mu}{2}-\frac{3}{4}, & \beta = \frac{1}{2}, & \gamma = \frac{1}{2}, & \phi \equiv \phi_{\pm}, 
 & \psi \equiv c_0^{\pm}, \\
\Hh_1: 
 & \alpha = -\frac{\mu}{2}, & \beta = \frac{1}{2}, & \gamma = 0, & \phi \equiv \phi_{1}, & \psi \equiv 0, 
 \\
\Hh_2: 
 & \alpha = -\mu, & \beta = 1, & \gamma = 0, & \phi \equiv \phi_{2}, & \psi \equiv 0, 
\end{array}
\end{equation}
\begin{equation}\label{eq:intrinsiclargetimenonsteep}
\Ii(k,\tau,a,b,c)
:= \left(1-\E^{k \tau}\right)\ind_{\{k<a\}}+\ind_{\{a<k<b\}}
+c\ind_{\{b\leq k\}}
+\frac{1-c}{2}\ind_{\{k=b\}}
+\left(1-\frac{1}{2}\E^{k \tau}\right)\ind_{\{k=a\}}.
\end{equation}

We are now in a position to state the main result of the paper, 
namely an asymptotic expansion for forward-start option prices in all regimes for all (log) strikes on the real line.
The proof is obtained using Lemma~\ref{lem:nonsteeplem} in conjunction with the asymptotics in Lemmas~\ref{lemma:V*Asymptotics},~\ref{lemma:FourierAsymptotics},~\ref{lem:v0v1F} and~\ref{lem:v0v1}.

\begin{theorem}\label{theorem:largematasympcalls}
The following expansion holds for forward-start call options for all $k\in\RR$ as $\tau$ tends to infinity:
$$
\EE\left(\E^{X^{(t)}_{\tau}}-\E^{k\tau}\right)^+
=
\Ii\left(k,\tau,V'(0),V'(1),\ind_{\{\kappa<\rho \xi\}}\right) + 
\frac{\phi(k,t)}{\tau^{\alpha}}\E^{-\tau\left(V ^*(k)-k \right)+\psi(k,t)\tau^\gamma}
\left(1+\mathcal{O}\left(\tau^{-\beta}\right)\right),
$$
where the functions $\phi$, $\psi$ and the constants $\alpha$, $\beta$ and $\gamma$ are given by the following combinations\footnote{whenever $\Hh_0$ is in force, the case $k=V'(a)$ 
is excluded if $v = \theta \Upsilon(a)$, with $\Upsilon$ defined in~\eqref{eq:Upsilon}, for $a\in\{0,1\}$.}:
\begin{itemize}
\item $\Rf_1$: $\Hh_0$ for $k\in\RR$;

\item $\Rf_2$:
$\Hh_0$ for $k\in(-\infty, V'(u_+^*))$;
$\widetilde{\Hh}_+$ for $k=V'(u_+^*)$;
$\Hh_+$ for $k\in(V'(u_+^*),+\infty)$;

\item $\Rf_{3a}$:
$\Hh_-$ for $k\in(-\infty, V'(u_-^*))$;
$\widetilde{\Hh}_-$ for $k=V'(u_-^*)$;
$\Hh_0$ for $k\in(V'(u_-^*),+\infty)$;

\item $\Rf_{3b}$:
$\Hh_-$ for $k\in(-\infty, V'(u_-^*))$;
$\widetilde{\Hh}_-$ for $k=V'(u_-^*)$;
$\Hh_0$ for $k\in(V'(u_-^*),V'(1))$;
$\Hh_1$ at $k=V'(1)$;
$\Hh_2$ for $k\in(V'(1),+\infty)$;

\item $\Rf_4$:
$\Hh_0$ for $k\in(-\infty, V'(1))$;
$\Hh_1$ for $k=V'(1)$;
$\Hh_2$ for $k\in(V'(1),+\infty)$;
\end{itemize}
\end{theorem}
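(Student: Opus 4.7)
The plan is to start from a Parseval / inverse-Laplace representation of the forward-start call price, writing $\EE(\E^{X^{(t)}_{\tau}}-\E^{k\tau})^+$ as the sum of an intrinsic-value term $\Ii$ and a contour integral against $\E^{\tau L_\tau(u)-ku\tau}/(u(u-1))$, where $L_\tau(u):=\tau^{-1}\log\EE[\E^{uX^{(t)}_{\tau}}]$ is the renormalised forward cumulant generating function. In the Heston model this admits the decomposition $L_\tau(u)=V(u)+\tau^{-1}H(u)+o(\tau^{-1})$ for $u\in\mathcal{D}_\infty^{o}$, with $V,H$ as in~\eqref{eq:VandH}. The piecewise term $\Ii$ collects the residues at $u=0$ and $u=1$ crossed when the contour is shifted from a default vertical line to the optimal one $\Re(u)=u^{*}(k)$; the indicator $\ind_{\{\kappa<\rho \xi\}}$ appears because in $\Rf_{3b}$ and $\Rf_{4}$ the martingale condition $V(1)=0$ fails, and the contribution of the pole at $u=1$ changes accordingly.

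The core is then a sharp large-deviations / saddlepoint analysis of the resulting integral, codified in Lemma~\ref{lem:nonsteeplem}; Lemmas~\ref{lemma:V*Asymptotics}, \ref{lemma:FourierAsymptotics}, \ref{lem:v0v1F} and~\ref{lem:v0v1} supply the precise local expansions of $V^{*}$, $V$, $H$ and the Fourier kernel needed as inputs. The universal exponential decay is $\E^{-\tau(V^{*}(k)-k)}$; the algebraic order $\tau^{-\alpha}$, the sub-exponential rate $\tau^{\gamma}$ and the prefactor $\phi$ are dictated by the location of $u^{*}(k)$ relative to the endpoints of $\mathcal{D}_{\infty}$ and the martingale pole at $u=1$. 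When the saddlepoint lies strictly inside $\mathcal{D}_{\infty}$ and away from $\{0,1\}$ (regime $\Rf_{1}$ throughout, and the interior strike intervals of the other regimes), $V$ is essentially smooth at $u^{*}(k)$, $H$ is finite there, and a classical Laplace expansion produces the combination $\Hh_{0}$ with prefactor $\phi_{0}$ built from $V''(u^{*}(k))$, $H(u^{*}(k))$ and the factor $(u^{*}(k)(u^{*}(k)-1))^{-1}$.

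The non-classical cases arise when the saddlepoint is pushed to or beyond a singularity of $H$. For strikes $k$ beyond $V'(u^{*}_{\pm})$ in $\Rf_{2}$ or $\Rf_{3}$ the saddlepoint equation $V'(u)=k$ has no solution in $\overline{\mathcal{D}_{\infty}^{o}}$, so the contour must be anchored at $u^{*}_{\pm}$, where $H$ has an algebraic blow-up of order $\mu$ coming from the term $\log(\kappa\theta-2\beta_{t}V(u))$ in~\eqref{eq:VandH}. A Watson-lemma change of variables $u=u^{*}_{\pm}+a_{1}^{\pm}(k)\tau^{-1/2}+a_{2}^{\pm}(k)\tau^{-1}+\cdots$ isolates the non-standard exponent $\E^{c_{0}^{\pm}(k)\sqrt{\tau}}$ and the algebraic order $\tau^{\mu/2-3/4}$, giving $\Hh_{\pm}$ with $\phi_{\pm}$ as in~\eqref{eq:c0}. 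At the coalescence point $k=V'(u^{*}_{\pm})$ the saddlepoint merges with the singularity; one needs the cubic Taylor term $V'''(u^{*}_{\pm})$ together with the algebraic blow-up of $H$, and the natural rescaling $\tau^{-1/3}$ produces an Airy-type integral with the $\tau^{\mu/3-1/2}$ prefactor and $\tau^{1/3}$ sub-exponential rate of $\widetilde{\Hh}_{\pm}$. In $\Rf_{3b}$ and $\Rf_{4}$ with $k\ge V'(1)$ the same singular mechanism applies at the martingale pole $u=1$, yielding $\Hh_{2}$ in the interior and $\Hh_{1}$ at the transition $k=V'(1)$.

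The main obstacle is the non-steep saddlepoint step itself. Classical Bahadur--Rao / Chaganty--Sethuraman sharp large-deviation results assume essential smoothness of the limiting cumulant generating function, which fails here because $V$ stays finite at $u^{*}_{\pm}$ while $H$ blows up algebraically, so the $\tau^{-1}H$ correction is no longer a sub-leading perturbation of $\tau V$ but interacts with the saddle geometry. The remedy is to carry the local expansion of $H$ to the same order as that of $V$, extract the exact blow-up exponent $\mu$ and leading constants in~\eqref{eq:a}--\eqref{eq:c0bound}, and then uniformly control the tails of the resulting Gaussian (respectively Airy) integrals so that dominated convergence delivers the claimed prefactors $\phi_{\pm}$, $\widetilde{\phi}_{\pm}$, $\phi_{1}$, $\phi_{2}$ up to the advertised error $\mathcal{O}(\tau^{-\beta})$. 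Once Lemma~\ref{lem:nonsteeplem} is established at this level of generality, the theorem follows by matching constants in each strike sub-interval listed in the statement, a routine but case-heavy verification using Lemmas~\ref{lemma:V*Asymptotics} and~\ref{lemma:FourierAsymptotics} at the relevant saddle or boundary point.
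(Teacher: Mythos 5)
Your outline captures the right overall shape of the argument---a Fourier/contour representation followed by sharp large-deviations / saddlepoint asymptotics driven by the two-term expansion $\Lambda^{(t)}_\tau = V + \tau^{-1}H + O(\E^{-d\tau})$---but it misses the one technical device that actually makes the argument work, and this gap would stall you at precisely the steps the paper was written to handle.

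The decisive idea in the paper is \emph{not} to anchor the contour (or the Esscher tilt) at the limiting boundary point $u^*_\pm$, $1$, or $u^*(k)$ and then Taylor expand. It is to tilt by the \emph{time-dependent} saddle $u^*_\tau(k)$, the unique solution of $V'(u)+\tau^{-1}H'(u)=k$ (Assumption~\ref{assump:ustartaueq}, Lemma~\ref{lemma:u^*tau} and Lemma~\ref{lem:v0v1ustarexist}), together with the change of measure~\eqref{eq:MeasureChange}. The key observation, stated explicitly at the start of Section~\ref{sec:Genmethlargetime}, is that although the limiting lmgf $V$ is not steep at $u^*_\pm$ (or becomes linear beyond $1$), the pre-limit lmgf $\Lambda^{(t)}_\tau$ \emph{is} still steep on $\mathcal D^o_{t,\tau}$. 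Anchoring at $u^*_\pm$ itself, as you propose, runs into the trouble you yourself diagnosed: $H$ has an algebraic blow-up of order $\mu$ there, so along a vertical contour through $u^*_\pm$ the integrand is $\sim|y|^{-\mu}$ near $y=0$ and the Fourier integral diverges whenever $\mu\geq1$. Calling the ansatz $u^*_\pm + a_1^\pm\tau^{-1/2}+\cdots$ a ``Watson-lemma change of variables'' does not cure this; it is in fact the asymptotic expansion of the time-dependent saddle $u^*_\tau(k)$ (Lemma~\ref{lemma:a-dynamics}), and one needs the full mechanism (tilt at $u^*_\tau(k)$, then expand $\Phi_{\tau,k,\alpha}$, $D$, and $F$ separately via Lemmas~\ref{lemma:CharactNonSteepExp}, \ref{lemma:V*Asymptotics}, \ref{lemma:FourierAsymptotics}) to make the integral both convergent and tractable. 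The same device is what handles $k\in\{V'(0),V'(1)\}$ under $\Hh_0$ (Section~\ref{sec:V0V1limFT}), a case your sketch omits entirely: there the issue is not steepness but the kernel poles at $0$ and $1$, and again the $\tau$-dependent shift $u^*_\tau(k)\to a$ at rate $\tau^{-1}$ is what keeps the Fourier integral well defined (see Lemma~\ref{lem:v0v1F} and the remark following it).

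Two further corrections. First, at the transition strikes $k=V'(u^*_\pm)$ the paper does \emph{not} reduce to an Airy integral. The interaction is not a saddle-saddle coalescence of equal order; it is a saddle of $\tau V$ meeting an algebraic singularity of the $O(1)$ term $H$, and after tilting by $u^*_\tau(k)$, which sits at distance $\tau^{-1/3}$ from $u^*_\pm$ (from balancing $V''(u-u^*_\pm)\sim\tau^{-1}(u-u^*_\pm)^{-2}$), the rescaled variable $Z_{\tau,k,1/2}$ converges weakly to a plain Gaussian with variance $3V''(u^*_\pm)$ (Lemma~\ref{lemma:CharactNonSteepExp}(i)(b), Remark~\ref{rem:limitprops}(ii)); the unusual $\E^{\widetilde c_0^\pm\tau^{1/3}}$ factor is produced by the expansion of $D(\tau,k)$, not by an Airy kernel. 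Second, ``the same singular mechanism applies at $u=1$'' is misleading: in $\Rf_{3b},\Rf_4$ for $k\geq V'(1)$ the boundary $1$ is an interior point for $V$ but a pole of the Fourier kernel, and the limits in Remark~\ref{rem:limitprops}(iii)--(iv) are Gamma (or Gaussian-plus-Gamma), not Gaussian; the $\tau^{\mu}$ and $\tau^{\mu/2}$ prefactors and the constants $\phi_1,\phi_2$ come from the explicit Gamma-density calculus of Lemma~\ref{lem:Gammaasymplargecorrel}. So, while your plan is in the right family, you would need to replace the fixed-anchor idea with the time-dependent saddle and the time-dependent change of measure before any of the non-convex cases close.
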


In order to highlight the symmetries appearing in the asymptotics, 
we shall at times identify an interval with the corresponding regime and combination in force.
This slight abuse of notations should not however be harmful to the comprehension.

\begin{remark}\label{rem:largematfwdstartoption}\
\begin{enumerate}[(i)]
\item Under~$\Rf_1$, asymptotics for the large-maturity forward smile 
(for $k\in\mathbb{R}\setminus\{V'(0),V'(1)\}$)
have been derived in~\cite[Proposition 3.8]{JR12}.
\item For $t=0$, large-maturity asymptotics have been derived in~\cite{FJL11} under~$\Rf_1$
and partially in~\cite{JM12} under~$\Rf_4$.
\item 
All asymptotic expansions are given in closed form and can in principle be extended to arbitrary order.
\item When $\Hh_{\pm}$ and $\Hh_2$ are in force then
$V^*(k)-k$ is linear in $k$ as opposed to being strictly convex as in $\Hh_0$.
\item 
If $\rho\leq\kappa/\xi$ then $V^*(k)-k\geq0$ with equality if and only if $k=V'(1)$.
If $\rho>\kappa/\xi$ then  $V^*(k)-k\geq -V(1)>0$.
Since $\gamma\in[0,1)$, the leading order decay term is given by $\E^{-\tau\left(V ^*(k)-k\right)}$.
\item
Under $\Hh_2$ (which only occur when $\rho>\kappa/\xi$ for log-strikes strictly greater than $V'(1)$), 
forward-start call option prices decay to one as $\tau$ tends to infinity.
This is fundamentally different than the large-strike behaviour in other regimes and in the BSM model, 
where call option prices decay to zero.
This seemingly contradictory behaviour is explained as follows:
as the maturity increases there is a positive effect on the price by an increase in the time value of the option and a negative effect on the price by increasing the strike of the forward-start call option.
In standard regimes and for sufficiently large strikes the strike effect is more prominent than the time value effect in the large-maturity limit.
Here, because of the large correlation, this effect is opposite: 
as the asset price increases, the volatility tends to increase driving the asset price to potentially higher levels. 
This gamma or time value effect outweighs the increase in the strike of the option. 
\item
In~$\Rf_4$, the decay rate $V^*(k)-k$ has a very different behaviour: 
the minimum achieved at $V'(1)$ is not zero and  $V^*(k)-k$ is constant for $k\geq V'(1)$.
There is limited information in the leading-order behaviour 
and important distinctions must therefore occur in higher-order terms.
This is illustrated in Figures~\ref{fig:LargeCorrelRegime10years} and~\ref{fig:LargeCorrelRegime20years} where the first-order asymptotic is vastly superior to the leading order.
\item
It is important to note that  $u_{\pm}^*$ and $V^*$ depend on the forward-start date~$t$ through~\eqref{eq:DefUpmU*pm} 
and the regime choice. 
However, in the uncorrelated case $\rho=0$, $\Rf_1$ always applies and $V^*$ does not depend on~$t$.
The non-stationarity of the forward smile over the spot smile (at leading order) depends critically on how far the correlation is away from zero.
\end{enumerate}
\end{remark}

In order to translate these results into forward smile asymptotics (in the next section), 
we require a similar expansion for the Black-Scholes model, 
where the log stock price process satisfies
$\D X_t = -\frac{1}{2}\Sigma^2 \D t + \Sigma\D W_t$, with $\Sigma>0$. 
Define the functions $V^*_{\textrm{BS}}:\RR\times\RR^*_{+}\to\RR$ and $\phi_{\textrm{BS}}:\RR\times\RR^*_{+}\times\RR\to\RR$ by $V^*_{\textrm{BS}}(k,a):=
\left(k+a/2\right)^2/(2 a)$ and
$$
\phi_{\textrm{BS}}(k,a,b) \equiv
\frac{4a^{3/2}}{(4k^2-a^2)\sqrt{2\pi}}\exp\left(b\left(\frac{k^2}{2a^2}-\frac{1}{8}\right)\right)
\ind_{\{k\neq\pm a/2\}}
+\frac{b-2}{2\sqrt{2 a \pi}}
\ind_{\{k=\pm a/2\}},
$$
so that the following holds (see~\cite[Corollary 2.11]{JR12}):
\begin{lemma}\label{Cor:BSOptionLargeTime}
Let $a>0$, $b\in\RR$ and set $\Sigma^2:=a+b/\tau$ for $\tau$ large enough so that $a+b/\tau>0$.
In the BSM model the following expansion then holds for any $k\in\RR$ as $\tau$ tends to infinity
(the function~$\Ii$ is defined in\eqref{eq:intrinsiclargetimenonsteep}):
$$
\EE\left(\E^{X^{(t)}_{\tau}}-\E^{k\tau}\right)^+
=
\Ii\left(k,\tau,-\frac{a}{2},\frac{a}{2},0\right)
+
\frac{\phi_{\textrm{BS}}(k,a,b) }{\tau^{1/2}}\E^{-\tau\left(V^*_{\textrm{BS}}(k,a)-k\right)}\left(1+\mathcal{O}(\tau^{-1})\right).
$$
\end{lemma}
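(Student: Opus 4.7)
The plan is to exploit stationarity: in the BSM model the Brownian motion has stationary independent increments, so $X^{(t)}_\tau$ has the same law as $X_\tau$, and the forward-start call reduces to the spot call
\[
\EE\bigl(\E^{X^{(t)}_\tau} - \E^{k\tau}\bigr)^+ = \Nn(d_+) - \E^{k\tau}\Nn(d_-),\qquad
d_\pm = -\frac{k\sqrt{\tau}}{\Sigma} \pm \frac{\Sigma\sqrt{\tau}}{2},\qquad \Sigma^2 = a + \tfrac{b}{\tau}.
\]
The expansion is then a direct corollary of~\cite[Corollary 2.11]{JR12}, which treats precisely this spot setting with an $\mathcal{O}(\tau^{-1})$-perturbed variance parameter.

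For a self-contained derivation I would split $\RR$ according to the limiting signs of $d_\pm$ as $\tau\to\infty$, i.e.\ according to whether $k < -a/2 = V_{\textrm{BS}}'(0)$, $k\in(-a/2, a/2)$, or $k > a/2 = V_{\textrm{BS}}'(1)$. In each open region, both $\Nn(d_\pm)$ tend to $0$ or $1$ exponentially fast, and Mills' ratio
\[
\Nn(-x) = \frac{\E^{-x^2/2}}{x\sqrt{2\pi}}\left(1 - \frac{1}{x^2} + \mathcal{O}(x^{-4})\right),\qquad x\to +\infty,
\]
gives sharp control. The key algebraic identity $d_+^2 - d_-^2 = -2k\tau$ collapses the two Gaussian tails into one via $\E^{k\tau}\E^{-d_-^2/2} = \E^{-d_+^2/2}$, and substituting $\Sigma^2 = a + b/\tau$ then Taylor-expanding yields
\[
\frac{d_+^2}{2} = \tau\bigl(V^*_{\textrm{BS}}(k,a) - k\bigr) - b\left(\frac{k^2}{2a^2} - \frac{1}{8}\right) + \mathcal{O}(\tau^{-1}),
\]
which produces both the factor $\E^{-\tau(V^*_{\textrm{BS}}(k,a)-k)}$ and the exponential $\exp(b(k^2/(2a^2) - 1/8))$ entering $\phi_{\textrm{BS}}$. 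The algebraic prefactor $4a^{3/2}/((4k^2 - a^2)\sqrt{2\pi})$ emerges from combining $1/|d_+| \pm 1/|d_-|$ via $|d_+| \pm |d_-| = \pm\Sigma\sqrt{\tau}$ and $d_+d_- = \tau(4k^2 - \Sigma^4)/(4\Sigma^2)$, while the leading non-vanishing pieces $0$, $1$, or $1 - \E^{k\tau}$ left over from $\Nn(d_\pm) \to \{0, 1\}$ assemble precisely into $\Ii(k,\tau,-a/2,a/2,0)$.

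The two boundary strikes $k = \pm a/2$ need separate handling, since one of $d_\pm$ there converges to $0$ at rate $\tau^{-1/2}$ rather than to $\pm\infty$. At $k = a/2$ for example one computes $d_+ = b/(2\sqrt{a}\sqrt{\tau}) + \mathcal{O}(\tau^{-3/2})$ and $d_- \sim -\sqrt{a\tau}$; the Taylor expansion $\Nn(d_+) = 1/2 + d_+/\sqrt{2\pi} + \mathcal{O}(d_+^3)$ combined with Mills applied to $d_-$ yields the coefficient $(b-2)/(2\sqrt{2\pi a})$ of the second branch of $\phi_{\textrm{BS}}$, the $1/2$ intrinsic term being supplied by $\Ii$, and the exponential rate $V^*_{\textrm{BS}}(a/2,a) - a/2$ indeed vanishing. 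The case $k = -a/2$ is symmetric, with decay rate $a/2$ and intrinsic piece $1 - \E^{k\tau}/2$.

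The hard part is purely bookkeeping: propagating the Taylor expansion of $d_\pm^2/2$ to the right order, verifying that the remainder is uniformly $\mathcal{O}(\tau^{-1})$ after all multiplications, and balancing two competing $\tau^{-1/2}$ contributions at the boundary $|k| = a/2$. Conceptually the proof is fully explicit, since every quantity involved is an elementary function of $k$, $a$, $b$ and $\tau$.
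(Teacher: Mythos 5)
Your proof is correct, and since the paper itself does not re-prove this lemma but merely cites~\cite[Corollary 2.11]{JR12}, your derivation — reducing to a spot call via stationarity, applying Mills' ratio, merging the two Gaussian tails via $d_+^2 - d_-^2 = -2k\tau$, and Taylor-expanding $d_+^2/2$ under $\Sigma^2 = a + b/\tau$, with separate handling of the turning points $k=\pm a/2$ where $\Nn(d_\pm)\approx 1/2 + d_\pm/\sqrt{2\pi}$ balances the Mills term from the other $d$ — is the natural route and almost certainly identical to the one in that reference. The key identities $d_+d_-=\tau(4k^2-\Sigma^4)/(4\Sigma^2)$ and $d_+-d_-=\Sigma\sqrt{\tau}$ do yield the prefactor $4a^{3/2}/((4k^2-a^2)\sqrt{2\pi})$ and, after the sign bookkeeping in the three open regions, the intrinsic part assembles exactly into $\Ii(k,\tau,-a/2,a/2,0)$, so the proposal is complete.
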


\subsection{Connection with large deviations}\label{sec:LDP}
Although clear from Theorem~\ref{theorem:largematasympcalls}, we have so far not mentioned
the notion of large deviations at all.
The leading-order decay of the option price as the maturity tends to infinity
gives rise to estimates for large-time probabilities;
more precisely, by formally differentiating both sides with respect to the log-strike, 
one can prove, following a completely analogous proof to~\cite[Corollary 3.3]{JR13}, that
$$
-\lim_{\tau\uparrow\infty}\tau^{-1}\log\PP\left(X^{(t)}_{\tau} \in B\right) = \inf_{z\in B} V^*(z),
$$
for any Borel subset~$B$ of the real line, namely that
$(X^{(t)}_{\tau}/\tau)_{\tau>0}$ satisfies a large deviations principle under~$\PP$ 
with speed~$\tau$ and good rate function $V^*$ as~$\tau$ tends to infinity.
We refer the reader to the excellent monograph~\cite{DZ93} for more details on large deviations.
The theorem actually states a much stronger result here since it provides higher-order estimates,
coined `sharp large deviations' in~\cite{BR02} (see also~\cite[Definition 1.1]{Bercu}).
Now, classical methods to prove large deviations, when the the moment generating function is known
rely on the G\"artner-Ellis theorem.
In mathematical finance, one can consult for instance~\cite{FJSmall},~\cite{FJ09} or~\cite{JMKR}
for the small-and large-time behaviour of stochastic volatility models,
and~\cite{Pham} for an insightful overview.
The G\"artner-Ellis theorem requires, in particular, the limiting logarithmic moment generating function~$V$
to be steep at the boundaries of its effective domain.
This is indeed the case in Regime $\Rr_1$, but fails to hold in other regimes.
The standard proof of this theorem (as detailed in~\cite[Chapter 2, Theorem 2.3.6]{DZ93})
clearly holds in the open intervals of the real line where the function $V$ is strictly convex,
encompassing basically all occurrences of $\Hh_0$.
The other cases, when~$V^*$ becomes linear, and the turning points $V'(0)$ and $V'(1)$,
however have to be handled with care and solved case by case.
Proving sharp large deviations essentially relies on finding a new probability measure under which
a rescaled version of the original process converges weakly to some random variable 
(often Gaussian, but not always);
in layman terms, under this new probability measure, the rare events / large deviations 
of the rescaled variable are not rare any longer.
More precisely, fix some log-moneyness~$k\in\RR$;
we determine a process $(Z_{\tau,k})_{\tau>0} := ((X^{(t)}_{\tau,k}-k\tau)/\tau^{\alpha})_{\tau>0}$,
and a probability measure $\QQ_{k,\tau,\alpha}$ via
$$
\frac{\D\QQ_{k,\tau,\alpha}}{\D\PP}:=\exp\left( u^*_{\tau}(k)X_{\tau}^{(t)}-\tau\Lambda^{(t)}_{\tau}(u^*_{\tau}(k))\right),
$$
where $u^*_{\tau}(k)$ is the unique solution to the equation 
$\partial_{u}\Lambda^{(t)}_{\tau}(u_{\tau}^*(k))=k$,
with~$\Lambda^{(t)}_{\tau}$ denoting the (rescaled) logarithmic moment generating function of~$X_\tau^{(t)}$
(See Section~\ref{sec:FwdLimits}).
The characteristic function $\Phi_{\tau,k,\alpha}(u):=\EE^{\QQ_{k,\tau,\alpha}}(\E^{\I u Z_{\tau,k}})$ has some expansion
as~$\tau$ tends to infinity.
Once this pair has been found, the final part of the proof is to express call prices (or probabilities)
as inverse Fourier transforms of the characteristic function multiplied by some kernel,
and to use the expansion of~$\Phi_{\tau,k,\alpha}(u)$ to determine the desired asymptotics.
The main technical issues, and where the different regimes come into play, 
arise in the properties of the asymptotic behaviour of~$\Phi_{\tau,k,\alpha}(u)$ 
and~$u^*_\tau(k)$ as~$\tau$ tends to infinity
(and on the value~$\alpha$ one has to choose).
More precise details about the main steps of the proofs 
are provided at the beginning of Section~\ref{sec:proofsnonsteep} 
and in Section~\ref{sec:Genmethlargetime}.

Sharp large deviations, or more generally speaking, probabilistic asymptotic expansions,
\`a la Bahadur-Rao~\cite{Bahadur}, can also be proved via other routes.
In particular, the framework developed by Benarous~\cite{Benarous} 
(and applied to the financial context in~\cite{DFJV1, DFJV}) is an extremely powerful
tool to handle Laplace methods on Wiener space and heat kernel expansions.
However, the singularity of the square root diffusion (in the SDE~\eqref{eq:Heston} for the variance) 
at the origin falls outside the scope of this theory.
Incidentally, Conforti, Deuschel and De Marco~\cite{CDDM} recently
proved a (sample path) large deviations principle for the square root diffusion, 
giving hope for an alternative proof to ours. 
As explained in~\cite{JR13, JR15}, the forward-start framework on the couple~$(X_\tau,V_\tau)_{\tau\geq 0}$,
solution to~\eqref{eq:Heston}, starting at $(0,v)$,
can be seen as a standard option pricing problem on the forward couple
$(X_{\tau}^{(t)}, V_{\tau}^{(t)})_{\tau\geq 0}$, solution to the same stochastic differential equation~\eqref{eq:Heston},
albeit starting at the point $(0, V_t)$, namely with random initial variance.
This additional layer of complexity arising from starting the SDE 
at a random starting point makes the application of the Benarous framework 
as well as the Conforti-Deuschel-De Marco result,  a fascinating, yet challenging, exercise to consider.

\section{Forward smile asymptotics}~\label{sec:largematregfsmaymp}

We now translate the forward-start option asymptotics obtained above into asymptotics of the forward implied volatility smile. 
Let us first define the function
$v_0^{\infty}:\RR\times\RR_{+}\to\RR$ by
\begin{equation}\label{eq:v0nonsteep}
v_0^{\infty}(k,t) := 2 \left(2 V^*(k) -k+2 \mathcal{Z}(k) \sqrt{V^*(k)(V^*(k)-k)}\right),\quad\text{for all }k\in\RR, t\in\RR_+
\end{equation}
with $\mathcal{Z}:\RR\to\{-1,1\}$ defined by
$
\mathcal{Z}(k) \equiv 
\ind_{\{k\in[V'(0),V'(1)]\}} +\sgn(\rho\xi-\kappa)\ind_{\{k>V'(1)\}}-\ind_{\{k<V'(0)\}}
$
and $V^*$ given in Lemma~\ref{lemma:V*Characterisation}.
Define the following combinations:
$$
\begin{array}{llllll}
\Pp_0: 
 & \chi \equiv \chi_0, \quad & \eta \equiv 1, \quad & \lambda =1, \quad & \Rr(\tau,\lambda) = \mathcal{O}(\tau^{-2\lambda}),\\
\widetilde{\Pp}_\pm: 
 & \chi \equiv \widetilde{c}_0^{\pm}, \quad & \eta \equiv 1, \quad & \lambda = \frac{2}{3}, 
 \quad & \Rr(\tau,\lambda) = o(\tau^{-\lambda}),\\
\Pp_\pm: 
 & \chi \equiv c_0^{\pm}, \quad & \eta \equiv 1, \quad & \lambda = \frac{1}{2}, 
 \quad & \Rr(\tau,\lambda)=
\left\{
\begin{array}{ll}
\displaystyle o\left(\tau^{-\lambda}\right), 
& \text{if } \mu\ne 1/2,\\
\displaystyle \mathcal{O}\left(\tau^{-2\lambda}\right),
& \text{if } \mu = 1/2,
\end{array}
\right.\\
\Pp_1: 
 & \chi \equiv0, \quad & \eta \equiv 0, \quad & \lambda = 0, 
 \quad & \Rr(\tau,\lambda)= o(1).
\end{array}
$$
Here  $c_0^{\pm}$ and $\widetilde{c}_0^{\pm}$ are given in~\eqref{eq:c0} and ~\eqref{eq:c0bound} and
$\chi_0:\RR\setminus\{V'(0),V'(1)\}\to\RR$ is defined by
\begin{equation}\label{eq:chi0largetime}
\chi_0(k,t) \equiv 
H(u^*(k))+
\log \left(\frac{4 k^2-v_0^{\infty}(k,t)^2}{4 (u^*(k)-1) u^*(k) v_0^{\infty}(k,t)^{3/2} \sqrt{V''(u^*(k))}}\right),
\end{equation}
with $V$ and $H$ given in~\eqref{eq:VandH} and $u^*$ in~\eqref{eq:u*}. 
We now state the main result of the section, namely an expansion for the forward smile in all regimes and (log) strikes on the real line.
The proof is given in Section~\ref{sec:fwdsmileproofnonsteeplargemat}.
\begin{theorem}\label{theorem:HestonLargeMatFwdSmileNonSteep}
The following expansion holds for the forward smile as $\tau$ tends to infinity:
$$
\sigma_{t,\tau}^2(k\tau)
=
v_0^{\infty}(k,t)+v_1^{\infty}(k,t)\tau^{-\lambda}+\Rr(\tau,\lambda),
\qquad\text{for any }k\in\RR,
$$
where $v_1^{\infty}:\RR\times\RR_{+}\to\RR$  is defined by
\begin{equation*}
v_1^{\infty}(k,t)
 := 
\left\{
\begin{array}{ll}
\displaystyle\frac{8 v_0^{\infty}(k,t)^2}{4 k^2-v_0^{\infty}(k,t)^2}\chi(k,t),
\quad & \text{if }k\in\RR\setminus\{V'(0),V'(1)\},\\
\displaystyle 2\eta(k)\left[1-\sqrt{\frac{v_0^{\infty}(k,t)}{V''(u^*(k))}}\left(1+\sgn(k)\left(\frac{V'''(u^*(k))}{6 V''(u^*(k))}-H'(u^*(k))\right)\right)\right],
\quad & \text{if }k\in\{V'(0),V'(1)\},
\end{array}
\right.
\end{equation*}
with the functions $\chi,\eta$, the remainder $\Rr$ and the constant $\lambda$ given by the following 
combinations\footnote{whenever $\Pp_0$ is in force, the case $k=V'(a)$ 
is excluded if $v = \theta \Upsilon(a)$, with $\Upsilon$ defined in~\eqref{eq:Upsilon}, for $a\in\{0,1\}$.}:
\begin{itemize}
\item $\Rf_1$: $\Pp_0$ for $k\in\RR$;
\item $\Rf_2$:
$\Pp_0$ for $k\in(-\infty, V'(u_+^*))$;
$\widetilde{\Pp}_+$ for $k=V'(u_+^*)$;
$\Pp_+$ for $k\in(V'(u_+^*),+\infty)$;

\item $\Rf_{3a}$:
$\Pp_-$ for $k\in(-\infty, V'(u_-^*))$;
$\widetilde{\Pp}_-$ for $k=V'(u_-^*)$;
$\Pp_0$ for $k\in(V'(u_-^*),+\infty)$;

\item $\Rf_{3b}$:
$\Pp_-$ for $k\in(-\infty, V'(u_-^*))$;
$\widetilde{\Pp}_-$ for $k=V'(u_-^*)$;
$\Pp_0$ for $k\in(V'(u_-^*),V'(1))$;
$\Pp_1$ for $k\in[V'(1),+\infty)$;

\item $\Rf_4$:
$\Pp_0$ for $k\in(-\infty, V'(1))$;
$\Pp_1$ for $k\in[V'(1),+\infty)$.
\end{itemize}
\end{theorem}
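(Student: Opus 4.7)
The plan is to invert the forward-start call expansion of Theorem~\ref{theorem:largematasympcalls} via the Black--Scholes pricing formula, using Lemma~\ref{Cor:BSOptionLargeTime} together with a Taylor expansion of $V^*_{\textrm{BS}}(k,\cdot)$. I start from the defining identity $\EE(\E^{X^{(t)}_\tau}-\E^{k\tau})_+ = C_{\textrm{BS}}(\tau,k\tau,\sigma_{t,\tau}(k\tau))$ and insert the ansatz $\sigma_{t,\tau}^2(k\tau)=v_0+v_1\tau^{-\lambda}+\Rr(\tau,\lambda)$ on the right-hand side; Lemma~\ref{Cor:BSOptionLargeTime} (adapted to the general exponent $\lambda$) turns the BSM side into $\Ii(k,\tau,-v_0/2,v_0/2,0)+\phi_{\textrm{BS}}(k,v_0,v_1)\tau^{-1/2}\E^{-\tau(V^*_{\textrm{BS}}(k,\sigma_{t,\tau}^2)-k)}(1+o(1))$. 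A Taylor expansion $V^*_{\textrm{BS}}(k,v_0+v_1\tau^{-\lambda})=V^*_{\textrm{BS}}(k,v_0)+\partial_aV^*_{\textrm{BS}}(k,v_0)\,v_1\tau^{-\lambda}+\tfrac{1}{2}\partial_a^2V^*_{\textrm{BS}}(k,v_0)\,v_1^2\tau^{-2\lambda}+\cdots$ then produces a series in decreasing powers of $\tau$ inside the exponent, which I match term by term against the Heston expansion. Equating leading orders gives $V^*_{\textrm{BS}}(k,v_0)=V^*(k)$, a quadratic $(k+v_0/2)^2=2v_0 V^*(k)$ with solutions $v_0=4V^*(k)-2k\pm 4\sqrt{V^*(k)(V^*(k)-k)}$; the sign $\mathcal{Z}(k)\in\{-1,1\}$ is fixed by requiring the BSM intrinsic-value indicator to match the Heston one (which pins down the position of $k$ relative to $\pm v_0/2$) together with $v_0>0$ and continuity across the turning points, recovering exactly~\eqref{eq:v0nonsteep}.

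For the next order I match $-v_1\partial_a V^*_{\textrm{BS}}(k,v_0)\tau^{1-\lambda}$ against the Heston subleading exponent correction $\psi(k,t)\tau^{\gamma}$: this imposes $1-\lambda=\gamma$ (giving $\lambda=1$ in $\Pp_0$, $\lambda=\tfrac{1}{2}$ in $\Pp_\pm$, $\lambda=\tfrac{2}{3}$ in $\widetilde{\Pp}_\pm$) and yields $v_1^\infty = -\psi(k,t)/\partial_a V^*_{\textrm{BS}}(k,v_0^\infty)$. Since $\partial_a V^*_{\textrm{BS}}(k,a)=(a^2-4k^2)/(8a^2)$, this rearranges to $v_1^\infty(k,t) = 8 v_0^\infty(k,t)^2 \chi(k,t)/(4k^2-v_0^\infty(k,t)^2)$, with $\chi\in\{c_0^\pm,\widetilde{c}_0^\pm\}$ read off from~\eqref{eq:c0} and~\eqref{eq:c0bound}. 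In the combination $\Hh_0/\Pp_0$ the subleading exponent correction vanishes ($\psi\equiv 0$), so I extract $v_1^\infty$ instead by equating prefactors $\phi_0(k)=\phi_{\textrm{BS}}(k,v_0^\infty,v_1^\infty)$; this is linear in $v_1^\infty$ thanks to the exponential dependence of $\phi_{\textrm{BS}}$ on its third argument, and reproduces the $\chi_0$ expression~\eqref{eq:chi0largetime}. The order of $\Rr(\tau,\lambda)$ is then governed by the largest neglected contribution --- the $\tau^{1-2\lambda}$ curvature term in the BSM exponent, the Heston multiplicative $\mathcal{O}(\tau^{-\beta})$, or the next prefactor correction --- with the improvement to $\mathcal{O}(\tau^{-2\lambda})$ at $\mu=1/2$ in $\Pp_\pm$ coming from an additional cancellation at that special value.

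The hard part will be handling the boundary strikes $k\in\{V'(0),V'(1)\}$ and the degenerate combination $\Pp_1$. At the turning points one has $V^*(V'(0))=0$ and (when $\rho\leq\kappa/\xi$) $V^*(V'(1))=V'(1)$, hence $v_0^\infty=2|k|$ and $4k^2-v_0^\infty(k,t)^2=0$: both $\phi_{\textrm{BS}}$ and the first branch of $\phi_0$ become singular and must be replaced by their degenerate counterparts (the $(b-2)/(2\sqrt{2a\pi})$ branch of $\phi_{\textrm{BS}}$ and the second line of~\eqref{eq:phi0largetime}); matching prefactors then requires an l'H\^{o}pital-type expansion around the removable zero, delivering the second case of the $v_1^\infty$ formula together with the $\sgn(k)(V'''(u^*(k))/(6V''(u^*(k)))-H'(u^*(k)))$ correction. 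For the combinations $\Hh_1,\Hh_2$ (which arise only in $\Rf_{3b}$ and $\Rf_4$ for $k\geq V'(1)$), the Heston prefactor $\phi(k,t)\tau^{-\alpha}$ grows polynomially since $\alpha<0$, whereas the BSM prefactor decays like $\tau^{-1/2}$; the matching is therefore only consistent at leading exponential order, so only $v_0^\infty$ can be identified unambiguously and the subleading correction is absorbed into $o(1)$, which is precisely what the $\Pp_1$ combination encodes ($\lambda=0$, $\eta=0$, $\Rr(\tau,\lambda)=o(1)$).
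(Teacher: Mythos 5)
Your proposal is correct and follows essentially the same route as the paper: express both sides via the Gao--Lee-style BSM expansion, match the exponential rates $V^*_{\textrm{BS}}(k,v_0)=V^*(k)$ and the subleading powers to force $1-\lambda=\gamma$ and read off $v_1^\infty$ from $\psi$ (or from the prefactor when $\psi\equiv 0$ in $\Hh_0/\Pp_0$), select the root of the quadratic for $v_0^\infty$ by aligning the intrinsic-value indicators, and use the degenerate branches of $\phi_{\textrm{BS}}$ and $\phi_0$ together with the positive $\alpha$-mismatch at $\Hh_1,\Hh_2$ to handle $k\in\{V'(0),V'(1)\}$ and the $\Pp_1$ combination. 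The paper states exactly these two steps (root selection to line up domains, then order-by-order matching) and illustrates them with the same two representative cases you describe, including the $\mu=1/2$ cancellation arising from $\tau^{\mu/2-3/4}=\tau^{-1/2}$ in $\Hh_\pm$.
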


\begin{remark} \ \label{rem:contlargemat}
\begin{enumerate}[(i)] 
\item 
In the standard spot case $t=0$, the large-maturity asymptotics of the implied volatility smile was derived 
in~\cite{{FJM10}} for $\Rf_1$ only (i.e. assuming $\kappa> \rho\xi$).
In the complementary case, $\Rf_4$, the behaviour of the smile for large strikes become more degenerate, and one cannot specify higher-order asymptotics for $k \geq V'(1)$.
\item
The zeroth-order term $v_0^{\infty}$ is continuous on $\RR$ (see also section~\ref{sec:SVIExtended}), 
which is not necessarily true for higher-order terms.
In $\Rf_2$, $\Rf_{3a}$ and $\Rf_{3b}$, 
$v_1^{\infty}$ tends to either infinity or zero at the critical strikes $V'(u^*_{+})$ and $V'(u^*_{-})$
(this is discussed further in Section~\ref{sec:nonsteepnumerics}).
In $\Rf_1$, $v_1^{\infty}$ is continuous on the whole real line.
\item 
Straightforward computations show that $0<v_0^{\infty}(k)<2|k|$ for $k\in\RR\setminus\left[V'(0), V'(1)\right]$, 
and $v_0^{\infty}(k)>2|k|$ for $k\in\left(V'(0), V'(1)\right)$, 
so that $v_1^\infty$ is well defined on $\RR\setminus\{V'(0), V'(1)\}$.
On $(-\infty, V'(u_-^*))\cup(V'(u_+^*),\infty)$, $c_0^{\pm}>0$, so that 
in Regimes $\Rf_2$ on $(V'(u^*_{+}), \infty)$ and in $\Rf_{3b}, \Rf_{3b}$ on $(-\infty, V'(u_-^*))$,
$v_{1}^{\infty}$ is always a positive adjustment to the zero-order term $v_0^{\infty}$;
see Figure~\ref{fig:asymeffect} for an example of this 'convexity effect'.
\end{enumerate}
\end{remark}

Theorem~\ref{theorem:HestonLargeMatFwdSmileNonSteep} displays varying levels of degeneration for high-order forward smile asymptotics.
In $\Rf_1$ one can in principle obtain arbitrarily high-order asymptotics.
In $\Rf_2$, $\Rf_{3a}$ and $\Rf_{3b}$ one can only specify the forward smile to arbitrary order if $\mu=1/2$. 
If this is not the case then we can only specify the forward smile to first order. 
Now the dynamics of the Heston volatility $\sigma_t:=\sqrt{V_t}$ is given by
$
\D \sigma_t=\left(\frac{2\mu -1}{8\sigma_t}\xi^2-\frac{\kappa \sigma_t}{2}\right)\D t +\frac{\xi}{2} \D W_t,
$
with $\sigma_0=\sqrt{v}$.
If $\mu=1/2$ then the volatility becomes Gaussian, which this corresponds to a specific case of the Sch\"obel-Zhu stochastic volatility model. 
So as the Heston volatility dynamics deviate from Gaussian volatility dynamics a certain degeneracy occurs such that one cannot specify high order forward smile asymptotics.
Interestingly, a similar degeneracy occurs in~\cite{JR13} for exploding small-maturity Heston forward smile asymptotics and in~\cite{DFJV} when studying the tail probability of the stock price.
As proved in~\cite{DFJV}, the square-root behaviour of the variance process induces some singularity and hence
a fundamentally different behaviour when $\mu \ne 1/2$.
In $\Rf_2$, $\Rf_{3a}$ and $\Rf_{3b}$ at the boundary points $V'(u^*_{\pm})$ one cannot specify the forward smile 
beyond first order for any parameter configurations. 
This could be because these asymptotic regimes are extreme in the sense that they are transition points between standard and degenerate behaviours and therefore difficult to match with BSM forward volatility.
Finally in $\Rf_{3b}$ and $\Rf_4$ for $k>V'(1)$ we obtain the most extreme behaviour, in the sense that one cannot specify the forward smile beyond zeroth order.
This is however not that surprising since the large correlation regime has fundamentally different behaviour to the BSM model (see also Remark~\ref{rem:largematfwdstartoption}(iii)).


\subsection{SVI-type limits}\label{sec:SVIExtended}
The so-called 'Stochastic Volatility Inspired' (SVI) parametrisation of the spot implied volatility smile was proposed in~\cite{G04}. 
As proved in~\cite{GJ11}, under the assumption $\kappa>\rho\xi$, 
the SVI parametrisation turn out to be the true large-maturity limit for the Heston (spot) smile.
We now extend these results to the large-maturity forward implied volatility smile. 
Define the following extended SVI parametrisation
$$
\sigma^2_{\SVI}(k,a,b,r,m,s,i_0,i_1,i_2):=a+b\left(r(k-m)+i_0 \sqrt{i_1(k-m)^2+i_2(k-m)+i_0 s^2}\right),
$$
for all $k\in\RR$ and the constants
\begin{equation*}
\left\{
\begin{array}{rl}
\omega_1 & := \displaystyle\frac{2\mu}{1-\rho^2}
\left(\sqrt{\left(2\kappa+\xi^2-\rho\xi\right)^2+\xi^2\left(1-\rho^2\right)}-\left(2\kappa+\xi^2-\rho\xi\right)\right),
\qquad
\omega_2:=\frac{\xi}{\kappa\theta},\\ 
a_{\pm} & := \displaystyle\frac{\kappa\theta }{2 \left(u^*_{\pm}-1\right) u^*_{\pm} \beta _t},
\quad  
b_{\pm} := 4 \sqrt{\left(u^*_{\pm}-1\right) u^*_{\pm}},
\quad 
r_{\pm} := \frac{2(2 u^*_{\pm}-1)}{b_{\pm}}, 
\quad m_{\pm}:=\left(u^*_{\pm}-\frac{1}{2}\right)a_{\pm},\\
\widetilde{a} & := -2 \widetilde{m},
\quad  
\widetilde{b} := 4 \sqrt{-\widetilde{m}},
\quad 
\widetilde{r} := \frac{1}{2\sqrt{-\widetilde{m}}}, 
\quad 
\widetilde{m}:=\mu(\kappa-\rho\xi),
\end{array}
\right.
\end{equation*}
where  $u^*_{\pm}$  is defined in~\eqref{eq:DefUpmU*pm} and $\beta_t$ in~\eqref{eq:DGammaBeta}.
Define the following combinations:
\begin{equation}
\begin{array}{lllllllll}
\Ss_0: 
 & a=\frac{\omega_1(1-\rho)^2}{2}, & b=\frac{\omega_1\omega_2}{2}, & r=\rho, & m=-\frac{\rho}{\omega_2}, & s=\frac{\sqrt{1-\rho^2}}{\omega_2}, & i_0=1,  & i_1=1,  & i_2=0,
 \\
\Ss_\pm: 
  & a=a_{\pm}, & b=b_{\pm}, & r=r_{\pm}, & m=m_{\pm}, & s=\frac{1}{8}a_{\pm}, & i_0=-1,  & i_1=1,  & i_2=0,
\\
\Ss_1: 
 & a=\widetilde{a}, & b=\widetilde{b}, & r=\widetilde{r}, & m=\widetilde{m}, & s=0, & i_0=1,  & i_1=0,  & i_2=1.
\end{array}
\end{equation}
The proof of the following result follows from simple manipulations of the zeroth-order forward smile in Theorem~\ref{theorem:HestonLargeMatFwdSmileNonSteep} using the characterisation of $V^*$ in
Lemma~\ref{lemma:V*Characterisation}.
\begin{corollary}\label{cor:SVINonSteep}
The pointwise continuous limit $\lim_{\tau\uparrow\infty}\sigma^2_{t,\tau}(k\tau)=\sigma^2_{\SVI}(k,a,b,r,m,s,i_0,i_1,i_2)$ exists for $k\in\mathbb{R}$ with constants $a,b,r,m,s,i_0,i_1$ and $i_2$ 
given by\footnote{whenever $\Ss_0$ is in force, the case $k=V'(a)$ 
is excluded if $v = \theta \Upsilon(a)$, with $\Upsilon$ defined in~\eqref{eq:Upsilon}, for $a\in\{0,1\}$.}:
\begin{itemize}
\item $\Rf_1$: 
$\Ss_0$ for $k\in\RR$;

\item $\Rf_2$:
$\Ss_0$ for $k\in(-\infty, V'(u_+^*))$;
$\Ss_+$ for $k\in[V'(u_+^*),+\infty)$;

\item $\Rf_{3a}$:
$\Ss_-$ for $k\in(-\infty, V'(u_-^*)]$;
$\Ss_0$ for $k\in(V'(u_-^*),+\infty)$;

\item $\Rf_{3b}$:
$\Ss_-$ for $k\in(-\infty, V'(u_-^*)]$;
$\Ss_0$ for $k\in(V'(u_-^*),V'(1))$;
$\Ss_1$ for $k\in[V'(1),+\infty)$;

\item $\Rf_4$:
$\Ss_0$ for $k\in(-\infty, V'(1))$;
$\Ss_1$ for $k\in[V'(1),+\infty)$.
\end{itemize}
\end{corollary}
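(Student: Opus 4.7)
The plan is to start from Theorem~\ref{theorem:HestonLargeMatFwdSmileNonSteep}, which gives $\sigma^2_{t,\tau}(k\tau) = v_0^\infty(k,t) + v_1^\infty(k,t)\tau^{-\lambda} + \Rr(\tau,\lambda)$ with $\Rr(\tau,\lambda)=o(1)$ in every regime and every strike region, so that the pointwise limit as $\tau\uparrow\infty$ exists and equals $v_0^\infty(k,t)$. By Remark~\ref{rem:contlargemat}(ii) the function $v_0^\infty$ is continuous on all of~$\RR$, so the limit defines a single continuous function of $k$. What remains is then purely algebraic: substitute the piecewise formula for $V^*$ from Lemma~\ref{lemma:V*Characterisation} into the definition~\eqref{eq:v0nonsteep} of $v_0^\infty$ and check, region by region, that the result equals $\sigma^2_\SVI(k,a,b,r,m,s,i_0,i_1,i_2)$ for the constants declared in the combination $\Ss_0$, $\Ss_\pm$ or $\Ss_1$ in force.

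I would first handle the ``smooth'' case $\Ss_0$, which is in force whenever $V^*(k) = W(k,u^*(k)) = u^*(k)k - V(u^*(k))$ with $u^*(k)$ given by the explicit algebraic formula~\eqref{eq:u*}. Since $u^*$ does not involve $t$, on this set $V^*$ coincides with the spot large-deviation rate function of~\cite{FJ09}, so that the identity $v_0^\infty(k,t) = \sigma^2_\SVI(k;\Ss_0)$ is exactly the large-maturity spot SVI identity of~\cite{GJ11}; the sign $\mathcal{Z}(k)\in\{\pm 1\}$ matches the branch of $\sqrt{(k-m)^2 + s^2}$ dictated by $i_0 = 1$.

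For the linear regions associated to $\Ss_\pm$, where $V^*(k) = u_\pm^* k - V(u_\pm^*)$, both factors under the root in $v_0^\infty$ are affine in $k$ and their product is the quadratic
$$
V^*(k)\bigl(V^*(k)-k\bigr) = u_\pm^*(u_\pm^*-1)\,k^2 - (2u_\pm^*-1)V(u_\pm^*)\,k + V(u_\pm^*)^2.
$$
The key algebraic observation is that the $u_\pm^*$ defined in~\eqref{eq:DefUpmU*pm} are precisely the two roots of $\kappa\theta - 2\beta_t V(u) = 0$ (the values at which the denominators of $H$ and of the $c_j^{\pm}$ blow up), so that $V(u_\pm^*) = \kappa\theta/(2\beta_t)$. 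With this identity the quadratic above factors as $u_\pm^*(u_\pm^*-1)\bigl((k-m_\pm)^2 - s_\pm^2\bigr)$ with $m_\pm = (u_\pm^*-\frac{1}{2})a_\pm$, and combining with the affine piece $4V^*-2k$ and the identity $a_\pm = V(u_\pm^*)/(u_\pm^*(u_\pm^*-1))$ reproduces $\sigma^2_\SVI(k;\Ss_\pm)$; the factor $i_0=-1$ comes from $\mathcal{Z}(k)=-1$ on the tails $k<V'(u_-^*)$ and $k>V'(u_+^*)$. The case $\Ss_1$, where $V^*(k) = k-V(1)$ on $(V'(1),+\infty)$ in $\Rf_{3b}\cup\Rf_4$, is even simpler: evaluating $V(1) = \frac{\mu}{2}(\kappa-\rho\xi-|\kappa-\rho\xi|)$ and using $\rho\xi>\kappa$ yields $V(1) = \mu(\kappa-\rho\xi) = \widetilde{m}$, whence $V^*(V^*-k) = -\widetilde{m}(k-\widetilde{m})$ is affine (matching $i_1=0$, $i_2=1$, $s=0$), and the constants $\widetilde{a}$, $\widetilde{b}$, $\widetilde{r}$ drop out of a one-line substitution.

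The main, rather minor, obstacle is the bookkeeping of the sign $\mathcal{Z}$ and of the branch of each square root at the transition strikes $V'(u_\pm^*)$ and $V'(1)$: each separates two distinct SVI combinations, and one must ensure that the corresponding SVI expressions agree there. Continuity is guaranteed a priori by Remark~\ref{rem:contlargemat}(ii), but the matching can also be checked directly using the two identities $V(u_\pm^*) = \kappa\theta/(2\beta_t)$ and $V(1) = \widetilde{m}$, which force the two candidate SVI values to coincide at the boundary. Once this is verified, the regime decomposition of $V^*$ in Lemma~\ref{lemma:V*Characterisation} immediately delivers the regime decomposition of $\sigma^2_\SVI$ claimed in the corollary.
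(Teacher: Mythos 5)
Your proposal is correct and follows the same route as the paper, whose proof is the one‐line remark that the result ``follows from simple manipulations of the zeroth-order forward smile in Theorem~\ref{theorem:HestonLargeMatFwdSmileNonSteep} using the characterisation of $V^*$ in Lemma~\ref{lemma:V*Characterisation}.'' Your explicit identification of the two algebraic pivots --- $V(u^*_\pm)=\kappa\theta/(2\beta_t)$ (the zero of the denominator of $H$) in the $\Ss_\pm$ factorisation, and $V(1)=\widetilde{m}$ when $\rho\xi>\kappa$ in the $\Ss_1$ case --- is exactly the content the paper leaves implicit, so the argument is the same one, just spelled out.
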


\section{Financial Interpretation of the large-maturity regimes}\label{sec:finintuition}

The large-maturity regimes in~\eqref{eq:Regimes} were identified with specific properties of the limiting forward logarithmic moment generating function.
Each regime uncovers fundamental properties of the large-maturity forward smile, 
some of which having been empirically observed by practitioners.
These regimes are not merely mathematical curiosities, but their studies reveal particular behaviours and oddities
of the model.
An intuitive question is how different the large-maturity forward smile and the large-maturity spot smile are.
This is a metric that a trader would have a view on and can be analysed using historical data.
Because of the ergodic properties of the variance process, at first sight it seems natural 
to conjecture that the large-maturity spot and forward smiles should be the same at leading order.
More specifically, if $\sigma_{\tau}^{(t)}(k)$ denotes the Black-Scholes implied volatility observed at time $t$, 
i.e. the unique positive solution to the equation 
$\EE\left[(\E^{X_{t+\tau}-X_t}-\E^{k})^+|\mathcal{F}_t\right]=C_{\textrm{BS}}(\tau,k,\sigma_{\tau}^{(t)}(k))$,
then by definition the forward implied volatility solves the equation
$C_{\textrm{BS}}(\tau,k,\sigma_{t,\tau}(k))=\mathbb{E}[C_{\textrm{BS}}(\tau,k,\sigma_{\tau}^{(t)}(k))]$.
If we suppose that $\lim_{\tau\uparrow\infty}\sigma_{\tau}^{(t)}(\tau k)=\sigma^{\infty}(k)$, where the function $\sigma^{\infty}$ is independent of $t$ (this is the case in Heston --- it does not depend on $V_t$) then it seems reasonable to suppose that
$C_{\textrm{BS}}(\tau,k,\sigma_{t,\tau}(k\tau))\approx C_{\textrm{BS}}(\tau,k,\sigma^{\infty}(k))$
and hence that $\sigma_{t,\tau}(k\tau)\approx \sigma^{\infty}(k) \approx \sigma_{\tau}(k\tau)$.
It is therefore natural to conjecture (see for example~\cite{Ba}) that the limiting forward smile $\lim_{\tau\uparrow\infty}\sigma_{t,\tau}(k\tau)$
is the same as the limiting spot smile $\lim_{\tau\uparrow\infty}\sigma_{\tau}(k\tau)$.
Theorem~\ref{theorem:HestonLargeMatFwdSmileNonSteep} shows us that this only holds under the good correlation regime $\Rf_1$, i.e. for correlations `close' to zero. 
Deviations of the correlation from zero therefore effect how different the large-maturity forward smile is to the large-maturity spot smile.

Consider now the practically relevant (on Equity markets) case of large negative correlation ($\Rr_2$).
In Figure~\ref{fig:asymeffect} we compare the two limiting smiles using the zero-order asymptotics in Corollary~\ref{cor:SVINonSteep} when $\rho<\rho_{-}$. 
At the critical log-strike $V'(u^*_{+})$, the forward smile becomes more convex than the corresponding spot smile.
Interestingly this asymmetric feature has been empirically observed by practitioners~\cite{B04} and is a fundamental property 
of the model---not only for large-maturities.
Quoting Bergomi~\cite{B04} from an empirical analysis: "...the increased convexity (of the forward smile) with respect to today's smile is larger for $k>0$ than for $k<0$...this is specific to the Heston model."

\begin{figure}[h!tb] 
\centering
\includegraphics[scale=0.7]{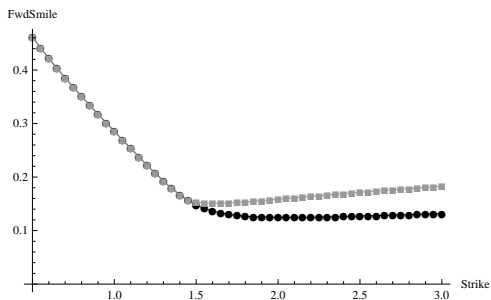}
\caption{
Here $t=0.5, \tau=2, v=\theta=0.1, \kappa=2, \xi=1, \rho=-0.9$, so that $\Rf_2$ applies.
Circles correspond to the spot smile $K\mapsto\sigma_{\tau}(\log K)$ 
and squares to the forward smile $K\mapsto\sigma_{t,\tau}(\log K)$
using the zeroth-order asymptotics in Corollary~\ref{cor:SVINonSteep}.
Here $\rho_{-}\approx-0.63$ and $\E^{2V'(u^*_{+})}\approx 1.41$.}
\label{fig:asymeffect}
\end{figure}

It is natural to wonder about the origin of this effect.
Consider a standard European option with large strike $k>0$. 
A large number of sample paths of the stock price approach $\E^{k}$, 
but, because of the negative correlation, the corresponding variance tends to be low
(the so-called `leverage effect').
For a delta-hedged long position this is exactly where we want the variance to be the highest 
(maximum gamma and vega).
Hence there is a tendency for the (spot) implied volatility to be downward sloping for high strikes.
On the other hand, consider a forward-start option with large strike $k>0$.
Suppose that the variance is large at the forward-start date, $t$.
Because of the negative correlation, the stock price will tend to be low here.
But this is irrelevant since the stock price is always re-normalised to $1$ at this point.
Hence there will be a greater number of paths where the re-normalised stock $S_u/S_{t}$ for $t\leq u\leq t+\tau$ is close to $\E^{k}$ and the variance is high relative to the (spot) case discussed above.
The relative nature of this effect induces this `convexity effect'.
When there is large positive correlation $\rho>\rho_{+}>0$ ($\Rr_3$), then the large-maturity forward smile is more convex then the large-maturity spot smile for \textit{low} strikes, $k<0$.
This is the `mirror image' effect of $\Rr_2$ and follows from similar intuition to above.

When $\rho>\kappa/\xi$ ($\Rr_{3b}$ and $\Rr_4$) there is a transition point for large strikes where the smile is upward sloping and possibly concave (See Figures~\ref{fig:LargeCorrelRegime10years} and~\ref{fig:LargeCorrelRegime20years}). 
It is important to note that this effect materialises for both the large-maturity spot and forward smile and is due to the fact that 
paths where the stock price are high will tend to be accompanied by periods of very high variance because of the positive correlation.

The intuitive arguments given above for each regime are not specific to Heston.
A natural conjecture is that all stochastic volatility models where the variance process has a stationary distribution will exhibit similar large-maturity regimes.
However, the location of the transition points and the magnitude of the `convexity corrections' may be quite different and model specific.

\section{Numerics}\label{sec:nonsteepnumerics}

We first compare the true Heston forward smile and the asymptotics developed in the paper.
We calculate forward-start option prices using the inverse Fourier transform representation 
in~\cite[Theorem 5.1]{L04F}
and a global adaptive Gauss-Kronrod quadrature scheme. 
We then compute the forward smile $\sigma_{t,\tau}$ with a simple root-finding algorithm.
In Figure~\ref{fig:HestLargeMatComp} we compare the true forward smile using Fourier inversion and the asymptotic in Theorem~\ref{theorem:HestonLargeMatFwdSmileNonSteep}(i) for the good correlation regime,
which was derived in~\cite{JR12}.
In Figure~\ref{fig:5yearnonsteep} we compare the true forward smile using Fourier inversion and the asymptotic in Theorem~\ref{theorem:HestonLargeMatFwdSmileNonSteep}(ii) for the asymmetric negative correlation regime.
Higher-order terms are computed using the theoretical results above; 
these can in principle be extended to higher order, but the formulae become rather cumbersome;
numerically, these higher-order computations seem to add little value to the accuracy anyway.
In Figure~\ref{fig:AsymNegTrans} we compare the asymptotic in Theorem~\ref{theorem:HestonLargeMatFwdSmileNonSteep}(ii) for the transition strike $k=V'(u^*_{+})$.
Results are all in line with expectations.

In the large correlation regime $\Rf_4$, we find it more accurate to use Theorem~\ref{theorem:largematasympcalls} and then numerically invert the price to get the corresponding forward smile (Figures~\ref{fig:LargeCorrelRegime10years} and~\ref{fig:LargeCorrelRegime20years}), 
rather than use the forward smile asymptotic in Theorem~\ref{theorem:HestonLargeMatFwdSmileNonSteep}.
As explained in Remark~\ref{rem:largematfwdstartoption}(iv) the leading-order accuracy of option prices in this regime is poor and higher-order terms embed important distinctions that need to be included.
This also explains the poor accuracy of the forward smile asymptotic in Theorem~\ref{theorem:HestonLargeMatFwdSmileNonSteep} for the large correlation regime.
As seen in the proof (Section~\ref{sec:fwdsmileproofnonsteeplargemat}), 
the leading-order behaviour of option prices is used to line up strike domains in the BSM and Heston model 
and then forward smile asymptotics are matched between the models.
If the leading-order behaviour is poor, then regardless of the order of the forward smile asymptotic, 
there will always be a mismatch between the asymptotic forms and the forward smile asymptotic will be poor.
Using the approach above bypasses this effect and is extremely accurate already at first order (Figures~\ref{fig:LargeCorrelRegime10years} and~\ref{fig:LargeCorrelRegime20years}).

In all but $\Rf_1$, higher-order terms can approach zero or infinity as the strike approaches the critical values ($V'(u^*_{+})$ or $V'(1)$), separating the asymptotic regimes, and forward smile 
(and forward-start option price) asymptotics are not continuous there (apart from the zeroth-order term), 
see also Remark~\ref{rem:contlargemat}(i).
Numerically this implies that the asymptotic formula may break down for strikes in a region around the the critical strike.
Similar features have been observed in~\cite{JR13} where degenerate asymptotics were derived for the exploding small-maturity Heston forward smile.

\begin{figure}
\centering
\mbox{\subfigure[Asymptotic vs Fourier Inversion.]{\includegraphics[scale=0.6]{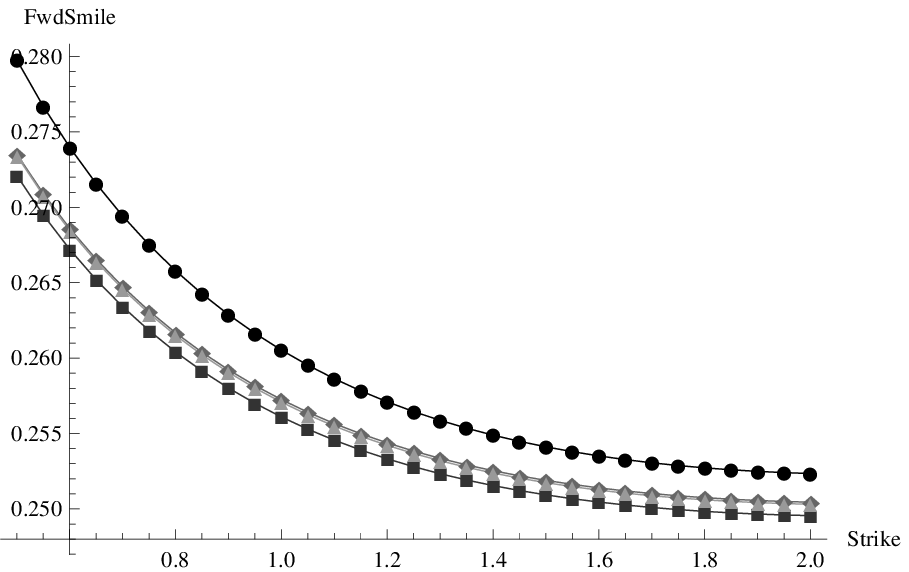}}\quad
\subfigure[Errors]{\includegraphics[scale=0.6]{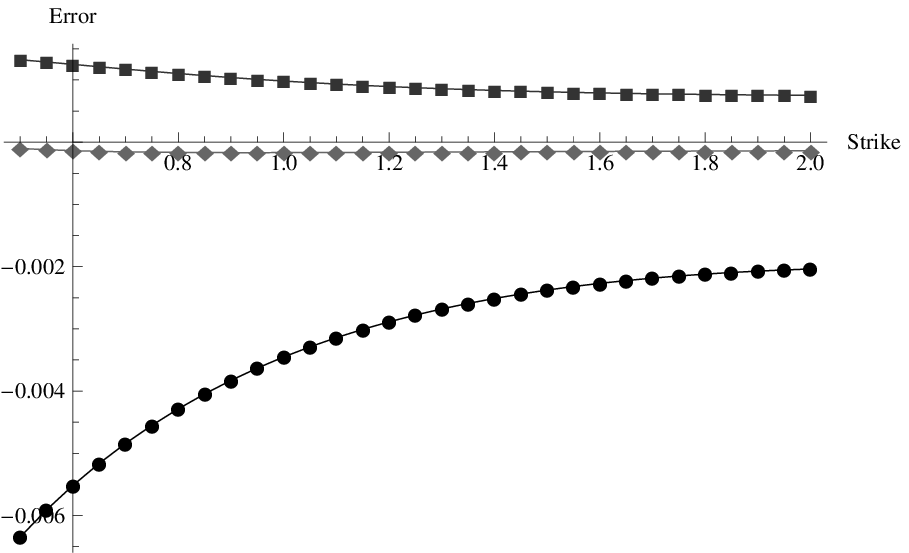}}}
\caption{\textbf{Good correlation regime $\Rf_1$.} 
In (a) circles, squares and diamonds represent the zeroth-, first-and second-order asymptotics respectively and triangles represent the true forward smile. 
In (b) we plot the differences between the true forward smile and the asymptotic. 
Here $t=1$, $\tau=5$ and $v=0.07$, $\theta=0.07$, $\kappa=1.5$, $\xi=0.34$, $\rho=-0.25$.}
\label{fig:HestLargeMatComp}
\end{figure}

\begin{figure}
\centering
\mbox{\subfigure[Asymptotic vs Fourier inversion.]{\includegraphics[scale=0.6]{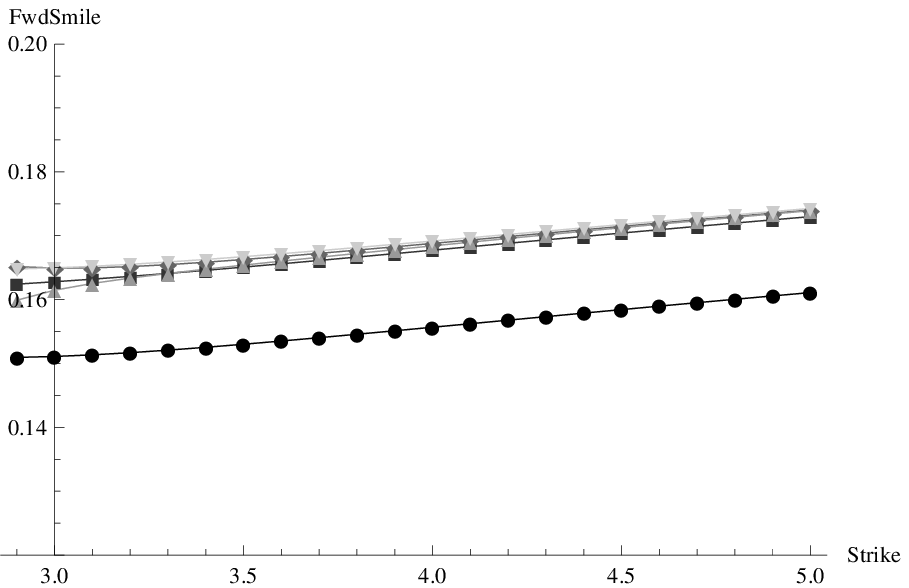}}\quad
\subfigure[Errors.]{\includegraphics[scale=0.6]{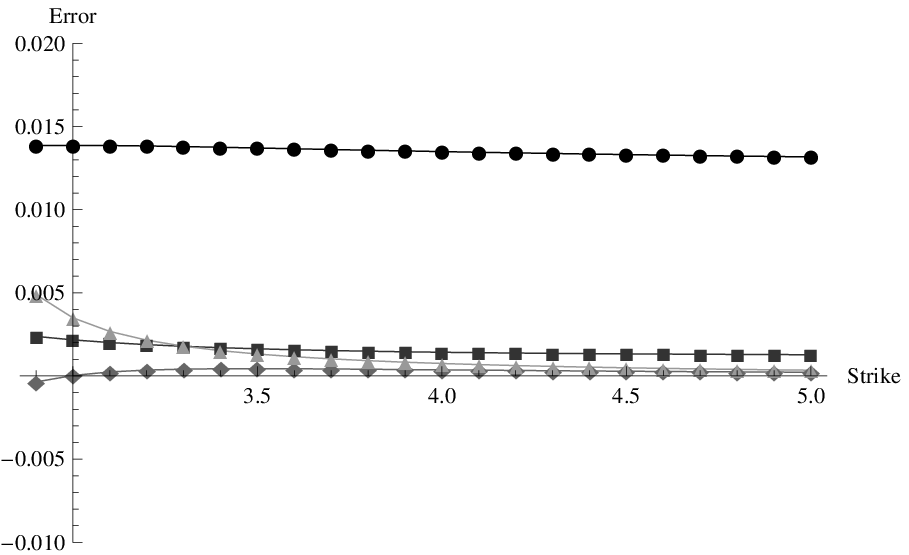}}}
\caption{\textbf{Asymmetric correlation regime~$\Rf_{2}$.} 
Here $t=1$, $\tau=5$ and $v=\theta=0.07$, $\rho=-0.8$, $\xi=0.65$ and $\kappa=1.5$, 
which implies $\E^{V'(u^*_{+})\tau}\approx 2.39$.
In (a) circles, squares, diamonds and triangles represent the zeroth-, first-, second- and third-order asymptotics respectively and backwards triangles represent the true forward smile. 
In (b) we plot the errors. }
\label{fig:5yearnonsteep}
\end{figure}

\begin{figure}
\centering
\mbox{\subfigure[Asymptotic vs Fourier inversion.]{\includegraphics[scale=0.6]{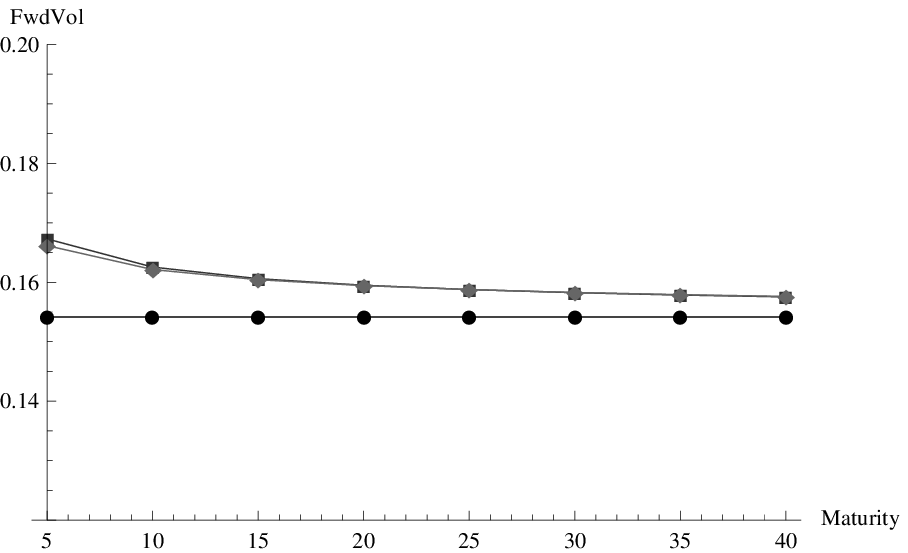}}\quad
\subfigure[Errors.]{\includegraphics[scale=0.6]{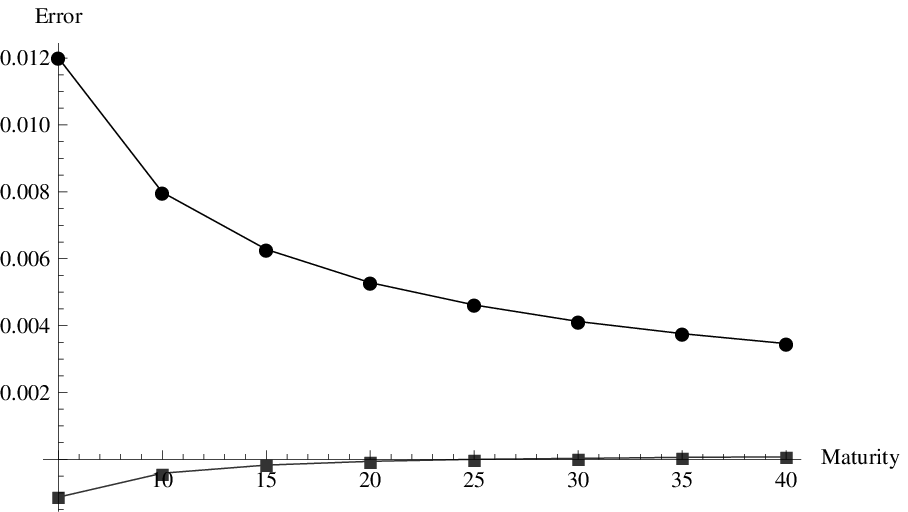}}}
\caption{\textbf{Asymmetric correlation regime~$\Rf_{2}$.} 
Here $t=1$ and the Heston parameters are the same as in Figure~\ref{fig:5yearnonsteep}. 
Circles and squares represent the zeroth- and first-order asymptotic and triangles represent the true forward smile.
The horizontal axis is the maturity and the strike is equal $\E^{V'(u^*_{+})\tau}$.
In (b) we plot the errors.}
\label{fig:AsymNegTrans}
\end{figure}

\begin{figure}
\centering
\mbox{\subfigure[Asymptotic vs Fourier inversion.]{\includegraphics[scale=0.6]{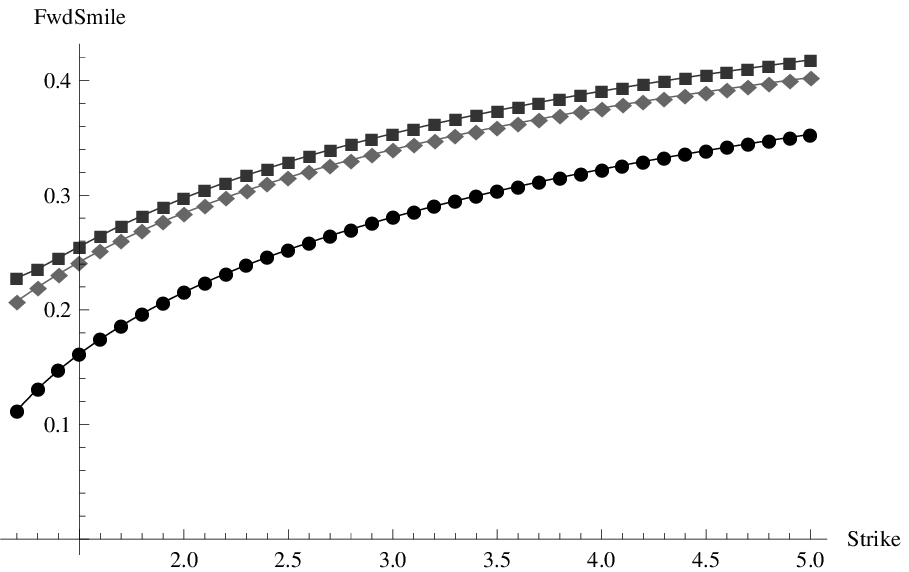}}\quad
\subfigure[Errors.]{\includegraphics[scale=0.6]{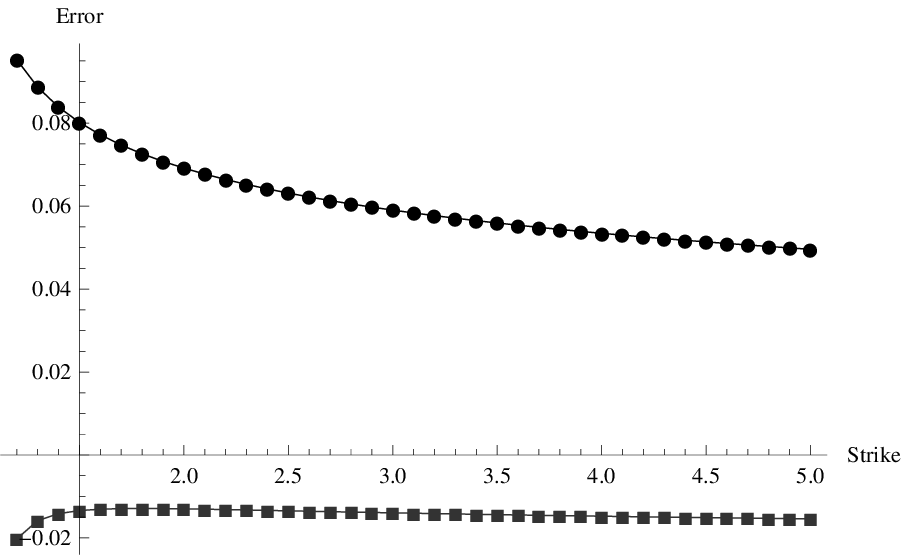}}}
\caption{\textbf{Large correlation regime~$\Rf_{4}$.} 
Here $t=0$, $\tau=10$, $v=\theta=0.07$, $\rho=0.5$, $\xi=0.6$, and $\kappa=0.1$. 
Circles and squares represent the zeroth- and first-order asymptotic and triangles represent the true forward smile.
Further $\E^{V'(1)\tau} \approx 1.06$.}
\label{fig:LargeCorrelRegime10years}
\end{figure}

\begin{figure}
\centering
\mbox{\subfigure[Asymptotic vs Fourier inversion.]{\includegraphics[scale=0.6]{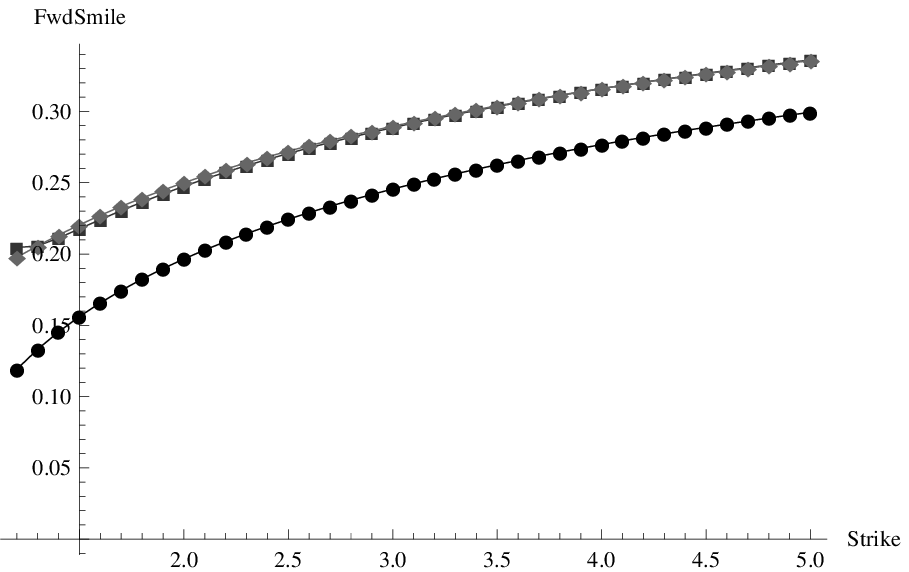}}\quad
\subfigure[Errors.]{\includegraphics[scale=0.6]{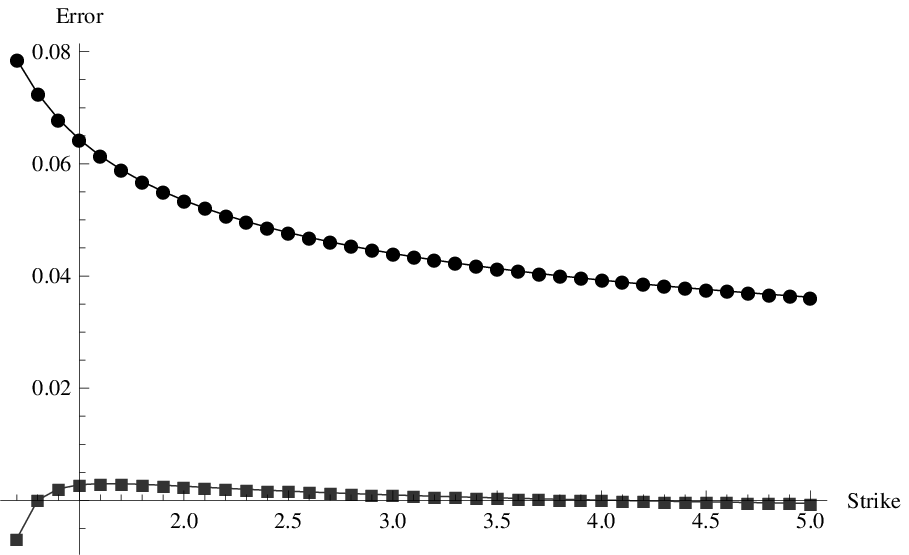}}}
\caption{\textbf{Large correlation regime~$\Rf_{4}$.}
Here $t=0$, $\tau=20$ and the Heston parameters are the same as in Figure~\ref{fig:LargeCorrelRegime10years}.
Circles and squares represent the zeroth- and first-order asymptotic and triangles represent the true forward smile.}
\label{fig:LargeCorrelRegime20years}
\end{figure}

\section{Proof of Theorems~\ref{theorem:largematasympcalls} and~\ref{theorem:HestonLargeMatFwdSmileNonSteep}} \label{sec:proofsnonsteep}
This section is devoted to the proofs of the option price and implied volatility expansions
in Theorems~\ref{theorem:largematasympcalls} and~\ref{theorem:HestonLargeMatFwdSmileNonSteep}.
We first start (Section~\ref{sec:FwdLimits}) with some preliminary results of the behaviour
of the moment generating function of the forward process $(X_\tau^{(t)})_{\tau>0}$,
on which the proofs will rely.
The remainder of the section is devoted to the different cases, as follows:
\begin{itemize}
\item Section~\ref{sec:ProofConvex} is the easy case, namely whenever the function~$V^*$ in~\eqref{eq:VStarDefinition} is strictly convex, corresponding to the behaviour $\Hh_0$, except at the points $V'(0)$ and $V'(1)$.
\item In Section~\ref{sec:Genmethlargetime}, we outline the general methodology we shall use in all other cases:
\begin{itemize}
\item Section~\ref{sec:asymmetricproofs} tackles the cases $\Hh_\pm$, $\widetilde{\Hh}_\pm$ and $\Hh_2$, 
corresponding to the function~$V^*$ being linear;
\item Section~\ref{sec:V0V1limFT} is devoted to the analysis at the points $V'(0)$ and $V'(1)$
\end{itemize}
\item Section~\ref{sec:fwdsmileproofnonsteeplargemat} translates the expansions for the option price
into expansions for the implied volatility.
\end{itemize}

\subsection{Forward logarithmic moment generating function (lmgf) expansion 
and limiting domain}\label{sec:FwdLimits}
For any $t\geq 0$, $\tau>0$, define the re-normalised lmgf of $X_{\tau}^{(t)}$ 
and its effective domain~$\mathcal{D}_{t,\tau}$
by 
\begin{equation}\label{eq:MGFFwd}
\Lambda^{(t)}_\tau\left(u\right):=\tau^{-1}\log\EE\left(\E^{uX_{\tau}^{(t)}}\right),
\quad\text{for all }
u\in\mathcal{D}_{t,\tau}:=\{u\in\RR:|\Lambda^{(t)}_\tau\left(u\right)|<\infty\}.
\end{equation}
A straightforward application of the tower property for expectations 
yields:
\begin{equation}\label{eq:LambdaTau}
\tau\Lambda^{(t)}_\tau\left(u\right)
=
A\left(u,\tau\right)+\frac{B(u,\tau)v\E^{-\kappa t}}{1-2\beta_t B(u,\tau)}
-\mu\log\left(1-2\beta_t B\left(u,\tau\right)\right),
\quad\text{for all }u\in\mathcal{D}_{t,\tau},
\end{equation}
where 
\begin{align}
A(u,\tau) & := 
\frac{\mu}{2}\left(\left(\kappa-\rho\xi u- d\left(u\right)\right)\tau-2\log\left(\frac{1-\gamma\left(u\right)\exp\left(-d\left(u\right)\tau\right)}{1-\gamma\left(u\right)}\right)\right),\nonumber\\
B(u,\tau) & := \frac{\kappa-\rho\xi u-d(u)}{\xi^2}\frac{1-\exp\left(-d\left(u\right)\tau\right)}{1-\gamma\left(u\right)\exp\left(-d\left(u\right)\tau\right)},\nonumber\\
d(u) & := \left(\left(\kappa-\rho\xi u\right)^2+u\left(1-u\right)\xi^2\right)^{1/2},
\qquad
\gamma(u) := \frac{\kappa-\rho\xi u-d\left(u\right)}{\kappa-\rho\xi u+d\left(u\right)},
\qquad
\beta_t  := \frac{\xi^2}{4\kappa}\left(1-\E^{-\kappa t}\right).
\label{eq:DGammaBeta}
\end{align}
The first step is to characterise the effective domain $\mathcal{D}_{t,\tau}$ for fixed $t\geq0$
as $\tau$ tends to infinity.
Recall that the large-maturity regimes are defined in~\eqref{eq:Regimes} with $u_{\pm}$ and $u^*_{\pm}$ given in ~\eqref{eq:DefUpmU*pm}.
\begin{lemma}
For fixed $t\geq 0$, $\mathcal{D}_{t,\tau}$ converges (in the set sense) to $\mathcal{D}_{\infty}$
defined in Table~\ref{eq:DInfinityLargeMaturity}, as $\tau$ tends to infinity.
\end{lemma}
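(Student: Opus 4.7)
The plan is to identify $\mathcal{D}_{t,\tau}$ directly from the closed-form expression~\eqref{eq:LambdaTau} for $\tau\Lambda^{(t)}_\tau(u)$ and then pass to the limit $\tau \uparrow \infty$. Three conditions must hold simultaneously for $u$ to lie in $\mathcal{D}_{t,\tau}$: (i) the discriminant $(\kappa - \rho\xi u)^2 + u(1-u)\xi^2$ must be non-negative, i.e.\ $d(u)$ must be real, which is equivalent to $u \in [u_-, u_+]$; (ii) the argument $1 - \gamma(u)\E^{-d(u)\tau}$ of the inner logarithm in $A(u,\tau)$ must be strictly positive; and (iii) the denominator $1 - 2\beta_t B(u,\tau)$ must be strictly positive, so that the middle term and the outer logarithm in~\eqref{eq:LambdaTau} are real and finite. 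Outside $[u_-, u_+]$, $d(u)$ is purely imaginary and standard moment-explosion arguments for affine models show $\Lambda_\tau^{(t)}(u)=+\infty$ for $\tau$ large enough, so I can restrict attention to $[u_-, u_+]$ from the start.

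On the open interval $(u_-, u_+)$, $d(u) > 0$, so $\E^{-d(u)\tau}\to 0$ and a direct pointwise expansion gives $B(u,\tau) \to (\kappa - \rho\xi u - d(u))/\xi^2 = V(u)/(\kappa\theta)$ (using $\mu\xi^2=2\kappa\theta$ and the definition~\eqref{eq:VandH} of $V$) together with $A(u,\tau) = \tau V(u) + \mu\log(1-\gamma(u)) + o(1)$. The limiting form of constraint~(iii) is therefore $V(u) < \kappa\theta/(2\beta_t)$; squaring the boundary equation $V(u)=\kappa\theta/(2\beta_t)$ and rearranging yields exactly the two roots $u = u_\pm^*$ of~\eqref{eq:DefUpmU*pm}, and one checks that $\nu$ in~\eqref{eq:etanu} is real and these roots fall inside $(u_-, u_+)$ precisely when $\rho\in[-1,\rho_-]\cup[\rho_+,1]$, i.e.\ outside $\Rf_1$. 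At the endpoints $u_\pm$ themselves, $d(u)=0$ is a removable singularity of~\eqref{eq:LambdaTau}, resolved by L'H\^opital, and constraint~(iii) is automatically satisfied there under $\Rf_1$.

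The last step is a regime-by-regime identification. In $\Rf_1$ the constraint $V(u) < \kappa\theta/(2\beta_t)$ holds throughout $[u_-, u_+]$, giving $\mathcal{D}_\infty=[u_-, u_+]$. In $\Rf_2$, $u_+^*\in(u_-, u_+)$ and the strict inequality excludes the upper boundary, yielding $[u_-, u_+^*)$; $\Rf_{3a}$ is symmetric. The most delicate case, which I expect to be the main obstacle, is the appearance of the upper bound $u=1$ in $\Rf_{3b}$ and $\Rf_4$ (where $\rho>\kappa/\xi$): here $V(1)<0$, so~(iii) is not the binding constraint. Instead, for $u>1$ one checks from~\eqref{eq:DGammaBeta} that $\kappa-\rho\xi u+d(u)<0$, so $\gamma(u)>1$ and the argument of the inner logarithm in $A(u,\tau)$ becomes negative for large $\tau$, producing a spurious imaginary part unless the formula is interpreted as the analytic continuation across a branch cut. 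The clean resolution is to exploit the fact that $(\E^{X_s})_{s\geq 0}$ is a strict local but not true martingale in this regime, so that moment explosion occurs at $u=1^+$ and the mgf is genuinely infinite for any $\tau>0$; the endpoint $u=1$ itself remains in the domain by the supermartingale property $\EE(\E^{X_\tau^{(t)}}\mid \mathcal{F}_t)\leq 1$, giving $\mathcal{D}_\infty=(u_-,1]$ in $\Rf_4$ and $(u_-^*,1]$ in $\Rf_{3b}$. The technical core of this step is verifying that the mgf truly explodes for $u>1$ (not merely becomes formally complex), which can be done either by transferring known Heston spot moment-explosion results through conditioning on $\mathcal{F}_t$, or by a branch-cut analysis of the analytic continuation of~\eqref{eq:LambdaTau} through the point $u=1$.
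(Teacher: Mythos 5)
Your approach — extracting $\mathcal{D}_{t,\tau}$ directly from the closed-form expression~\eqref{eq:LambdaTau} and passing to the limit — is genuinely different from the paper's, which simply cites the set-inclusion facts $[u_-,u_+]\cap(-\infty,u_+^*)\subset\mathcal{D}_{t,\tau}$, etc., from~\cite[Lemma~5.11, Proposition~5.12]{JR12} and~\cite[Proposition~2.3]{JM12} rather than re-deriving them. Your conditions (i)--(iii) are the right ones, and your identification of the roots of $V(u)=\kappa\theta/(2\beta_t)$ with $u_\pm^*$ checks out. However, the step you yourself flag as the main obstacle — why the domain is capped at $u=1$ in $\Rf_{3b}$ and $\Rf_4$ — rests on a false premise. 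The Heston asset price $(\E^{X_s})_{s\geq 0}$ is a \emph{true} martingale for every $|\rho|<1$; this is precisely~\cite[Proposition~2.5]{AP07}, which the paper itself invokes in these very regimes. Consequently $\Lambda_\tau^{(t)}(1)=0$ for all $\tau$ (see Lemma~\ref{lem:fwdmgflargetauexpansion}), not a strict inequality from a supermartingale bound, and $u=1$ lies in $\mathcal{D}_\infty$ because of the martingale property, not despite a failure of it.

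The correct mechanism truncating the domain at $u=1$ when $\rho>\kappa/\xi$ is finite-time moment explosion, not martingale failure. For $u>1$ the Riccati ODE governing the spot Heston cmgf has stationary points $D_\pm(u)=\big((\kappa-\rho\xi u)\pm d(u)\big)/\xi^2$; when $\kappa-\rho\xi u<0$ and $u(u-1)>0$ both are strictly negative, the initial value $D(u,0)=0$ lies strictly above them, and the solution escapes to $+\infty$ at a finite time $T^*(u)$. Hence $\Lambda_\tau^{(t)}(u)=+\infty$ only for $\tau>T^*(u)$, and such $u$ drop out of the limiting domain; this also contradicts your phrasing ``genuinely infinite for any $\tau>0$''. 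The same caveat applies to your opening dismissal of $u\notin[u_-,u_+]$: the domain $\mathcal{D}_{t,\tau}$ is generally strictly larger than $[u_-,u_+]$ for finite $\tau$ and only shrinks to it as $\tau\uparrow\infty$, so those $u$ are excluded ``for $\tau$ large enough'', not ``from the start''. With the martingale claim replaced by the finite-time moment-explosion argument, your direct route goes through and gives a self-contained alternative to the paper's citation-based proof.
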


\begin{proof}
Recall the following facts from~\cite[Lemma 5.11 and Proposition 5.12]{JR12} and ~\cite[Proposition 2.3]{JM12}, 
with the convention that $u_{\pm}^{*}=\pm\infty$ when $t=0$:
\begin{enumerate}[(i)]
\item $ [0,1] \subset [u_{-},u_{+}] \cap(-\infty,u^*_{+})\subset\mathcal{D}_{t,\tau}$ for all $\tau>0$ if $\rho<0$; 
\item $ [0,1] \subset [u_{-},u_{+}] \cap(u^*_{-},\infty)\subset\mathcal{D}_{t,\tau}$ for all $\tau>0$ if $0<\rho\leq \kappa/\xi$;
\item $ [0,1] \subset [u_{-},u_{+}] \subset\mathcal{D}_{t,\tau}$ for all $\tau>0$ if $\rho=0$;
\item $ [0,1] \subset [u_{-},1] \cap (u^*_{-},\infty) \subset\mathcal{D}_{t,\tau}$ for all $\tau>0$ if $\rho> \kappa/\xi$;
\item $1<u^*_{+}<u_{+}$ if and only if $\rho\in (-1,\rho_{-})$ and $u_{-}<u_{-}^*<0$ if and only if $\rho\in (\rho_{+},1)$. 
We always have $\rho_{-}\in(-1,0)$ and $\rho_{+}>1/2$. 
In the latter case it is possible that $\rho_{+}\geq1$ in which case $u_{-}^*\leq u_{-}$.
\end{enumerate}
Then for fixed $t\geq0$, the lemma follows directly from (i)-(iv) in combination with property (v).
\end{proof}
The following lemma provides the asymptotic behaviour of $\Lambda_{\tau}^{(t)}$ as $\tau$ tends to infinity.
The proof follows the same steps as~\cite[Lemma 5.13]{JR12}, using the fact that the asset price process $(\E^{X_t})_{t>0}$ is a true martingale~\cite[Proposition 2.5]{AP07}, and is therefore omitted.
\begin{lemma}\label{lem:fwdmgflargetauexpansion}
The following expansion holds for the forward lmgf $\Lambda^{(t)}_{\tau}$ defined in~\eqref{eq:MGFFwd}
($V$ and $H$ given in~\eqref{eq:VandH}):
\begin{equation*}
\Lambda^{(t)}_{\tau}(u)=
\left\{
\begin{array}{ll}
\displaystyle V(u)+\tau^{-1}H(u)\left(1+\mathcal{O}\left(\E^{-d\left(u\right)\tau}\right)\right),
&\quad\text{for all }u\in\mathcal{D}_{\infty}^{o}\setminus\{1\}, \text{ as }\tau\text{ tends to infinity},\\
\displaystyle 0, &\quad\text{for }u=1\text{ and all } \tau>0.
\end{array}
\right.
\end{equation*}
\end{lemma}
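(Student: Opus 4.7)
The plan is to work directly from the closed-form expression~\eqref{eq:LambdaTau} for $\tau\Lambda_\tau^{(t)}(u)$ and to exploit the exponential decay of $\E^{-d(u)\tau}$. First I would verify that $d(u)>0$ on $\mathcal{D}_\infty^o\setminus\{1\}$: this follows from the factorisation $d(u)^2=(\kappa-\rho\xi u)^2+u(1-u)\xi^2=\xi^2(1-\rho^2)(u-u_-)(u_+-u)$, so $d(u)>0$ on $(u_-,u_+)$, and a regime-by-regime inspection of Table~\ref{eq:DInfinityLargeMaturity} confirms $\mathcal{D}_\infty^o\subset(u_-,u_+)$. The exclusion of $u=1$ matters precisely because $d(1)=|\kappa-\rho\xi|$ may vanish when $\rho=\kappa/\xi$.

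Next, using the identity $\kappa-\rho\xi u - d(u) = 2V(u)/\mu$ combined with $\mu=2\kappa\theta/\xi^2$, I would rewrite $B$ from~\eqref{eq:DGammaBeta} as $B(u,\tau)=\frac{V(u)}{\kappa\theta}\cdot\frac{1-\E^{-d(u)\tau}}{1-\gamma(u)\E^{-d(u)\tau}}=\frac{V(u)}{\kappa\theta}\bigl(1+\mathcal{O}(\E^{-d(u)\tau})\bigr)$ as $\tau\uparrow\infty$, and substitute this into the two non-polynomial terms of~\eqref{eq:LambdaTau}. Taylor expansion around $B=V(u)/\kappa\theta$ yields $\frac{B(u,\tau)v\E^{-\kappa t}}{1-2\beta_tB(u,\tau)}=\frac{V(u)v\E^{-\kappa t}}{\kappa\theta-2\beta_tV(u)}+\mathcal{O}(\E^{-d(u)\tau})$ and $\mu\log(1-2\beta_tB(u,\tau))=\mu\log\bigl((\kappa\theta-2\beta_tV(u))/\kappa\theta\bigr)+\mathcal{O}(\E^{-d(u)\tau})$. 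For these Taylor expansions to make sense I need $\kappa\theta-2\beta_tV(u)\ne 0$ on $\mathcal{D}_\infty^o$, which is by design: the roots in~$u$ of this quantity are precisely $u_\pm^*$ from~\eqref{eq:DefUpmU*pm}, and they bound $\mathcal{D}_\infty$ in the non-steep regimes.

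In parallel I would expand the polynomial part $A(u,\tau)=V(u)\tau-\mu\log\bigl((1-\gamma(u)\E^{-d(u)\tau})/(1-\gamma(u))\bigr)=V(u)\tau+\mu\log(1-\gamma(u))+\mathcal{O}(\E^{-d(u)\tau})$. Summing the three contributions, the constant pieces $\mu\log(1-\gamma(u))-\mu\log\bigl((\kappa\theta-2\beta_tV(u))/\kappa\theta\bigr)$ together with $V(u)v\E^{-\kappa t}/(\kappa\theta-2\beta_tV(u))$ reassemble into exactly $H(u)$ as written in~\eqref{eq:VandH}. Dividing by $\tau$ produces $\Lambda_\tau^{(t)}(u)=V(u)+\tau^{-1}H(u)+\tau^{-1}\mathcal{O}(\E^{-d(u)\tau})$, which is equivalent to the stated form.

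For $u=1$ the claim is immediate. Since $(\E^{X_t})_{t\geq 0}$ is a true martingale (recalled from~\cite[Proposition 2.5]{AP07}), the tower property gives $\EE(\E^{X_\tau^{(t)}})=\EE\bigl(\E^{-X_t}\EE(\E^{X_{t+\tau}}\mid\mathcal{F}_t)\bigr)=\EE(\E^{-X_t}\cdot\E^{X_t})=1$, so $\Lambda_\tau^{(t)}(1)\equiv 0$ for every $\tau>0$. The argument carries no conceptual obstacle beyond the bookkeeping of three parallel Taylor expansions; the one subtle point is the \emph{uniform} control of the error on compact subsets of $\mathcal{D}_\infty^o\setminus\{1\}$, which requires that $1-\gamma(u)$, $\kappa\theta-2\beta_tV(u)$ and $1-2\beta_tB(u,\tau)$ all remain bounded away from zero there, both facts following directly from the parametrisations in~\eqref{eq:DefUpmU*pm} and Table~\ref{eq:DInfinityLargeMaturity}.
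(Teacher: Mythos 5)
Your proof is correct and follows the same direct-expansion route from the closed-form formula~\eqref{eq:LambdaTau} (with the identity $\kappa-\rho\xi u-d(u)=2V(u)/\mu$ and the martingale identity at $u=1$) that the paper delegates to~\cite[Lemma 5.13]{JR12}. One small inaccuracy in your remark on the exclusion of $u=1$: when $\rho=\kappa/\xi$ we have $u_+=1$, so $1\notin\mathcal{D}_\infty^o$ and $d(1)=0$ is never actually reached; the real reason $u=1$ is handled separately is that in Regimes $\Rf_{3b}$ and $\Rf_4$ one has $V(1)\neq 0$ while $\Lambda_\tau^{(t)}(1)\equiv 0$, so the expansion formula would be wrong at that (boundary, not interior) point.
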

\begin{remark} \ \label{rem:largematnonsteep}
\begin{enumerate}[(i)] 
\item When $\rho>\kappa/\xi$ ($\Rf_{3b}$ and $\Rf_4$), we have $\lim_{u \uparrow 1}\Lambda^{(t)}_{\tau}(u)=V(1)\neq 0$, 
so that the limit is not continuous at the right boundary $u=1$.
For $\rho\leq\kappa/\xi$ we always have $V(1)=H(1)=0$ and $1\in\mathcal{D}_{\infty}^{o}$.
\item For all $u\in\mathcal{D}_{\infty}^{o}$, $d(u)>0$, so that the remainder goes to zero exponentially fast as $\tau$ tends to infinity. 
\end{enumerate}
\end{remark}

\subsection{The strictly convex case}\label{sec:ProofConvex}
Let $\overline{k}:=\sup_{a\in\mathcal{D}_{\infty}}V'(a)$ and $\underline{k}:=\inf_{a\in\mathcal{D}_{\infty}}V'(a)$. 
When $k\in(\underline{k},\overline{k})\setminus\{V'(0), V'(1)\}$, 
an analogous analysis to~\cite[Theorem 2.4, Propositions 2.12 and 3.5]{JR12}, essentially based on the strict convexity of~$V$ on $(\underline{k}, \overline{k})$, 
can be carried out and we immediately obtain the following results for forward-start option prices and forward implied volatilities 
(hence proving Theorems~\ref{theorem:largematasympcalls} and~\ref{theorem:HestonLargeMatFwdSmileNonSteep} when $\Hh_0$ holds):
\begin{lemma}\label{lem:fwdstoptsteep}
The following expansions hold for all $k\in(\underline{k},\overline{k})\setminus\{V'(0), V'(1)\}$ as $\tau$ tends to infinity:
\begin{align*}
\EE\left(\E^{X^{(t)}_{\tau}}-\E^{k\tau}\right)^+
&=
\Ii\left(k,\tau,V'(0),V'(1),0\right) + 
\frac{\phi_0(k,t)}{\tau^{1/2}}\E^{-\tau\left(V ^*(k)-k \right)}
\left(1+\mathcal{O}\left(\tau^{-1}\right)\right), \\
\sigma_{t,\tau}^2(k\tau)&=v_0^{\infty}(k,t)+\frac{8 v_0^{\infty}(k,t)^2}{4 k^2-v_0^{\infty}(k,t)^2}\chi_0(k,t)\tau^{-1}+\mathcal{O}(\tau^{-2}),
\end{align*}
with $V^*$ given in Lemma~\ref{lemma:V*Characterisation}, $\Ii$ and $\phi_0$ in~\eqref{eq:intrinsiclargetimenonsteep} and~\eqref{eq:phi0largetime}, $v_0^{\infty}$ in~\eqref{eq:v0nonsteep}, $\chi_0$ in~\eqref{eq:chi0largetime} and 
\begin{equation}\label{eq:steepstrikes}
(\underline{k},\overline{k}) = 
\left\{
\begin{array}{ll}
\RR, & \text{in }\Rf_1,\\
(-\infty,V'(u^*_{+})), & \text{in }\Rf_2,\\
(V'(u^*_{-}),+\infty), & \text{in }\Rf_{3a},\\
(V'(u^*_{-}),V'(1)), & \text{in }\Rf_{3b},\\
(-\infty,V'(1)), & \text{in }\Rf_4.
\end{array}
\right.
\end{equation}
\end{lemma}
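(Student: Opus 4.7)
The plan is to follow the sharp large deviations route outlined at the end of Section~\ref{sec:LDP} and worked out in detail in~\cite[Theorem 2.4]{JR12}, exploiting the fact that on the open interval $(\underline{k},\overline{k})$ the function $V$ is strictly convex and essentially smooth, so classical G\"artner-Ellis/saddlepoint arguments apply without obstruction. First I fix $k\in(\underline{k},\overline{k})\setminus\{V'(0),V'(1)\}$ and locate the unique saddlepoint $u^*(k)\in\mathcal{D}_\infty^o$ with $V'(u^*(k))=k$; by Lemma~\ref{lemma:V*Characterisation}, in each regime this branch is precisely the one for which $V^*(k)=u^*(k)k-V(u^*(k))$, so the Fenchel-Legendre transform is attained in the interior of the domain. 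Using Lemma~\ref{lem:fwdmgflargetauexpansion} together with the implicit function theorem applied to $\partial_u\Lambda^{(t)}_\tau(u^*_\tau(k))=k$, I obtain a finite saddlepoint $u^*_\tau(k)=u^*(k)+\tau^{-1}\Delta(k)+\mathcal{O}(\tau^{-2})$ with explicit $\Delta$, and $\tau\Lambda^{(t)}_\tau(u^*_\tau(k))=\tau V(u^*(k))+H(u^*(k))+o(1)$.

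Next I introduce the tilted probability measure
\[
\frac{\mathrm{d}\QQ_{k,\tau}}{\mathrm{d}\PP}
:=\exp\!\left(u^*_\tau(k)X_\tau^{(t)}-\tau\Lambda^{(t)}_\tau(u^*_\tau(k))\right),
\]
under which $Z_{\tau,k}:=(X_\tau^{(t)}-k\tau)/\sqrt{\tau}$ has characteristic function
\[
\Phi_{\tau,k}(\xi)
=\exp\!\left(\tau\Lambda^{(t)}_\tau\!\left(u^*_\tau(k)+\I\xi/\sqrt{\tau}\right)-\tau\Lambda^{(t)}_\tau(u^*_\tau(k))-\I\xi\sqrt{\tau}\,k\right).
\]
A Taylor expansion of $\Lambda^{(t)}_\tau$ around $u^*_\tau(k)$, combined with the expansion from Lemma~\ref{lem:fwdmgflargetauexpansion}, gives $\Phi_{\tau,k}(\xi)\to\exp(-\tfrac{1}{2}V''(u^*(k))\xi^2)$ pointwise and, crucially, uniform integrability bounds away from the saddlepoint (the exponential decay $\E^{-d(u)\tau}$ of the remainder being the key ingredient). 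This is where the strict convexity hypothesis really pays off: it guarantees $V''(u^*(k))>0$ and that the saddlepoint stays a positive distance away from the boundary $\partial\mathcal{D}_\infty$, making all integrals absolutely convergent.

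Then I rewrite
\[
\EE\!\left(\E^{X_\tau^{(t)}}-\E^{k\tau}\right)^+
=\Ii\!\left(k,\tau,V'(0),V'(1),0\right)+\E^{-\tau(V^*(k)-k)}\,\mathcal{J}_\tau(k),
\]
where the intrinsic value term $\Ii$ absorbs the put-call parity adjustment needed so that $u^*_\tau(k)$ lies in the appropriate strip of regularity (namely one shifts into the strip $\Re(u)\in(0,1)$, $(-\infty,0)$ or $(1,\infty)$ according as $k$ sits in, below or above $[V'(0),V'(1)]$), and $\mathcal{J}_\tau(k)$ is the inverse Fourier representation of the residual payoff under $\QQ_{k,\tau}$. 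Plugging in the expansion of $\Phi_{\tau,k}$ and using the standard Laplace-type lemma (the counterpart of~\cite[Lemma 3.5]{JR12}) yields
\[
\mathcal{J}_\tau(k)=\frac{\phi_0(k,t)}{\sqrt{\tau}}\!\left(1+\mathcal{O}(\tau^{-1})\right),
\]
with $\phi_0$ precisely as in~\eqref{eq:phi0largetime} after substituting the explicit form of $H$ and $V''$; the factor $1/(u^*(u^*-1))$ comes from the kernel in the Fourier inversion.

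For the implied volatility statement, I compare with the Black-Scholes expansion of Lemma~\ref{Cor:BSOptionLargeTime}. The key point is that the intrinsic value indicators $\Ii(k,\tau,V'(0),V'(1),0)$ and $\Ii(k,\tau,-a/2,a/2,0)$ must agree after matching, which forces $V'(0)=-a/2$ and $V'(1)=a/2$, hence $a=v_0^\infty(k,t)$ by the explicit formula~\eqref{eq:v0nonsteep} (after checking via direct algebra using Lemma~\ref{lemma:V*Characterisation} that $V^*_{\textrm{BS}}(k,v_0^\infty)-k=V^*(k)-k$). Writing $\Sigma^2=v_0^\infty(k,t)+v_1^\infty(k,t)/\tau$ and equating the prefactors $\phi_0(k,t)=\phi_{\textrm{BS}}(k,v_0^\infty,v_1^\infty)$ yields a linear equation in $v_1^\infty$ whose solution is exactly the claimed formula involving $\chi_0(k,t)$. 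The $\mathcal{O}(\tau^{-2})$ remainder follows by carrying the Fourier expansion one order further. The main obstacle is the book-keeping in the Fourier inversion --- one has to shift contours differently depending on the position of $k$ relative to $V'(0)$ and $V'(1)$, and to verify that the exponential decay rate in Lemma~\ref{lem:fwdmgflargetauexpansion} is strong enough to justify dominated convergence uniformly in a neighbourhood of the saddlepoint; away from the degenerate strikes $V'(0)$ and $V'(1)$ (where $u^*(k)\in\{0,1\}$ and the denominator $u^*(u^*-1)$ vanishes) this is routine.
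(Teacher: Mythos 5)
Your proposal is correct and follows essentially the same route as the paper's proof sketch: strict convexity gives a unique interior saddlepoint $u^*(k)$, a tilted probability measure reduces the problem to a Fourier-integral Laplace expansion, and the implied volatility expansion is extracted by matching orders against Lemma~\ref{Cor:BSOptionLargeTime}. The only substantive difference in the option-price part is that you tilt with the time-dependent $u^*_\tau(k)$ solving $\partial_u\Lambda^{(t)}_\tau=k$, whereas the paper's sketch of Lemma~\ref{lem:fwdstoptsteep} uses the fixed $u^*(k)$ solving $V'(u^*(k))=k$ and absorbs the $O(\tau^{-1})$ discrepancy into the remainder; both are valid here, since $u^*_\tau(k)\to u^*(k)\in\mathcal{D}_\infty^o$, but the fixed-saddlepoint version is simpler when $V$ is strictly convex and essentially smooth (the time-dependent machinery is what the paper reserves for the degenerate regimes).

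One slip worth flagging in your implied-volatility step: you write that matching the intrinsic indicators "forces $V'(0)=-a/2$ and $V'(1)=a/2$, hence $a=v_0^\infty(k,t)$." That reasoning cannot be right as stated, because $a=v_0^\infty(k,t)$ depends on the log-strike $k$ while $V'(0)$ and $V'(1)$ are constants, so the thresholds cannot coincide identically. The correct logic (which the paper uses in Section~\ref{sec:fwdsmileproofnonsteeplargemat}, and which you then state in the very next clause) is: postulate $a=v_0^\infty(k,t)$ by equating the exponential decay rates $V^*_{\textrm{BS}}(k,a)-k=V^*(k)-k$ and choosing the correct root; then \emph{verify}, for each fixed $k$, that the resulting branch of the indicator agrees, i.e.\ $k\in(-a/2,a/2)$ iff $k\in(V'(0),V'(1))$, etc., using the inequalities $v_0^\infty(k)>2|k|$ for $k\in(V'(0),V'(1))$ and $v_0^\infty(k)<2|k|$ otherwise from Remark~\ref{rem:contlargemat}(iii). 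Your "after checking via direct algebra" parenthetical is the right fix, so the proof goes through, but the offending sentence should be deleted.
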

\begin{proof}
We sketch here a quick outline of the proof.
For any $k\in(\underline{k},\overline{k})$, the equation $V'(u^*(k))=k$ has a unique solution $u^*(k)$
by strict convexity arguments.
Define the random variable $Z_{k,\tau}:=(X_{\tau}^{(t)} - k\tau)/\sqrt{\tau}$; using Fourier transform methods analogous to~\cite[Theorem 2.4, Proposition 2.12]{JR12}) the option price reads, for large enough~$\tau$,
$$
\EE\left[\E^{X^{(t)}_{\tau}}-\E^{k\tau}\right]^+
=
\Ii\left(k,\tau,V'(0),V'(1),0\right) +
\frac{\E^{-\tau(k(u^*(k)-1)-V(u^*(k)))}\E^{H(u^*(k))}}{2\pi}
\int_{\RR}\frac{\Phi_{\tau,k}(u)\sqrt{\tau}\D u}{[u-\I\sqrt{\tau}(u^*(k)-1)][u-\I\sqrt{\tau}u^*(k)]},
$$
where $\Phi_{\tau,k}(u)\equiv\EE^{\widetilde{\QQ}_{k,\tau}}(\E^{\I u Z_{k,\tau}})$ 
is the characteristic function of~$Z_{k,\tau}$ under the new measure~$\widetilde{\QQ}_{k,\tau}$ defined by
$\frac{\D\widetilde{\QQ}_{k,\tau}}{\D\PP}:=\exp\left( u^*(k)X_{\tau}^{(t)}-\tau \Lambda_{\tau}^{(t)}(u^*(k)\right)$.
Using Lemma~\ref{lem:fwdmgflargetauexpansion}, the proofs of the option price
and the forward smile expansions are similar to those of~\cite[Theorem 2.4 and Proposition 2.12]{JR12}
and~\cite[Proposition 3.5]{JR12}.
The exact representation of the set $(\underline{k},\overline{k})$ follows from 
the definition of $\mathcal{D}_{\infty}$ in Table~\ref{eq:DInfinityLargeMaturity} and the properties of~$V$.
\end{proof}

\subsection{Other cases: general methodology}\label{sec:Genmethlargetime}
Suppose that $\overline{k}$ (defined in Section~\ref{sec:ProofConvex}) is finite with $V'(\overline{u})=\overline{k}$.
We cannot define a change of measure (as in the proof of Lemma~\ref{lem:fwdstoptsteep}) by simply replacing $u^*(k)\equiv\overline{u}$ for $k\geq\overline{k}$ since the forward  lmgf
$\Lambda_{\tau}^{(t)}$ explodes at these points as $\tau$ tends to infinity (see Figure~\ref{fig:mgfexplosion}).
\begin{figure}[h!tb] 
\centering
\includegraphics[scale=0.7]{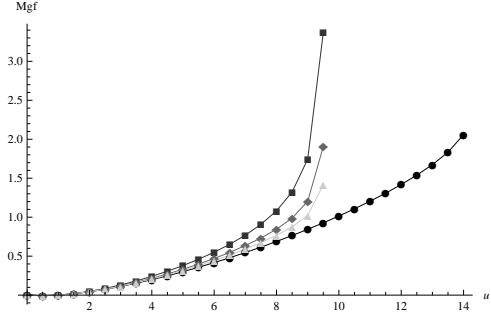}
\caption{
Regime $\Rf_2$: Circles  plot $u\mapsto V(u)$. 
Squares, diamonds and triangles  plot $u\mapsto V(u)+H(u)/\tau$ with $t=1$ and $\tau=2,5,10$. 
Heston model parameters are $v=0.07$, $\theta=0.07$, $\rho=-0.8$, $\xi=0.65$ and $\kappa=1.5$. 
Also $\rho_{-}\approx -0.56$, $u^*_{+}\approx 9.72$ and $u_{+}\approx 14.12$. 
}
\label{fig:mgfexplosion}
\end{figure}
One of the objectives of the analysis is to understand the explosion rate of the forward lmgf at these boundary points.
The key observation is that just before infinity, the forward lmgf 
$\Lambda_{\tau}^{(t)}$ is still steep on $\mathcal{D}_{t,\tau}^{o}$, 
and an analogous measure change to the one above can be constructed.
We therefore introduce the time-dependent change of measure
\begin{equation}\label{eq:MeasureChange}
\frac{\D\QQ_{k,\tau}}{\D\PP}:=\exp\left( u^*_{\tau}(k)X_{\tau}^{(t)}-\tau\Lambda^{(t)}_{\tau}(u^*_{\tau}(k))\right),
\end{equation}
where $u^*_{\tau}(k)$ is the unique solution to the equation $\partial_{u}\Lambda^{(t)}_{\tau}(u_{\tau}^*(k))=k$ for $k\geq \overline{k}$.
We shall also require that there exists $\tau_1>0$ such that $u^*_{\tau}(k)\in\mathcal{D}_{\infty}^{o}$ 
for all $\tau>\tau_1$ and $u^*_{\tau}\uparrow\overline{u}$;
therefore Lemma~\ref{lem:fwdmgflargetauexpansion} holds, 
and we can ignore the exponential remainder ($d(u)>0$ for all $u\in\mathcal{D}_{\infty}^{o}$) so that the equation $\partial_{u}\Lambda^{(t)}_{\tau}(u_{\tau}^*(k))=k$ reduces to \footnote{A similar analysis can be conducted even if $u^*_{\tau}(k)$ is not eventually in the interior of the limiting domain, but then one will need to use the full lmgf (not just the expansion) in~\eqref{eq:u^*tau}. }
\begin{equation}\label{eq:u^*tau}
V'\left(u^*_\tau(k)\right)+\tau^{-1}H'\left(u^*_\tau(k)\right)=k.
\end{equation}
In the analysis below, we will also require $u^*_{\tau}(k)$ to solve~\eqref{eq:u^*tau} 
and to converge to other points in the domain (not only boundary points).
This will be required to derive asymptotics under~$\Hh_0$ for the strikes~$V'(0)$ and~$V'(1)$, where there are no moment explosion issues but rather issues with the non-existence of the limiting Fourier transform (see Section~\ref{sec:V0V1limFT} for details).
We therefore make the following assumption:

\begin{assumption}\label{assump:ustartaueq}
There exists $\tau_1>0$ and a set $\mathcal{A}\subseteq\RR$ such that for all $\tau>\tau_1$ and 
$k\in\mathcal{A}$, 
Equation~\eqref{eq:u^*tau} admits a unique solution $u^*_\tau (k)$ on $\mathcal{D}_{\infty}^{o}$ 
satisfying
$\lim_{\tau\uparrow\infty}u^*_\tau(k) =  u^*_{\infty}\in\overline{\mathcal{D}_{\infty}}\cap (u_-,u_+)$.
\end{assumption}
Under this assumption
 $|\Lambda^{(t)}_{\tau}(u^*_\tau(k))|$ is finite for $\tau>\tau_1$ and $\mathcal{D}_{\infty}=\lim_{\tau\uparrow \infty} \{u\in\RR:|\Lambda_{\tau}^{(t)}(u)|<\infty\}$. 
Also $\D\QQ_{k,\tau}/\D\PP$ is almost surely strictly positive and by definition $\EE[\D\QQ_{k,\tau}/\D\PP]=1$. 
Therefore~\eqref{eq:MeasureChange} is a valid measure change for sufficiently large $\tau$ and all $k\in\mathcal{A}$.

Our next objective is to prove weak convergence of a rescaled version of the forward price process $(X_{\tau}^{(t)})_{\tau>0}$ under this new measure. 
To this end define the random variable 
$Z_{\tau,k,\alpha}:=(X_{\tau}^{(t)} - k\tau)/\tau^{\alpha}\label{eq:ztaukalpha}$
for $k\in\mathcal{A}$ and some $\alpha>0$,
with characteristic function 
$\Phi_{\tau,k,\alpha}:\RR\to\mathbb{C}$ under $\QQ_{k,\tau}$:
\begin{equation}\label{eq:CharacNonSteep}
\Phi_{\tau,k,\alpha}(u):=\EE^{\QQ_{k,\tau}}\left(\E^{\I u Z_{\tau,k,\alpha}}\right).
\end{equation}
Define now the functions $D:\RR_{+}^*\times\mathcal{A}\to\RR$ and $F:\RR^*_{+}\times \mathcal{A}\times\RR^*_{+}\to\RR$ by
\begin{equation}\label{eq:DF}
D(\tau,k):=\exp\Big[{-\tau\Big(k (u^*_{\tau}(k)-1)-V(u^*_{\tau}(k))\Big)}+H(u^*_{\tau}(k))\Big], \quad
F(\tau,k,\alpha):= \frac{1}{2\pi}\int_{\RR} \Phi_{\tau,k,\alpha}(u)\overline{C_{\tau,k,\alpha}(u)} \D u,
\end{equation}
where $\overline{C_{\tau,k,\alpha}(u)}$ denotes the complex conjugate of $C_{\tau,k,\alpha}$ in~\eqref{eq:Cdeflargetime}, namely:
\begin{equation}\label{eq:CConj}
\overline{C_{\tau,k,\alpha}(u)} =\frac{\tau^{\alpha}}{(u-\I \tau^{\alpha}(u^*_{\tau}-1)(u-\I \tau^{\alpha}u^*_{\tau})}.
\end{equation}
The main result here (proved in Appendix~\ref{sec:prooflargematfourtransf}) 
is an asymptotic representation for forward-start option prices:
\begin{lemma}\label{lem:nonsteeplem}
Under Assumption~\ref{assump:ustartaueq}, there exists $\beta>0$ such that for all $k\in\mathcal{A}$, 
as $\tau\uparrow\infty$:
\begin{align}
\EE\left(\E^{X_{\tau}^{(t)}}-\E^{k\tau}\right)^{+}
= \begin{dcases*}
       D(\tau,k) F(\tau,k,\alpha)\left(1+\mathcal{O}(\E^{-\beta\tau})\right),  & if  $u_{\tau}^*(k)>1$,\\
       (1-\E^{k\tau})+ D(\tau,k) F(\tau,k,\alpha)\left(1+\mathcal{O}(\E^{-\beta\tau})\right),  & if  $u_{\tau}^*(k)<0$, \\
  1+D(\tau,k) F(\tau,k,\alpha) \left(1+\mathcal{O}(\E^{-\beta\tau})\right),  & if $0<u_{\tau}^*(k)<1. $
        \end{dcases*}
\end{align}
\end{lemma}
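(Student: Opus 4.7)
My strategy is to represent the call payoff through an inverse Laplace integral along a $\tau$-dependent contour $\Re(z) = u_\tau^*(k)$, take $\PP$-expectations, and then use the measure change~\eqref{eq:MeasureChange} together with Lemma~\ref{lem:fwdmgflargetauexpansion} to identify the resulting expression with $D(\tau,k) F(\tau,k,\alpha)$ plus an exponentially small relative remainder.

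First, I would start from the classical Bromwich identity, valid for any $X \in \RR$ and $c > 1$,
\begin{equation*}
(\E^X - \E^{k\tau})^+ = \frac{1}{2\pi\I} \int_{c-\I\infty}^{c+\I\infty} \E^{zX + k\tau(1-z)} \frac{\D z}{z(z-1)},
\end{equation*}
and deform the contour onto the vertical line $\Re(z) = u_\tau^*(k)$. The residue theorem then picks up $\E^X$ at $z=1$ if $u_\tau^*(k) < 1$, and an additional $-\E^{k\tau}$ at $z=0$ if $u_\tau^*(k) < 0$. Taking $\PP$-expectations and using the martingale identity $\EE[\E^{X_\tau^{(t)}}] = 1$ (\cite[Proposition 2.5]{AP07}) exactly reproduces the three intrinsic-value terms $0$, $1$ and $1 - \E^{k\tau}$ of the lemma, and leaves the contour integral
\begin{equation*}
\frac{1}{2\pi} \int_\RR \frac{\E^{\tau \Lambda_\tau^{(t)}(u_\tau^* + \I v) + k\tau(1 - u_\tau^* - \I v)}}{(u_\tau^* + \I v)(u_\tau^* + \I v - 1)} \, \D v.
\end{equation*}

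Second, I would rescale $v = u/\tau^\alpha$ and factor out the saddle value $\E^{\tau \Lambda_\tau^{(t)}(u_\tau^*) + k\tau(1 - u_\tau^*)}$. The exponential ratio left inside the integrand is exactly the characteristic function $\Phi_{\tau,k,\alpha}(u)$ of~\eqref{eq:CharacNonSteep}, since by definition of $\QQ_{k,\tau}$ in~\eqref{eq:MeasureChange} one checks
\begin{equation*}
\Phi_{\tau,k,\alpha}(u) = \exp\Bigl(\tau\bigl[\Lambda_\tau^{(t)}(u_\tau^* + \I u/\tau^\alpha) - \Lambda_\tau^{(t)}(u_\tau^*)\bigr] - \I u k \tau^{1-\alpha}\Bigr).
\end{equation*}
Rewriting $u_\tau^* + \I u/\tau^\alpha = (\I/\tau^\alpha)(u - \I \tau^\alpha u_\tau^*)$ and likewise for the shifted factor, the rational denominator becomes (up to the sign convention in the definition of $C_{\tau,k,\alpha}$) a scalar multiple of $\overline{C_{\tau,k,\alpha}(u)}$ in~\eqref{eq:CConj}, so that the integral collapses exactly to $F(\tau,k,\alpha)$ of~\eqref{eq:DF}. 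Finally, since Assumption~\ref{assump:ustartaueq} places $u_\tau^*$ in $\mathcal{D}_\infty^o$ for $\tau > \tau_1$ and guarantees $u_\tau^* \to u_\infty^* \in (u_-, u_+)$, Lemma~\ref{lem:fwdmgflargetauexpansion} yields $\tau \Lambda_\tau^{(t)}(u_\tau^*) = \tau V(u_\tau^*) + H(u_\tau^*) + \mathcal{O}(\E^{-\beta \tau})$ for any $0 < \beta < d(u_\infty^*)$; multiplying by $\E^{k\tau(1 - u_\tau^*)}$ produces the announced prefactor $D(\tau,k)\bigl(1 + \mathcal{O}(\E^{-\beta \tau})\bigr)$.

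The main obstacle is the rigorous justification of the contour deformation and the associated Fubini swap, uniformly in $\tau$: one must verify (i) that the vertical line $\Re(z) = u_\tau^*(k)$ stays inside the effective domain $\mathcal{D}_{t,\tau}^o$ for all $\tau > \tau_1$, (ii) that no zero of $1 - 2\beta_t B(\,\cdot\,, \tau)$ from~\eqref{eq:LambdaTau} crosses this contour as $\tau$ grows, and (iii) that $v \mapsto |\EE[\E^{(u_\tau^* + \I v) X_\tau^{(t)}}]|$ decays rapidly enough in $v$ to be absolutely integrable against the $O(v^{-2})$ denominator. All three points can be extracted from the explicit formula~\eqref{eq:LambdaTau} and from the $\cosh/\sinh$ growth of $B(\,\cdot\,, \tau)$ in the imaginary direction, but the uniformity in $\tau$ is where the scaling exponent $\alpha$ must be calibrated to each regime so that $\Phi_{\tau,k,\alpha}$ admits a non-degenerate weak limit, as anticipated in Section~\ref{sec:LDP} and carried out case-by-case in Sections~\ref{sec:asymmetricproofs} and~\ref{sec:V0V1limFT}.
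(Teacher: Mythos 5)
Your Bromwich-contour argument is correct in outline and reaches the same three-case decomposition and the same factorisation $D(\tau,k)F(\tau,k,\alpha)(1+\mathcal{O}(\E^{-\beta\tau}))$, but it is organised differently from the paper. The paper never deforms a contour in the complex plane: it defines three payoff functions $g_1,g_2,g_3$ (call, put, and $\min$), introduces the exponentially-damped versions $\widetilde{g}_j(z):=\E^{-u_\tau^*(k)z\tau^\alpha}g_j(\E^{z\tau^\alpha},1)$, computes $\mathcal{F}\widetilde{g}_j$ explicitly on the strip of $u_\tau^*(k)$ where $\widetilde{g}_j\in L^1(\RR)$, and then invokes the convolution theorem and Rudin's Fourier inversion theorem (backed by the $L^1$ bound of Lemma~\ref{lem:L1lem}) to obtain $\EE^{\QQ_{k,\tau}}[\widetilde{g}_j(Z_{\tau,k,\alpha})]=\pm F(\tau,k,\alpha)$, after which put-call parity and the martingale property supply the intrinsic-value constants. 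Your approach instead picks up the intrinsic-value constants as residues at $z=0$ and $z=1$ when the Bromwich line is moved to $\Re(z)=u_\tau^*(k)$, and identifies the remaining contour integral with $F$ after the rescaling $v=u\tau^{-\alpha}$. These are two faces of the same Fourier inversion; yours is slicker if one is comfortable with complex-analytic contour shifting, while the paper's avoids any analyticity question by working entirely on the real line under the tilted measure, with the $L^1$ estimate replacing your points (i)--(iii).

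Two remarks on what is not yet closed in your proposal. First, the obstacles you list at the end are real and are precisely what the paper's Lemmas~\ref{lem:L1lem} and~\ref{lem:optpricerepnonsteep} settle: the justification of the Fubini swap together with the absolute integrability of $\Phi_{\tau,k,\alpha}\overline{C_{\tau,k,\alpha}}$ is the rigorous core of the argument, so a complete write-up would need to supply the analogues of those two lemmas (the simple bound $|\Phi_{\tau,k,\alpha}|\le1$ together with the $\mathcal{O}(u^{-2})$ decay of $\overline{C_{\tau,k,\alpha}}$ does the job, as in Lemma~\ref{lem:L1lem}). Second, you are right to hedge with ``up to the sign convention in the definition of $C_{\tau,k,\alpha}$'': a careful bookkeeping of the factor $(\I/\tau^\alpha)^2=-\tau^{-2\alpha}$ shows the kernel comes out as $-\overline{C_{\tau,k,\alpha}(u)}$ under the definition~\eqref{eq:Cdeflargetime} as printed (the same sign ambiguity is present in the paper's own Fourier-transform computation of $\widetilde{g}_1$), but this is a typographical issue that does not affect the final formulae, which are checked against Lemma~\ref{lemma:FourierAsymptotics}.
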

We shall also need the following result on the behaviour of the characteristic function of~$Z_{\tau, k, \alpha}$
\begin{lemma}\label{lem:charactsimp}
Under Assumption~\ref{assump:ustartaueq} there exists $\beta>0$ such that for any $k\in\mathcal{A}$ as $\tau\uparrow\infty$:
$$
\Phi_{\tau,k,\alpha}(u) = \exp\Big(
-\I u k \tau^{1-\alpha} + 
\tau\left(V\left(\I u \tau^{-\alpha} + u^*_{\tau}\right) - V\left(u^*_{\tau}\right)\right)
 + H\left(\I u \tau^{-\alpha} + u^*_{\tau}\right) - H\left( u^*_{\tau}\right)
 \Big)
\left(1+\mathcal{O}(\E^{-\beta\tau})\right),
$$
where the remainder is uniform in $u$.
\end{lemma}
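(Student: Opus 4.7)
The plan is a direct computation: express $\Phi_{\tau,k,\alpha}(u)$ as a ratio of two (shifted) moment generating functions of $X^{(t)}_\tau$, and then apply Lemma~\ref{lem:fwdmgflargetauexpansion} separately to each. Substituting the Radon-Nikodym derivative~\eqref{eq:MeasureChange} into the definition~\eqref{eq:CharacNonSteep} and rearranging gives
\begin{align*}
\Phi_{\tau,k,\alpha}(u)
&= \EE\left[\exp\!\left(\I u\,\frac{X^{(t)}_\tau - k\tau}{\tau^{\alpha}} + u^*_\tau(k)\, X^{(t)}_\tau - \tau\Lambda^{(t)}_\tau(u^*_\tau(k))\right)\right]\\
&= \exp\!\left(-\I u k \tau^{1-\alpha} - \tau\Lambda^{(t)}_\tau(u^*_\tau(k))\right)\,\EE\!\left[\E^{\left(\I u \tau^{-\alpha} + u^*_\tau(k)\right) X^{(t)}_\tau}\right].
\end{align*}
By Assumption~\ref{assump:ustartaueq}, $u^*_\tau(k) \in \mathcal{D}^o_\infty \subset \mathcal{D}^o_{t,\tau}$ for all $\tau>\tau_1$, so the explicit formula~\eqref{eq:LambdaTau} defines an analytic continuation of $\Lambda^{(t)}_\tau$ to the vertical strip $\{z\in\mathbb{C} : \Re(z) \in \mathcal{D}^o_{t,\tau}\}$. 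The remaining expectation is therefore finite and equal to $\exp(\tau\Lambda^{(t)}_\tau(\I u \tau^{-\alpha} + u^*_\tau(k)))$, whence
\begin{equation*}
\Phi_{\tau,k,\alpha}(u)
= \exp\!\left(-\I u k \tau^{1-\alpha} + \tau\!\left[\Lambda^{(t)}_\tau\bigl(\I u \tau^{-\alpha} + u^*_\tau(k)\bigr) - \Lambda^{(t)}_\tau\bigl(u^*_\tau(k)\bigr)\right]\right).
\end{equation*}

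Next I would invoke (the complex extension of) Lemma~\ref{lem:fwdmgflargetauexpansion} at the two points $w = u^*_\tau(k)$ and $w = \I u \tau^{-\alpha} + u^*_\tau(k)$ to write $\tau\Lambda^{(t)}_\tau(w) = \tau V(w) + H(w) + \mathcal{O}(\E^{-\Re(d(w))\tau})$, subtract these expansions, and exponentiate; the additive remainder then becomes the multiplicative $1 + \mathcal{O}(\E^{-\beta\tau})$ of the claim. The analytic extension of Lemma~\ref{lem:fwdmgflargetauexpansion} along the vertical line $u^*_\tau(k) + \I\RR/\tau^\alpha$ is legitimate since $\Re(z) = u^*_\tau(k) \in (u_-,u_+)$ avoids the branch points $u_\pm$ of $d$, and the denominators $1 - 2\beta_t B(z,\tau)$ and $1 - \gamma(z)\E^{-d(z)\tau}$ in~\eqref{eq:LambdaTau} remain bounded away from zero for $\tau$ large (after possibly enlarging $\tau_1$).

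The delicate point is obtaining the $\mathcal{O}$-bound \emph{uniformly} in $u\in\RR$, since $z := u^*_\tau(k) + \I u \tau^{-\alpha}$ traces an unbounded vertical line. I would handle this via the algebraic identity
\begin{equation*}
\Re\!\bigl(d(z)^2\bigr) = d\bigl(u^*_\tau(k)\bigr)^2 + \xi^2(1-\rho^2)\,u^2 \tau^{-2\alpha}
\;\geq\; d\bigl(u^*_\tau(k)\bigr)^2,
\end{equation*}
which follows by direct substitution into the polynomial $d(\cdot)^2 = \kappa^2 + (\xi^2-2\kappa\rho\xi)\,\cdot\, -\,\xi^2(1-\rho^2)\,(\cdot)^2$. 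Choosing the principal branch of $d$ (positive on $(u_-,u_+)$), the elementary inequality $\Re(\sqrt{w}) \geq \sqrt{\Re(w)}$ (valid for any $w\in\mathbb{C}$ with $\Re(w)\geq 0$) then yields $\Re(d(z)) \geq d(u^*_\tau(k))$. Since $u^*_\infty \in (u_-,u_+)$ and $d>0$ on this open interval, continuity gives $d(u^*_\tau(k)) \geq \beta$ for some $\beta>0$ and all $\tau$ large enough; combining these ingredients produces the required $\mathcal{O}(\E^{-\beta\tau})$ bound uniformly in $u \in \RR$. A parallel estimate on the denominators governing the $H$-terms via~\eqref{eq:VandH}--\eqref{eq:DGammaBeta} handles the residual uniform control.
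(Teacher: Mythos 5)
Your proof follows the paper's line of attack almost exactly: same change-of-measure rewrite, same appeal to the complex extension of Lemma~\ref{lem:fwdmgflargetauexpansion}, and the same key inequality $\Re d(\I u\tau^{-\alpha}+a)\geq d(a)$ to make the remainder decay uniformly in~$u$. What you do differently, and genuinely better, is to actually prove that inequality rather than invoking it by analogy: the algebraic identity $\Re\bigl(d(z)^2\bigr)=d(\Re z)^2+\xi^2(1-\rho^2)(\Im z)^2$ plus the elementary bound $\Re\sqrt{w}\geq\sqrt{\Re w}$ (valid for $\Re w\geq 0$ with the principal branch) is exactly the missing ingredient behind the paper's terse reference to ``analogous arguments to Lemma~8.2(iii).'' You also replace the paper's concavity argument for the lower bound $d(u^*_\tau(k))\geq\beta$ with a direct continuity argument from $u^*_\tau(k)\to u^*_\infty\in(u_-,u_+)$ and $d>0$ there; this is cleaner, since the paper's version only records that $u^*_\tau(k)$ stays away from $u_+$, whereas concavity with $d(u_\pm)=0$ requires boundedness away from \emph{both} endpoints, an omission your route sidesteps. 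The only caveat, which applies equally to the paper, is that passing from the additive remainder $H(\cdot)\,\mathcal O(\E^{-d(\cdot)\tau})$ in Lemma~\ref{lem:fwdmgflargetauexpansion} to a multiplicative $1+\mathcal O(\E^{-\beta\tau})$ uniformly in $u$ still requires controlling the growth of $H$ along the vertical line; you flag this (``parallel estimate on the denominators governing the $H$-terms'') at the same level of detail as the paper's ``tedious yet non-technical computations,'' so no gap is introduced.
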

\begin{proof}
Fix $k\in\mathcal{A}$.
Analogous arguments to Lemma~\ref{lem:largetimesaddlefwdmgf}(iii) yield that 
$\Re d\left(\I u \tau^{-\alpha}+a\right)>d(a)$ for any $a\in\mathcal{D}_{\infty}^o$.
Assumption~\ref{assump:ustartaueq} implies that for all $\tau>\tau_1$, 
$\Re d\left(\I u \tau^{-\alpha}+u_{\tau}^*(k)\right)>d(u_{\tau}^*(k))$. 
It also implies that $u^*_\infty < u_+$, and hence there exists $\delta>0$ and $\tau_2>0$ such that $u_{\tau}^*(k)<u_{+}-\delta$
for all $\tau>\tau_2$.
Now, since $d$ is strictly positive and concave on $(u_{-},u_{+})$ and $d(u_{-})=d(u_{+})=0$,
we obtain $d(u_{\tau}^*(k))>d(u_{+}-\delta)>0$.
This implies that the quantities 
$\mathcal{O}\left(\exp\left[-d\left(\frac{\I u}{\tau^{\alpha}}+u_{\tau}^*(k)\right))\tau\right]\right)$ 
and 
$\mathcal{O}\left(\E^{-d(u_{\tau}^*(k))\tau}\right)$
are all equal to 
$\mathcal{O}\left(\E^{-d(u_{+}-\delta)\tau}\right)$ for all $k\in\mathcal{A}$.
Using the definition of $Z_{\tau,k,\alpha}$, the change of measure~\eqref{eq:MeasureChange} 
and Lemmas~\ref{lem:fwdmgflargetauexpansion} and~\ref{lem:largetimesaddlefwdmgf}, we can write
\begin{align*}
\log\Phi_{\tau,k,\alpha}(u)
&=\log \EE^{\QQ_{k,\tau}}\left[\E^{\I u Z_{\tau,k,\alpha}}\right]
=\log \EE\left[
\exp\left(u^*_{\tau}X_{\tau}-\tau\Lambda_{\tau}^{(t)}\left(u^*_{\tau}\right)
 + \frac{\I u}{\tau^{\alpha}}\left(X_{\tau} - k\tau\right)\right)\right] \\
&=-\I u k \tau^{1-\alpha} +\tau\left(\Lambda_{\tau}^{(t)}\left(\I u /\tau^{\alpha} + u^*_{\tau}\right)-\Lambda_{\tau}^{(t)}\left( u^*_{\tau}\right)\right) \\
&=-\frac{\I u k}{\tau^{\alpha-1}}
 +\tau\left[V\left(\frac{\I u}{\tau^{\alpha}} + u^*_{\tau}\right) - V(u^*_{\tau})\right]
 + H\left(\frac{\I u}{\tau^{\alpha}} + u^*_{\tau}\right)-H\left( u^*_{\tau}\right)
 + \mathcal{O}\left[\E^{-d\left(\I u \tau^{-\alpha}+u_{\tau}^*\right)\tau}\right]
 - \mathcal{O}\left(\E^{-d(u_{\tau}^*)}\tau\right)\\
&= -\I u k \tau^{1-\alpha} +\tau\left(V\left(\I u/ \tau^{\alpha} + u^*_{\tau}\right)-V\left( u^*_{\tau}\right)\right)+H\left(\I u /\tau^{\alpha} + u^*_{\tau}\right)-H\left( u^*_{\tau}\right)
+\mathcal{O}\left(\E^{-d(u_{+}-\delta)\tau}\right).
\end{align*}
Since $d(u_{+}-\delta)>0$ the remainder tends to zero exponentially fast as $\tau$ tends to infinity.
The uniformity of the remainder follows from tedious, yet non-technical, computations 
showing that the absolute value of the difference between $\log\Phi_{\tau,k,\alpha}(u)$ 
and its approximation is bounded by a constant independent of $u$ as $\tau$ tends to infinity.
\end{proof}

\subsection{Asymptotics in the case of extreme limiting moment explosions}\label{sec:asymmetricproofs}

We consider now the cases $\Hh_\pm$, $\widetilde{\Hh}_\pm$ and $\Hh_2$, 
corresponding to the limiting lmgf~$V$ being linear.
\begin{lemma} \label{lemma:u^*tau} 
Assumption~\ref{assump:ustartaueq} is verified in the following cases:
\begin{enumerate}[(i)]
\item
$\Rf_2$ with $\mathcal{A}=[V'(u^*_{+}),\infty)$ and $u_{\infty}^*=u_{+}^*$;
\item
$\Rf_{3a}$ and $\Rf_{3b}$ with $\mathcal{A}=(-\infty,V'(u^*_{-})]$ and $u_{\infty}^*=u_{-}^*$.
\item
$\Rf_{3b}$ and $\Rf_4$ with $\mathcal{A}=(V'(1),\infty]$ and $u_{\infty}^*=1$.
\end{enumerate}
\end{lemma}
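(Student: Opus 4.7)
The plan is to verify in each case that the approximate saddle-point equation $V'(u)+\tau^{-1}H'(u)=k$ admits a unique solution $u_\tau^*(k)\in\mathcal{D}_\infty^o$ for all sufficiently large $\tau$, and that this solution converges to the stated boundary point $u_\infty^*$. The strategy is identical in all three cases: combine the strict monotonicity of $V'$ on $(u_-,u_+)$ with a careful analysis of the blow-up of $H'$ at the relevant boundary of $\mathcal{D}_\infty^o$, then apply the intermediate value theorem together with strict convexity of $\tau\Lambda_\tau^{(t)}$ (which, being the logarithm of a moment generating function, is strictly convex on $\mathcal{D}_{t,\tau}^o$) for uniqueness.

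For cases (i) and (ii), the boundary point $u_\pm^*$ is characterised by $\kappa\theta-2\beta_tV(u_\pm^*)=0$ with $V(u_\pm^*)=\kappa\theta/(2\beta_t)>0$. Inspecting~\eqref{eq:VandH}, both the rational term and the logarithmic term of $H$ diverge as $u\to u_\pm^*$, and differentiating shows $H'(u)\to+\infty$ as $u\uparrow u_+^*$ and $H'(u)\to-\infty$ as $u\downarrow u_-^*$, with singular rate of order $(\kappa\theta-2\beta_tV(u))^{-2}$. Since $V'$ is continuous and strictly increasing on the closure of $\mathcal{D}_\infty^o$, the map $u\mapsto V'(u)+\tau^{-1}H'(u)$ is eventually strictly increasing on $\mathcal{D}_\infty^o$ (strict convexity of $V+\tau^{-1}H$ being inherited from that of $\tau\Lambda_\tau^{(t)}$ via the exponentially small remainder in Lemma~\ref{lem:fwdmgflargetauexpansion}), and its range covers $[V'(u_+^*),+\infty)$ in case (i) and $(-\infty,V'(u_-^*)]$ in case (ii) once $\tau$ is large enough. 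Existence and uniqueness of $u_\tau^*(k)$ follow, and the convergence $u_\tau^*(k)\to u_\pm^*$ is obtained by contradiction: any subsequential limit $\tilde u\neq u_\pm^*$ with $\tilde u\in(u_-,u_+)$ would force $V'(\tilde u)=k$, contradicting $k\geq V'(u_+^*)$ in case (i) or $k\leq V'(u_-^*)$ in case (ii).

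For case (iii), where $\rho>\kappa/\xi$, the point $u=1$ lies strictly inside $(u_-,u_+)$, so $V'$ remains smooth there. The blow-up of $H'$ now originates from the logarithmic term in~\eqref{eq:VandH}: at $u=1$ one has $d(1)=|\kappa-\rho\xi|=\rho\xi-\kappa$ and hence $\kappa-\rho\xi+d(1)=0$, so $\gamma(u)\to-\infty$ as $u\uparrow 1$, with $\kappa-\rho\xi u+d(u)\sim c(1-u)$ for some $c>0$. Consequently $\log(1-\gamma(u))\sim-\log(1-u)$, $H(u)\to+\infty$ and $H'(u)\sim\mu/(1-u)\to+\infty$ as $u\uparrow 1$, while the rational part of $H$ stays bounded since $\kappa\theta-2\beta_tV(1)>0$ in this regime. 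The same intermediate value / convexity / contradiction argument as above then delivers a unique $u_\tau^*(k)\in\mathcal{D}_\infty^o$ for $k\in(V'(1),+\infty)$ and $\tau$ sufficiently large, with $u_\tau^*(k)\uparrow 1$.

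The main technical obstacle is the explicit extraction of the blow-up rates of $H'$ at $u_\pm^*$ and at $u=1$: although the required asymptotics are elementary given the closed-form expressions for $V$, $\gamma$ and $d$, differentiating~\eqref{eq:VandH} and correctly identifying the dominant singular contribution is tedious, and one must take care that the sign of the singularity is compatible with the target value of $k$. Once these asymptotics are in hand, the remaining pieces---existence via intermediate values, uniqueness via strict convexity, and convergence of the saddle via a compactness and limit argument---are essentially routine.
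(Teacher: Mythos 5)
Your proposal is correct and follows essentially the same strategy as the paper: rewrite the saddle-point equation as $H'(u)/\tau = k - V'(u)$, exploit the blow-up of $H'$ at the relevant boundary point ($u_\pm^*$ where $\kappa\theta - 2\beta_t V = 0$, or $u=1$ where $\gamma$ diverges through the log term), and combine with strict monotonicity of $V'$ to get existence and uniqueness, then argue convergence to $u_\infty^*$ by showing any other limit would contradict $V'(\tilde u) = k$. Your explicit extraction of blow-up rates ($H'\sim C(\kappa\theta-2\beta_t V(u))^{-2}$ at $u_\pm^*$, $H'\sim \mu/(1-u)$ at $u=1$ in $\Rf_{3b},\Rf_4$) is what the paper relegates to "tedious yet straightforward" computations illustrated by a figure.

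The two minor differences worth noting: for uniqueness, the paper isolates two concrete properties of $H'$ (a unique zero $\bar u\in(0,1)$ and strict monotonicity of $H'$ on $(\bar u, u_+^*)$), which localises $u_\tau^*(k)$ to $(\bar u, u_+^*)$ directly, whereas you appeal to convexity of $V+\tau^{-1}H$ inherited from the log-MGF $\Lambda_\tau^{(t)}$ through the exponentially small remainder of Lemma~\ref{lem:fwdmgflargetauexpansion} — plausible but slightly less self-contained, since one must argue that the remainder's second derivative is controlled on compact subsets of $\mathcal{D}_\infty^o$. For convergence, the paper shows $u_\tau^*(k)$ is strictly increasing in $\tau$ (because $H'>0$ and $H',V'$ both increasing on the localised interval) and bounded by $u_+^*$, yielding convergence without invoking compactness; your subsequential argument reaches the same conclusion but should also rule out the subsequential limit $u_-$ at the other boundary, which the paper's localisation handles automatically. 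These are cosmetic variations — the central mechanism (blow-up of $H'$ plus strict convexity of $V$) is identical.
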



\begin{proof}
Consider Case~(i) and re-write~\eqref{eq:u^*tau} as $H'(u^*_{\tau}(k))/\tau=k-V'(u^*_{\tau}(k))$. 
Let $k\geq V'(u_{+}^*)$;
since $V$ is strictly convex on $(u_{-},u_{+})$,
we have  $H'(u^*_{\tau}(k))/\tau=k-V'(u^*_{\tau}(k))\geq V'(u_{+}^*)-V'(u^*_{\tau}(k))>0$.
We now show that~$H'$ has the necessary properties to prove the lemma.
The following statements can be proven in a tedious yet straightforward manner 
(Figure~\ref{fig:HPrimeLargeMatAsymmetric} provides a visual help):
\begin{enumerate}[(i)]
\item  On $(0,u^*_{+})$ there exists a unique $\bar{u}\in(0,1)$ such that $H'(\bar{u})=0$;
\item  $H':(\bar{u},u^*_{+})\to\RR$ is strictly increasing and tends to infinity at $u^*_+$.
\end{enumerate}
Therefore (i) and (ii) imply that a unique solution to~\eqref{eq:u^*tau} exists satisfying the conditions of the lemma with $u^*_{\tau}(k)\in(\bar{u},u^*_{+})$.
The function $H'$ is strictly positive on $(\bar{u},u^*_{+})$, and hence for large enough $\tau$, 
$u^*_{\tau}(k)$ is strictly increasing and bounded above by $u^*_{+}$, 
and therefore converges to a limit $L\in [\bar{u},u^*_{+}]$. 
If $L\in  [\bar{u},u^*_{+})$, then the continuity of $V'$ and $H'$ and the strict convexity of $V$ implies that
$\lim_{\tau\uparrow \infty}V'(u^*_{\tau}(k))+H'(u^*_{\tau}(k))/\tau=V'(L)<V'(u^*_{+})\leq k$, 
which is a contradiction. 
Therefore $L=u^*_{+}$, which proves Case~(i). 
Cases~(ii) and~(iii) are analogous, and the lemma follows.
\end{proof}

\begin{figure}[h!tb] 
\centering
\mbox{\subfigure[]{\includegraphics[scale=0.7]{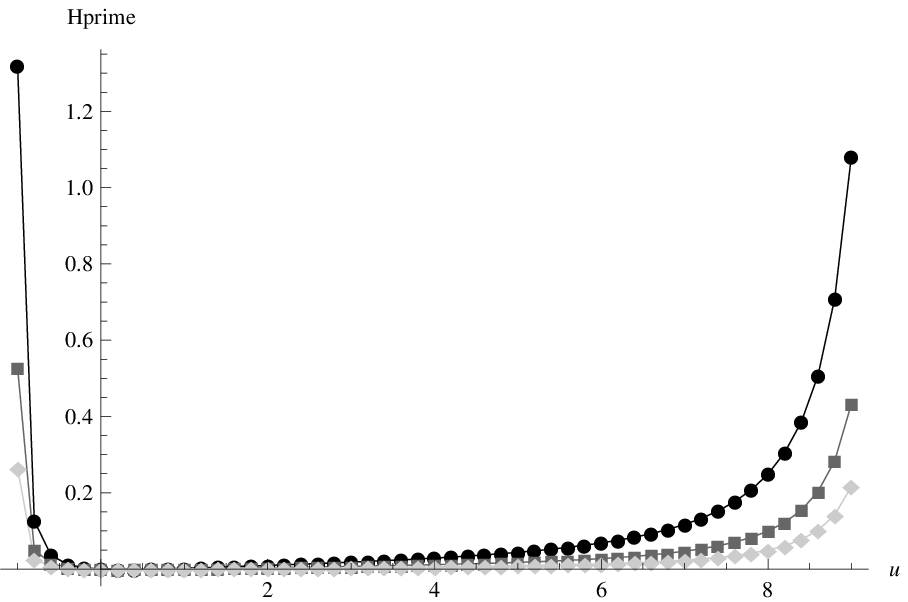}}\quad
\subfigure[]{\includegraphics[scale=0.7]{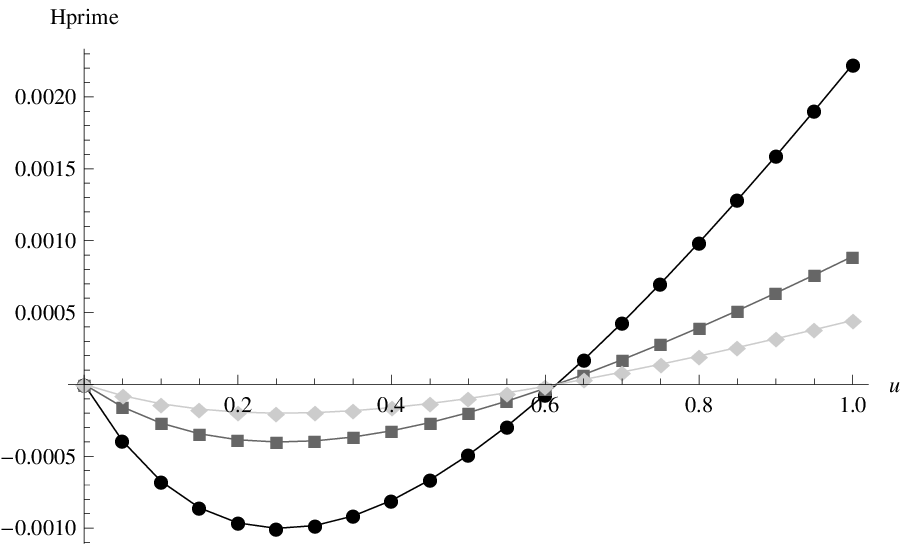}}}
\caption{Plot of $u\mapsto H'(u)/\tau$ for different values of $\tau$.
Circles, Squares and diamonds represent $\tau=2,5,10$.
In (a) $u\in(-1.05,9.72)$ and in (b) $u\in(0,1)$.
The Heston parameters are $v=0.07$, $\theta=0.07$, $\rho=-0.8$, $\xi=0.65$ and $\kappa=1.5$. 
Also $t=1$, $\rho_{-}=-0.56$, $u^*_{+}=9.72$ and $u_{-}=-1.05$. }
\label{fig:HPrimeLargeMatAsymmetric}
\end{figure}

In the following lemma we derive an asymptotic expansion for~$u^*_{\tau}(k)$.
This key result will allow us to derive asymptotics for the characteristic function~$\Phi_{\tau,k,\alpha}$ as well as other auxiliary quantities needed in the analysis.

\begin{lemma}\label{lemma:a-dynamics}
The following expansions hold for $u^*_\tau(k)$ as $\tau$ tends to infinity:
\begin{enumerate}[(i)]
\item
In Regimes $\Rf_2$, $\Rf_{3a}$ and $\Rf_{3b}$,
\begin{enumerate}[(a)]
\item under $\Hh_{\pm}$: 
$
u^*_\tau(k) = u_\pm^*+a_1^\pm(k)\tau^{-1/2} + a_2^\pm(k)\tau^{-1}
 + \mathcal{O}\left(\tau^{-3/2}\right);
$
\item under $\widetilde{\Hh}_{\pm}$:
$
u^*_\tau(k) = u_\pm^* + \widetilde{a}_1^\pm \tau^{-1/3} + \widetilde{a}_2^\pm \tau^{-2/3}
 +\mathcal{O}\left(\tau^{-1}\right);
$
\end{enumerate}
\item
In Regimes $\Rf_{3b}$ and $\Rf_4$,
\begin{enumerate}[(a)]
\item
 For $k>V'(1)$: $u^*_\tau(k)=1-\frac{\mu}{(k-V'(1))}\tau^{-1}+\mathcal{O}(\tau^{-2})$;
\item 
For $k=V'(1)$: 
$u^*_\tau(k) = 1-\tau^{-1/2}\sqrt{\frac{\mu}{V''(1)}}+\mathcal{O}\left(\tau^{-1}\right)$,
\end{enumerate}
\end{enumerate}
with $a_1^{\pm}$, $a_2^{\pm}$ and $a_3^{\pm}$ defined in~\eqref{eq:a} and $u_{\pm}^*$ in~\eqref{eq:DefUpmU*pm}.
\end{lemma}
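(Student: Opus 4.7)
The plan is to substitute $u^*_\tau(k) = u^*_\infty + \epsilon_\tau$ into the saddlepoint equation~\eqref{eq:u^*tau}, where $u^*_\infty \in \{u_+^*, u_-^*, 1\}$ is the limit provided by Lemma~\ref{lemma:u^*tau}, and extract the order and coefficients of $\epsilon_\tau$ by dominant balance. The analysis splits according to the nature of the singularity of $H$ at $u^*_\infty$: a simple pole when $u^*_\infty = u_\pm^*$ (since $\kappa\theta - 2\beta_t V(u_\pm^*) = 0$ by the very construction~\eqref{eq:DefUpmU*pm}), and a logarithmic singularity when $u^*_\infty = 1$ in regimes~$\Rf_{3b}$ and~$\Rf_4$ (since $1-\gamma(u)$ diverges as $u \uparrow 1$ when $\rho>\kappa/\xi$). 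These distinct blow-up rates of $H'$ dictate the different powers of $\tau$ in each case.

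In case~(i), setting $\delta(u):=\kappa\theta - 2\beta_t V(u)$ so that $\delta(u) = -2\beta_t V'(u_\pm^*)(u-u_\pm^*) + \mathcal{O}((u-u_\pm^*)^2)$, the first term in~\eqref{eq:VandH} yields
$$H(u) = -\frac{\kappa\theta v\E^{-\kappa t}}{4\beta_t^2 V'(u_\pm^*)(u-u_\pm^*)} + \mathcal{O}\bigl(\log|u-u_\pm^*|\bigr),$$
hence $H'(u) \sim \kappa\theta v\E^{-\kappa t}/[4\beta_t^2 V'(u_\pm^*)(u-u_\pm^*)^2]$. Combining this with the Taylor expansion of $V'$ around $u_\pm^*$, equation~\eqref{eq:u^*tau} becomes
$$V''(u_\pm^*)\epsilon_\tau + \tfrac{1}{2}V'''(u_\pm^*)\epsilon_\tau^2 + \cdots + \frac{\kappa\theta v\E^{-\kappa t}}{4\beta_t^2 V'(u_\pm^*)\epsilon_\tau^2\tau}\bigl(1 + \mathcal{O}(\epsilon_\tau)\bigr) = k - V'(u_\pm^*).$$
When $k \ne V'(u_\pm^*)$ (sub-case~(i)(a)), the dominant balance pits the constant right-hand side against the $(\epsilon_\tau^2\tau)^{-1}$ term, forcing $\epsilon_\tau^2 \sim \tau^{-1}$ and producing $(a_1^\pm)^2 = \kappa\theta v\E^{-\kappa t}/[4\beta_t^2 V'(u_\pm^*)(k-V'(u_\pm^*))]$, matching~\eqref{eq:a} once the sign of $a_1^\pm$ is fixed by the monotone one-sided approach established in the proof of Lemma~\ref{lemma:u^*tau} ($\epsilon_\tau < 0$ under $\Hh_+$, $\epsilon_\tau > 0$ under $\Hh_-$). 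Substituting the refined ansatz $\epsilon_\tau = a_1^\pm\tau^{-1/2} + a_2^\pm\tau^{-1} + \mathcal{O}(\tau^{-3/2})$ and collecting $\tau^{-1}$ coefficients yields a linear equation for $a_2^\pm$ whose explicit solution is the formula in~\eqref{eq:a}. When $k = V'(u_\pm^*)$ (sub-case~(i)(b)), the constant terms cancel and the balance shifts to $V''(u_\pm^*)\epsilon_\tau$ versus $(\epsilon_\tau^2\tau)^{-1}$, forcing $\epsilon_\tau^3 \sim \tau^{-1}$ and giving $(\widetilde{a}_1^\pm)^3 = -\kappa\theta v\E^{-\kappa t}/[4\beta_t^2 V'(u_\pm^*)V''(u_\pm^*)]$; the cube-root sign is again pinned down by the direction of approach, and $\widetilde{a}_2^\pm$ drops out of the same bootstrap, now requiring the Taylor term with $V'''(u_\pm^*)$.

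Case~(ii) follows the identical scheme with a milder singularity at $u=1$. Since $u_\pm^*\ne 1$, the first term in~\eqref{eq:VandH} stays regular, but $1-\gamma(u)$ diverges. Using $d(1) = \rho\xi - \kappa > 0$ and a short expansion $\kappa - \rho\xi u + d(u) = \frac{\xi^2}{2d(1)}(1-u) + \mathcal{O}((1-u)^2)$, one finds $1 - \gamma(u) = \frac{4d(1)^2}{\xi^2(1-u)}\bigl(1 + \mathcal{O}(1-u)\bigr)$, so $H(u) = -\mu\log(1-u) + \mathcal{O}(1)$ and $H'(u) = \mu/(1-u) + \mathcal{O}(1)$ as $u \uparrow 1$. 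Writing $\epsilon_\tau := 1 - u^*_\tau(k) > 0$, equation~\eqref{eq:u^*tau} becomes
$$V'(1) - V''(1)\epsilon_\tau + \mathcal{O}(\epsilon_\tau^2) + \frac{\mu}{\epsilon_\tau\tau} + \mathcal{O}(\tau^{-1}) = k.$$
For $k > V'(1)$, balancing $V'(1)$ and $\mu/(\epsilon_\tau\tau)$ yields $\epsilon_\tau = \mu/[(k-V'(1))\tau] + \mathcal{O}(\tau^{-2})$; for $k = V'(1)$, the constant terms cancel and the balance $V''(1)\epsilon_\tau = \mu/(\epsilon_\tau\tau)$ gives $\epsilon_\tau = \sqrt{\mu/V''(1)}\,\tau^{-1/2} + \mathcal{O}(\tau^{-1})$, with the positive root enforced by $\epsilon_\tau > 0$.

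The main technical obstacle will be the algebraic bookkeeping for the coefficients $a_2^\pm$ and $\widetilde{a}_2^\pm$ in case~(i): one has to expand both $V$ and the full (not merely principal) part of $H$ to one order beyond the leading singular term, substitute, and simplify the resulting linear (respectively cubic) identity to recognise the closed-form expressions in~\eqref{eq:a}. Once that is carried through, the remaining verifications---the consistency of the dominant-balance powers and the matching of signs with Lemma~\ref{lemma:u^*tau}---are essentially routine.
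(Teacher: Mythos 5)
Your proposal is correct and follows essentially the same route as the paper: substitute an ansatz $u^*_\tau = u^*_\infty + \epsilon_\tau$ into the saddlepoint equation~\eqref{eq:u^*tau}, Taylor expand the regular part, and match orders in $\tau$, with the singular behaviour of $H'$ (second-order pole at $u_\pm^*$, simple pole at $1$) dictating the balancing powers $\tau^{-1/2}$, $\tau^{-1/3}$, $\tau^{-1}$, $\tau^{-1/2}$. The only difference is presentational: you derive the exponents by explicit dominant balance on the principal part of $H$, whereas the paper posits the expansion form as an ansatz and lets the same balance emerge from the iterated Taylor expansion of $V$, $V'$, $\gamma$, $\gamma'$ around $u_\pm^*$.
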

\begin{proof}
Consider Regime $\Rf_2$ when $\Hh_+$ is in force, i.e. $k>V'(u_+^*)$, and fix such a~$k$.
Existence and uniqueness was proved in Lemma~\ref{lemma:u^*tau} and so we assume the result as an ansatz.
This implies the following asymptotics as~$\tau$ tends to infinity:
\begin{equation}\label{eq:VGammaAsymp}
\left\{
\begin{array}{rl}
V(u^*_\tau(k)) & = \displaystyle V(u^*_+) + \frac{a_1V'(u^*_{+})}{\sqrt{\tau}}
 + \left(\frac{a_1^2 V''(u^*_{+})}{2} + a_2V'(u^*_{+})\right)\frac{1}{\tau}
+\mathcal{O}\left(\frac{1}{\tau^{3/2}}\right), \\
V'(u^*_\tau(k)) & = \displaystyle V'(u^*_{+}) + \frac{a_1V''(u^*_{+})}{\sqrt{\tau}}
 + \left(\frac{a_1^2 V'''(u^*_{+})}{2} + a_2V''(u^*_{+})\right)\frac{1}{\tau}
+\mathcal{O}\left(\frac{1}{\tau^{3/2}}\right), \\
\gamma(u^*_\tau(k)) &
 = \displaystyle \gamma(u^*_{+}) + \frac{a_1\gamma'(u^*_{+})}{\sqrt{\tau}}
  + \left(\frac{a_1^2\gamma''(u^*_{+})}{2} + a_2\gamma'(u^*_{+})\right)\frac{1}{\tau}
+\mathcal{O}\left(\frac{1}{\tau^{3/2}}\right),\\
\gamma'(u^*_{\tau}(k))
 & = \displaystyle \gamma'(u^*_{+}) + \frac{a_1\gamma''(u^*_{+})}{\sqrt{\tau}}
  + \left(\frac{a_1^2\gamma'''(u^*_{+})}{2}+a_2 \gamma''(u^*_{+})\right)\frac{1}{\tau}
+\mathcal{O}\left(\frac{1}{\tau^{3/2}}\right).
\end{array}
\right.
\end{equation}
We substitute this into~\eqref{eq:u^*tau} and solve at each order. 
At the $\tau^{-1/2}$ order we obtain
$
a_1^{+}(k)=\pm\frac{\E^{-\kappa t/2}}{2 \beta _t} \sqrt{\frac{\kappa\theta v}{V'(u_+^*) \left(k-V'(u_+^*)\right)}},
$
which is well-defined since $k-V'(u_+^*)>0$ and $V'(u_+^*)>0$. 
We choose the negative root since we require $u^*_\tau\in(0,u^*_{+})\subset\mathcal{D}_{\infty}^o$ for $\tau$ large enough. 
In a tedious yet straightforward manner we continue the procedure and iteratively solve at each order (the next equation is linear in $a_2$) to derive the asymptotic expansion in the lemma.
The other cases follow from analogous arguments.
\end{proof}

We now derive asymptotic expansions for $\Phi_{\tau,k,\alpha}$.
The expansions will be used in the next section to derive asymptotics for the function $F$ in~\eqref{eq:DF}. 

\begin{lemma}\label{lemma:CharactNonSteepExp}
The following expansions hold as $\tau$ tends to infinity:
\begin{enumerate}[(i)]
\item
In Regimes $\Rf_2$, $\Rf_{3a}$ and $\Rf_{3b}$,
\begin{enumerate}[(a)]
\item under $\Hh_{\pm}$: 
$
\Phi_{\tau,k,3/4}(u) = \E^{-\zeta^2_\pm(k)u^2/2}\left(1+\max(1,u^s)\mathcal{O}\left(\tau^{-1/4}\right)\right);
$
\item under $\widetilde{\Hh}_{\pm}$:
$
\Phi_{\tau,k,1/2}(u) = 
\E^{-3 V''\left(u_\pm^*\right)u^2/2}\left(1+\max(1,u^s)\mathcal{O}\left(\tau^{-1/6}\right)\right);
$
\end{enumerate}
\item
In Regimes $\Rf_{3b}$ and $\Rf_4$,
\begin{enumerate}[(a)]
\item
 For $k>V'(1)$: $
\Phi_{\tau,k,1}\left(u\right) = \exp\left(-\I u (k-V'(1) ) - \frac{u^2 V''(1)}{2\tau}\right)\left(1-\I u \frac{(k-V'(1))}{\mu}\right)^{-\mu}
(1+\max(1,u^s)\mathcal{O}(\tau^{-1}));
$
\item 
For $k=V'(1)$: $
\Phi_{\tau,k,1/2}\left(u\right) = \exp\left(-\I u \sqrt{\mu V''(1)} - \frac{u^2 V''(1)}{2}\right)\left(1-\I u \sqrt{\frac{V''(1) }{\mu}}\right)^{-\mu}
(1+\max(1,u^s)\mathcal{O}(\tau^{-1/2})),
$
\end{enumerate}
\end{enumerate}
for some integer $s$ different from one line to the other.
Recall that $\Phi_{\tau,k,\alpha}$ is defined in~\eqref{eq:CharacNonSteep} and $\zeta^2_{\pm}$ in~\eqref{eq:variance}.
Furthermore, as $\tau$ tends to infinity the remainders in (i) and (ii)(b) are uniform in $u$ for $|u|<\tau^{1/6}$ and the remainder in (ii)(a) is uniform in $u$ for $|u|<\tau^{2/3}$.
\end{lemma}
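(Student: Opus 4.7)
The plan is to invoke Lemma~\ref{lem:charactsimp} to reduce the statement to the asymptotic analysis of
\[
\tau\bigl[V(u^*_\tau + w) - V(u^*_\tau)\bigr] + H(u^*_\tau + w) - H(u^*_\tau) - \I u k \tau^{1-\alpha},
\qquad w := \I u \tau^{-\alpha}.
\]
In each case the linear-in-$w$ contribution collapses by the saddlepoint equation $V'(u^*_\tau) + H'(u^*_\tau)/\tau = k$, and the leading non-trivial term (Gaussian or otherwise) must then be extracted by carefully substituting the asymptotic expansions of $u^*_\tau$ from Lemma~\ref{lemma:a-dynamics}.

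The essential ingredient is the singular structure of $H$ near the relevant boundary points. A direct analysis of~\eqref{eq:VandH} yields that $\kappa\theta - 2\beta_t V(u)$ vanishes precisely at $u = u^*_\pm$, so that $H$ admits a decomposition $H(u) = A_\pm/(u^*_\pm - u) + B_\pm\log(u^*_\pm - u) + \tilde{H}_\pm(u)$ with $\tilde{H}_\pm$ smooth. Similarly, for $\rho > \kappa/\xi$, one checks that $1 - \gamma(u) \sim C(1-u)^{-1}$ near $u = 1$, yielding $H(u) = -\mu\log(1-u) + \tilde{H}_1(u)$ with $\tilde{H}_1$ smooth at $1$. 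These singularities, together with Lemma~\ref{lemma:a-dynamics}, dictate the scalings $\alpha = 3/4, 1/2, 1, 1/2$ in cases (i)(a), (i)(b), (ii)(a), (ii)(b) respectively.

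For cases (i)(a)--(i)(b), where $|w|$ is of smaller order than $\delta := u_\pm^* - u^*_\tau$, I would Taylor expand the smooth parts of $V$ and $H$, and handle the pole via the geometric series $1/(\delta - w) = \delta^{-1}\sum_{n\geq 0}(w/\delta)^n$. The quadratic contribution $A_\pm w^2/\delta^3$ then produces the Gaussian variance $\zeta^2_\pm(k)$ in case (i)(a); in case (i)(b) (where $k = V'(u^*_\pm)$) an additional quadratic contribution $\tau V''(u^*_\pm) w^2/2$ from the Taylor expansion of $V$ is no longer negligible and combines with the pole contribution to yield the stated coefficient $-3V''(u_\pm^*)/2$. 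All higher-order terms are $\mathcal{O}(\tau^{-1/4})$ or $\mathcal{O}(\tau^{-1/6})$ respectively, as claimed. For cases (ii)(a)--(ii)(b) the distance $(1 - u^*_\tau)$ and $|w|$ are comparable, so the logarithmic singularity cannot be Taylor expanded; I would instead retain it exactly via
\[
-\mu\log\left(\frac{1 - u^*_\tau - w}{1 - u^*_\tau}\right) = -\mu\log\bigl(1 - \I u(k - V'(1))/\mu\bigr) + \mathcal{O}(\tau^{-1})
\]
for (ii)(a), and analogously with $\I u \sqrt{V''(1)/\mu}$ for (ii)(b); this produces directly the Gamma-like factor $(1 - \I u \cdot c)^{-\mu}$, while the Taylor expansions of $V$ and of $\tilde{H}_1$ yield the remaining linear phase and the Gaussian correction.

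The main obstacle I anticipate is the explicit algebraic verification that the extracted singular constants $A_\pm, B_\pm$ reproduce exactly the expressions in~\eqref{eq:a} and~\eqref{eq:variance}: this reconciliation is a nontrivial (though routine) computation starting from~\eqref{eq:VandH} and~\eqref{eq:DefUpmU*pm}, where the leading behaviour of $\kappa\theta - 2\beta_t V(u)$ near its zero must be matched with the leading coefficients of the expansion of $u^*_\tau$ from Lemma~\ref{lemma:a-dynamics}. Uniformity of the remainder in $u$ over $|u| < \tau^{1/6}$ (resp. $\tau^{2/3}$) will follow from term-by-term bounds on the Taylor and logarithmic tails; the stated ranges are precisely those on which the next-to-leading correction remains strictly dominated by the Gaussian leading term, which requires careful bookkeeping of competing powers of $\tau$.
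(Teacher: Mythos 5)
Your proposal is correct and takes essentially the same route as the paper: reduce via Lemma~\ref{lem:charactsimp}, Taylor-expand around the critical point using the expansion of $u^*_\tau$ from Lemma~\ref{lemma:a-dynamics}, cancel the divergent linear phase via the saddlepoint equation~\eqref{eq:u^*tau}, and extract the Gaussian (resp.\ Gamma) factor from the singularity of $H$. Your explicit pole-plus-log-plus-smooth decomposition of $H$ organizes the bookkeeping more transparently than the paper's in-line manipulation of the rescaled quantity $e_\tau$ from Lemma~\ref{lemma:e-dynamics}, but it encodes exactly the same singular structure (the linear vanishing of $\kappa\theta-2\beta_t V$ at $u^*_\pm$ and the simple pole of $1-\gamma$ at $1$ when $\rho>\kappa/\xi$) and leads to the same computation.
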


\begin{remark}\label{rem:limitprops}\ 
\begin{enumerate}[(i)]
\item In Case (i)(a), $Z_{\tau,k,3/4}$ converges weakly to a centred Gaussian 
with variance $\zeta^2_{\pm}(k)$ when $\Hh_\pm$ holds.
\item
In Case (i)(b), $Z_{\tau,k,1/2}$ converges weakly a centred Gaussian 
with variance $3V''(u_{+})$ when $\widetilde{\Hh}_\pm$ holds.
\item
In Case(ii)(a), $Z_{\tau,k,1}$ converges weakly to the zero-mean random variable $\Xi-\gamma$, 
where $\gamma:=k-V'(1)$ and $\Xi$ is a Gamma random variable with shape parameter $\mu$  
and scale parameter $\beta:=(k-V'(1))/\mu$.
Lemma~\ref{lem:Gammaasymplargecorrel} implies that the limiting characteristic function satisfies $\int_{-\infty}^{\infty}\left(1-\I u\beta\right)^{-\mu}\E^{-V''(1)u^2/(2\tau)}u^j\D u=\mathcal{O}(1)$ for any $j\in\mathbb{N}\cup\{0\}$.
\item 
In Case(ii)(b), $Z_{\tau,k,1/2}$ 
converges weakly to the zero-mean random variable $\Psi+\Xi$,
where $\Psi$ is Gaussian with mean $-\sqrt{\mu V''(1)}$ and variance $V''(1)$ 
and $\Xi$ is Gamma-distributed with shape~$\mu$ and scale~$\sqrt{V''(1)/\mu}$.
\end{enumerate}
\end{remark}

We now prove Case~(i)(a) in Regime $\Rf_2$, as the proofs in all other cases are similar. 
In the forthcoming analysis we will be interested in the asymptotics of the function~$e_{\tau}$ defined by
\begin{equation}\label{eq:etau}
e_{\tau}(k) \equiv \sqrt{\tau}\left(\kappa\theta -2 \beta _t V(u^*_\tau(k))\right).
\end{equation}
Under $\Rf_2$, in Case (i)(a), $\left(\kappa\theta -2 \beta _t V(u^*_\tau)\right)$ tends to zero as $\tau$ tends to infinity,
so that it is not immediately clear what happens to $e_{\tau}$ for large $\tau$. 
But the asymptotic behaviour of $V(u^*_\tau)$ in~\eqref{eq:VGammaAsymp} and the definition~\eqref{eq:etau} yield the following result:
\begin{lemma}\label{lemma:e-dynamics}
Assume $\Rf_2$ and $\Hh_+$.
Then the expansion 
$e_{\tau}(k) = e_0^+(k) + e_1^+(k)\tau^{-1/2}+\mathcal{O}\left(\tau^{-1}\right)$
holds as~$\tau$ tends to infinity, 
with  $e_0$ and $e_1$ defined in~\eqref{eq:e} and $u_{\pm}^*$ in~\eqref{eq:DefUpmU*pm}.
\end{lemma}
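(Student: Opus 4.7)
The plan is to reduce $e_\tau(k)$ to a Taylor expansion of $V$ around $u_+^*$, evaluated along the expansion of $u_\tau^*(k)$ coming from Lemma~\ref{lemma:a-dynamics}(i)(a).

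The crucial observation is that under $\Rf_2$ the point $u_+^*$ sits on the boundary of the limiting domain $\mathcal{D}_\infty$ (see Table~\ref{eq:DInfinityLargeMaturity}); the reason for this is precisely that $\kappa\theta-2\beta_t V(u_+^*)=0$, which is exactly the condition that makes $H$ (and therefore $\Lambda^{(t)}_\tau$) blow up at $u_+^*$. This identity can be verified directly from the definitions~\eqref{eq:DefUpmU*pm}-\eqref{eq:etanu} of $u_+^*$ and the explicit form of $V$ in~\eqref{eq:VandH}, by plugging $u_+^*$ into $V$ and using $\mu=2\kappa\theta/\xi^2$ and $2\beta_t=\xi^2(1-\E^{-\kappa t})/(2\kappa)$. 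This is the content of the first (short) step of the proof.

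Using this identity, I write
\begin{equation*}
e_\tau(k)=\sqrt{\tau}\bigl(\kappa\theta-2\beta_tV(u_\tau^*(k))\bigr)
= -2\beta_t\sqrt{\tau}\,\bigl(V(u_\tau^*(k))-V(u_+^*)\bigr),
\end{equation*}
so that the problem reduces to expanding $V(u_\tau^*(k))-V(u_+^*)$ to second order in $\tau^{-1/2}$. For this, I Taylor-expand $V$ around $u_+^*$ and insert the expansion
$u_\tau^*(k)=u_+^*+a_1^+(k)\tau^{-1/2}+a_2^+(k)\tau^{-1}+\mathcal{O}(\tau^{-3/2})$
from Lemma~\ref{lemma:a-dynamics}(i)(a); this is exactly the first line of~\eqref{eq:VGammaAsymp}. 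Collecting the coefficients yields
\begin{equation*}
V(u_\tau^*(k))-V(u_+^*)
= \frac{a_1^+(k)\,V'(u_+^*)}{\sqrt{\tau}}
+ \frac{1}{\tau}\!\left(\frac{a_1^+(k)^2V''(u_+^*)}{2}+a_2^+(k)V'(u_+^*)\right)
+\mathcal{O}(\tau^{-3/2}).
\end{equation*}
Multiplying by $-2\beta_t\sqrt{\tau}$ gives
\begin{equation*}
e_\tau(k)
= -2\beta_t a_1^+(k)V'(u_+^*)
- \beta_t\bigl[V''(u_+^*)a_1^+(k)^2+2V'(u_+^*)a_2^+(k)\bigr]\tau^{-1/2}
+\mathcal{O}(\tau^{-1}),
\end{equation*}
which matches $e_0^+(k)+e_1^+(k)\tau^{-1/2}+\mathcal{O}(\tau^{-1})$ by the definitions in~\eqref{eq:e}.

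The only delicate point is the verification of the identity $\kappa\theta-2\beta_tV(u_+^*)=0$. After that, everything reduces to a mechanical Taylor expansion; the precise form of $a_1^+, a_2^+$ from Lemma~\ref{lemma:a-dynamics} ensures the coefficients line up with $e_0^+, e_1^+$. The remainder $\mathcal{O}(\tau^{-1})$ is justified by the fact that the $\mathcal{O}(\tau^{-3/2})$ remainder in the expansion of $u_\tau^*(k)$ translates into $\mathcal{O}(\tau^{-3/2})$ for $V(u_\tau^*(k))-V(u_+^*)$ (using smoothness of $V$ at the interior point $u_+^*\in(u_-,u_+)$), and the prefactor $\sqrt{\tau}$ lowers this by one half-power.
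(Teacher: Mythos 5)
Your proof is correct and is essentially the approach the paper takes (the paper simply states the lemma as an immediate consequence of~\eqref{eq:etau} and the expansion~\eqref{eq:VGammaAsymp}); the one thing you spell out that the paper leaves tacit is the key identity $\kappa\theta-2\beta_t V(u_+^*)=0$, which the paper silently uses when it writes $V(u^*_\tau+\I u/\tau^{3/4})=\kappa\theta/(2\beta_t)+\dots$ in~\eqref{eq:Vustarttaui}. Making that identity explicit, together with its interpretation (blow-up of $H$ at the boundary of $\mathcal{D}_\infty$ under $\Rf_2$), is a helpful clarification but not a different route.
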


\begin{proof}[Proof of Lemma~\ref{lemma:CharactNonSteepExp}]
Consider Regime $\Rf_2$ when $\Hh_+$ is in force, i.e. $k>V'(u_+^*)$, and fix such a~$k$, 
and for ease of notation drop the superscripts and $k$-dependence.
Lemma~\ref{lem:charactsimp} yields
\begin{equation}
\log\Phi_{\tau,k}(u) = -\I u k \tau^{1/4} +\tau\left(V\left(\frac{\I u}{ \tau^{3/4}} + u^*_{\tau}\right)-V\left( u^*_{\tau}\right)\right)+H\left(\frac{\I u }{\tau^{3/4}} + u^*_{\tau}\right)-H\left( u^*_{\tau}\right)
+\mathcal{O}(\tau^{-1/4}).
\label{eq:phiztaunonsteepderiv}
\end{equation}
Using Lemma~\ref{lemma:a-dynamics}, we have the Taylor expansion (similar to~\eqref{eq:VGammaAsymp})
\begin{align}\label{eq:Vustarttaui}
 V\left(u^*_{\tau}+\I u/\tau^{3/4}\right)=\frac{\kappa\theta}{2\beta_t}+\frac{a_1V'}{\sqrt{\tau}}+\frac{\I u V'}{\tau^{3/4}}+\left(\frac{V''a_1^2}{2}+V' a_2\right)\frac{1}{\tau}+\frac{\I u a_1 V''}{\tau^{5/4}}+\mathcal{O}\left(\frac{1}{\tau^{3/2}}\right),
\end{align}
as $\tau$ tends to infinity, where $V$, $V'$ and $V''$ are evaluated at $u^*_{+}$. 
Using~\eqref{eq:VGammaAsymp} we further have
\begin{align}
 V\left(u^*_{\tau}+\I u/\tau^{3/4}\right)-V\left(u^*_{\tau}\right)
 & = \frac{\I u V'(u^*_{+})}{\tau^{3/4}}+\frac{\I u a_1 V''(u^*_{+})}{\tau^{5/4}}+\mathcal{O}\left(\frac{1}{\tau^{3/2}}\right),
\label{eq:Vustarttauidiff}\\
\gamma\left(u^*_{\tau}+\I u/\tau^{3/4}\right)
 & = \gamma(u^*_{+})+\frac{a_1\gamma'(u^*_{+})}{\sqrt{\tau}}+\frac{\I u \gamma'(u^*_{+})}{\tau^{3/4}}+\mathcal{O}\left(\frac{1}{\tau}\right).
\label{eq:Gammaustartauuidiff}
\end{align}
We now study the behaviour of 
$H\left(\I u /\tau^{3/4} + u^*_{\tau}\right)$, where $H$ is defined in~\eqref{eq:VandH}.
Using Lemma~\ref{lemma:e-dynamics} and the expansion~\eqref{eq:Vustarttauidiff} for large $\tau$,
we first note that
\begin{align}\label{eq:den}
e_{\tau} - 2 \beta _t \sqrt{\tau} \left[V(u^*_{\tau}+\frac{\I u}{\tau^{3/4}})-V(u^*_{\tau})\right]
 & = e_0-\frac{2\beta_t\I u V'}{\tau^{1/4}}+\frac{e_1}{\sqrt{\tau}}
-\frac{2\beta_t \I u a_1 V''}{\tau^{3/4}}+\mathcal{O}\left(\frac{1}{\tau}\right),
\end{align}
with $e_{\tau}$ defined in~\eqref{eq:etau}.
Together with~\eqref{eq:Vustarttaui}, this implies
\begin{align}\label{eq:HFirstTerm1}
\frac{v \E^{-\kappa t}V(u^*_{\tau}+\I u/ \tau^{3/4})}{\kappa\theta -2 \beta _t V(u^*_{\tau}+\I u/ \tau^{3/4})}
& = \frac{\sqrt{\tau}v \E^{-\kappa t} V(u^*_{\tau}+\I u/ \tau^{3/4})}{e_{\tau}-2 \beta _t \sqrt{\tau} \left(V(u^*_{\tau}+\I u/ \tau^{3/4})-V(u^*_{\tau})\right)}\nonumber\\
& = \frac{\kappa\theta  v \E^{-\kappa t}\sqrt{\tau}}{2 e_0 \beta _t}+\frac{\I \kappa\theta  u v e^{-\kappa t} V' \tau^{1/4}}{e_0^2 }+v \E^{-\kappa t} \left(\frac{a_1 V'}{e_0}-\frac{e_1 \kappa\theta }{2 e_0^2 \beta _t}\right)  -\frac{\zeta_{+}^2u^2}{2}+\mathcal{O}\left(\frac{1}{\tau^{1/4}}\right),
\end{align}
with $\zeta_{+}$ defined in~\eqref{eq:variance}. 
Substituting $e_0$ in~\eqref{eq:e} into the second term in~\eqref{eq:HFirstTerm1} we find
\begin{equation}\label{eq:simplify}
\frac{\I \kappa\theta  u v e^{-\kappa t} V' }{e_0^2 }=\I u \left(k-V'\right).
\end{equation}
Following a similar procedure using $e_\tau$ we establish  for large $\tau$ that
\begin{align}\label{eq:HFirstTerm2}
&\frac{v \E^{-\kappa t}V(u^*_{\tau})}{\kappa\theta -2 \beta _t V(u^*_{\tau})}=\frac{\kappa\theta  v \E^{-\kappa t}\sqrt{\tau}}{2 e_0 \beta _t}+v \E^{-\kappa t} \left(\frac{a_1 V'}{e_0}-\frac{e_1 \kappa\theta }{2 e_0^2 \beta _t}\right) +\mathcal{O}\left(\frac{1}{\sqrt{\tau}}\right),
\end{align}
and combining~\eqref{eq:HFirstTerm1},~\eqref{eq:simplify} and~\eqref{eq:HFirstTerm2} we find that
\begin{equation}\label{eq:HFirstTerm12}
\frac{V(u^*_{\tau}+\I u/ \tau^{3/4})v \E^{-\kappa t}}{\kappa\theta -2 \beta _t V(u^*_{\tau}+\I u/ \tau^{3/4})}
 - \frac{V(u^*_{\tau})v \E^{-\kappa t}}{\kappa\theta -2 \beta _t V(u^*_{\tau})}  
=\I u \left(k-V'\right)\tau^{1/4}-\frac{\zeta_+^2u^2}{2} +\mathcal{O}\left(\frac{1}{\tau^{1/4}}\right).
\end{equation}
We now analyse the second term of $\exp(H(\I u /\tau^{3/4} + u^*_{\tau})-H(u^*_{\tau}))$. We first re-write this term as 
\begin{align}
&\exp\left(-\mu\log \left(\frac{\kappa\theta-2 \beta _t V(u^*_{\tau}+\I u/ \tau^{3/4}) }{\kappa\theta \left(1-\gamma \left(u^*_{\tau}+\I u/ \tau^{3/4}\right)\right)}\right)
+\mu\log \left(\frac{\kappa\theta-2 \beta _t V(u^*_{\tau}) }{\kappa\theta \left(1-\gamma \left(u^*_{\tau}\right)\right)}\right)\right) \\ \nonumber
&=\left(\left(\frac{\kappa\theta-2 \beta _t V(u^*_{\tau}+\I u/ \tau^{3/4})}{\kappa\theta-2 \beta _t V(u^*_{\tau})}\right)\left(\frac{1-\gamma \left(u^*_{\tau}+\I u/ \tau^{3/4}\right)}{1-\gamma \left(u^*_{\tau}\right)}\right)^{-1}\right)^{-\mu},
\end{align}
and deal with each of the multiplicative terms separately. For the first term we re-write it as
\begin{align}
\frac{\kappa\theta-2 \beta _t V(u^*_{\tau}+\I u/ \tau^{3/4})}{\kappa\theta-2 \beta _t V(u^*_{\tau})}=\frac{e_{\tau}-2 \beta _t \sqrt{\tau} \left(V(u^*_{\tau}+\I u/ \tau^{3/4})-V(u^*_{\tau})\right)}{e_\tau},
\end{align}
and then we use the asymptotics of $e_{\tau}$ in~\ref{lemma:e-dynamics} and equation~\eqref{eq:den}
to find that as~$\tau$ tends to infinity,
\begin{align}
\frac{\kappa\theta-2 \beta _t V(u^*_{\tau}+\I u/ \tau^{3/4})}{\kappa\theta-2 \beta _t V(u^*_{\tau})}=1+ \mathcal{O}\left(\frac{1}{\tau^{1/4}}\right).
\end{align}
For the second term we use the asymptotics in~\eqref{eq:VGammaAsymp} and~\eqref{eq:Gammaustartauuidiff} to find that for large $\tau$
$$
\left(\frac{1-\gamma \left(u^*_{\tau}+\I u/ \tau^{3/4}\right)}{1-\gamma \left(u^*_{\tau}\right)}\right)^{-1}=\left(\frac{1-\left(\gamma+a_1\gamma'/\sqrt{\tau}+\I u \gamma'/\tau^{3/4}+\mathcal{O}(1/\tau)\right)}{1-\left(\gamma+a_1\gamma'/\sqrt{\tau}+\mathcal{O}(1/\tau)\right)}\right)^{-1} 
=1+\mathcal{O}(1/\tau^{3/4}).
$$
It then follows that for the second term of $\exp(H(\I u /\tau^{3/4} + u^*_{\tau})-H(u^*_{\tau}))$ that for large $\tau$ we have
\begin{equation}\label{eq:HSecTerm12}
\exp\left(-\mu\log \left(\frac{\kappa\theta-2 \beta _t V(u^*_{\tau}+\I u/ \psi_\tau) }{\kappa\theta \left(1-\gamma \left(u^*_{\tau}+\I u/ \psi_\tau\right)\right)}\right)
+\mu\log \left(\frac{\kappa\theta-2 \beta _t V(u^*_{\tau}) }{\kappa\theta \left(1-\gamma \left(u^*_{\tau}\right)\right)}\right)\right) = 1+\mathcal{O}\left(\frac{1}{\tau^{1/4}}\right).
\end{equation}
Further as $\tau$ tends to infinity, the equality~\eqref{eq:Vustarttauidiff} implies
\begin{align}\label{eq:Vustarttauidifftau}
\tau\left(V(u^*_{\tau}+\I u/ \tau^{3/4})-V(u^*_{\tau})\right)
=\I u V'(u^*)\tau^{1/4}+\mathcal{O}(\tau^{-1/4}).
\end{align}
Combining~\eqref{eq:HFirstTerm12},~\eqref{eq:HSecTerm12} and~\eqref{eq:Vustarttauidifftau} into~\eqref{eq:phiztaunonsteepderiv} completes the proof.
The proof of the uniformity of the remainders and the existence of the integer~$s$
follow the same lines as the proof of~\cite[Lemmas 7.1, 7.2]{BR02}.
\end{proof}

In order to derive complete asymptotic expansions we still need to derive expansions for $D$ and $F$ in~\eqref{eq:DF}.
This is the purpose of this section.
We first derive an expansion for $D$ which gives the leading-order decay 
of large-maturity out-of-the-money options:

\begin{lemma} \label{lemma:V*Asymptotics}
The following expansions hold as $\tau$ tends to infinity:
\begin{enumerate}[(i)]
\item
In Regimes $\Rf_2$, $\Rf_{3a}$ and $\Rf_{3b}$, 
\begin{enumerate}[(a)]
\item
under $\Hh_\pm$: 
$
D(\tau,k) = 
\exp\left(-\tau (V^*(k)-k)+\sqrt{\tau}c_0^{\pm}(k)+c_1^{\pm}(k)\right)
\tau^{\mu/2}c_2^{\pm}(k)(1+\mathcal{O}(\tau^{-1/2}));
$
\item
under $\widetilde{\Hh}_\pm$:
$
D(\tau,k) = 
\exp\left(-\tau (V^*(k)-k)+\tau^{1/3}c_0^\pm + c_1^{\pm}\right)\tau^{\mu/3}c_2^{\pm}
(1+\mathcal{O}(\tau^{-1/3}));
$
\end{enumerate}
\item
In Regimes $\Rf_{3b}$ and $\Rf_4$,
\begin{enumerate}[(a)]
\item
 For $k>V'(1)$: $
D(\tau,k)=
\exp\left(-\tau (V^*(k)-k)+\mu+g_0\right)
\left(\frac{2(k-V'(1))(\kappa-\rho\xi )^2}{ \left(\kappa\theta -2 V(1) \beta _t\right)}\right)^{\mu}
\tau^{\mu}
(1+\mathcal{O}(\tau^{-1}));
$
\item 
For $k=V'(1)$: $
D(\tau,k)=
\exp\left(-\tau (V^*(k)-k)+\mu/2+g_0\right)
\left(\frac{2(\kappa-\rho\xi )^2 \sqrt{V''(1) \mu}}{\left(\kappa\theta -2 V(1) \beta _t\right)}\right)^{\mu}
\tau^{\mu/2}(1+\mathcal{O}(\tau^{-1/2})).
$
\end{enumerate}
\end{enumerate}
where $c_0$, $c_1$ and $c_2$  in~\eqref{eq:c0}, $g_0$ in~\eqref{eq:c0bound} and $V^*$ is characterised explicitly in Lemma~\ref{lemma:V*Characterisation}.
\end{lemma}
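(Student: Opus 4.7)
The plan is to substitute the explicit expansions of $u^*_\tau(k)$ from Lemma~\ref{lemma:a-dynamics} directly into the definition of $D(\tau,k)$ and track each order carefully. First I would rewrite
\begin{equation*}
D(\tau,k) = \exp\Big[\tau k - \tau\bigl(ku^*_\tau(k) - V(u^*_\tau(k))\bigr) + H(u^*_\tau(k))\Big],
\end{equation*}
and identify the relevant limit $u^*_\infty\in\{u^*_\pm,1\}$ of $u^*_\tau(k)$ provided by Lemma~\ref{lemma:u^*tau}, using Lemma~\ref{lemma:V*Characterisation} to write $V^*(k)=k u^*_\infty - V(u^*_\infty)$. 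A Taylor expansion around $u^*_\infty$ gives
\begin{equation*}
-\tau\bigl(ku^*_\tau - V(u^*_\tau)\bigr) = -\tau V^*(k)-\tau(k-V'(u^*_\infty))(u^*_\tau-u^*_\infty) + \tfrac{\tau}{2}V''(u^*_\infty)(u^*_\tau-u^*_\infty)^2+\cdots,
\end{equation*}
into which I would plug the asymptotics of $u^*_\tau-u^*_\infty$ from Lemma~\ref{lemma:a-dynamics}.

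For cases~(i) ($\Hh_\pm$ and $\widetilde{\Hh}_\pm$), $u^*_\infty=u^*_\pm$ sits at the boundary of the effective domain of $V$ and, by the very definition~\eqref{eq:DefUpmU*pm}, satisfies $\kappa\theta-2\beta_tV(u^*_\pm)=0$. Both the rational and the logarithmic pieces of $H(u^*_\tau)$ are therefore singular as $\tau\uparrow\infty$, and I would handle this by inserting the auxiliary quantity $e_\tau(k)$ from~\eqref{eq:etau} and invoking Lemma~\ref{lemma:e-dynamics}:
\begin{equation*}
\frac{V(u^*_\tau)v\E^{-\kappa t}}{\kappa\theta-2\beta_t V(u^*_\tau)} = \frac{\sqrt{\tau}\,V(u^*_\tau)v\E^{-\kappa t}}{e_\tau(k)},\qquad
-\mu\log\frac{\kappa\theta-2\beta_t V(u^*_\tau)}{\kappa\theta(1-\gamma(u^*_\tau))} = \frac{\mu}{2}\log\tau - \mu\log e_\tau(k) + \mu\log\bigl(\kappa\theta(1-\gamma(u^*_\pm))\bigr) + o(1),
\end{equation*}
which isolates the $\tau^{\mu/2}$ prefactor (resp.\ $\tau^{\mu/3}$ under $\widetilde{\Hh}_\pm$, after replacing $\sqrt{\tau}$ by $\tau^{1/3}$ in the scaling). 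The $\sqrt{\tau}$-divergent piece coming from the rational term must combine with the $\sqrt{\tau}$ contribution $-a_1^\pm(k)(k-V'(u^*_\pm))\sqrt{\tau}$ of the Taylor expansion above to produce exactly $c_0^\pm(k)\sqrt{\tau}$; this works because of the identity $\kappa\theta v\E^{-\kappa t}/(2\beta_t e_0^\pm(k))=c_0^\pm(k)/2$, which follows from a direct algebraic manipulation using the explicit forms of $a_1^\pm$, $e_0^\pm$ and $c_0^\pm$. The remaining $O(1)$ contributions from both sources assemble into $c_1^\pm$ and $c_2^\pm$ of~\eqref{eq:c0} (respectively $\widetilde{c}_1^\pm,\widetilde{c}_2^\pm$ of~\eqref{eq:c0bound}), and the error terms $\mathcal{O}(\tau^{-1/2})$ (resp.\ $\mathcal{O}(\tau^{-1/3})$) come from the next-order correction in Lemma~\ref{lemma:a-dynamics}.

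For cases~(ii), $u^*_\infty=1$ lies in the open domain and $\kappa\theta-2\beta_tV(1)>0$, so the rational piece of $H$ is finite and contributes the constant $g_0$ of~\eqref{eq:c0bound} at leading order. The prefactor $\tau^\mu$ (resp.\ $\tau^{\mu/2}$) now comes exclusively from the logarithmic piece: since $d(1)=|\kappa-\rho\xi|$, a direct expansion near $u=1$ shows that $\kappa-\rho\xi u + d(u)$ vanishes linearly in $1-u$ when $\rho>\kappa/\xi$, hence $1-\gamma(u^*_\tau(k))\asymp(1-u^*_\tau(k))^{-1}$. Combining this with $1-u^*_\tau(k)\sim\mu/[(k-V'(1))\tau]$ in (ii)(a) (resp.\ $\sim\sqrt{\mu/(V''(1)\tau)}$ in (ii)(b)) yields $-\mu\log(1-\gamma(u^*_\tau(k)))\sim\mu\log\tau$ (resp.\ $(\mu/2)\log\tau$). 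The constant $\mu$ (resp.\ $\mu/2$) in the exponent falls out of $-\tau(k-V'(1))(u^*_\tau-1)+(\tau/2)V''(1)(u^*_\tau-1)^2$ upon inserting the expansions of $u^*_\tau-1$, and the explicit prefactor involving $(\kappa-\rho\xi)^2$ together with $(k-V'(1))$ (resp.\ $\sqrt{V''(1)\mu}$) comes from evaluating $\kappa\theta(1-\gamma(u^*_\tau(k)))(1-u^*_\tau(k))$ at leading order using $d(1)=\rho\xi-\kappa$.

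The main obstacle will be the bookkeeping in cases~(i), where two formally divergent $\sqrt{\tau}$ (resp.\ $\tau^{1/3}$) contributions have to cancel down to the claimed $c_0^\pm(k)$ (resp.\ $\widetilde{c}_0^\pm$); this relies on the non-trivial algebraic identity relating $e_0^\pm$, $a_1^\pm$ and $c_0^\pm$, which is straightforward but somewhat tedious to verify from the definitions~\eqref{eq:DefUpmU*pm}--\eqref{eq:c0bound}. Everything else reduces to routine Taylor expansion and order-matching using Lemmas~\ref{lemma:a-dynamics} and~\ref{lemma:e-dynamics}.
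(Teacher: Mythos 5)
Your proposal is correct and follows essentially the same route as the paper. You identify exactly the same decomposition (Taylor expansion of $-\tau(ku^*_\tau-V(u^*_\tau))$ around $u^*_\infty$ via Lemma~\ref{lemma:a-dynamics}, insertion of $e_\tau$ and Lemma~\ref{lemma:e-dynamics} to regularise the singular rational piece of $H$ when $\kappa\theta-2\beta_tV(u^*_\pm)=0$, the algebraic identity $\kappa\theta v\E^{-\kappa t}/(2\beta_te_0^\pm)=-a_1^\pm(k-V'(u^*_\pm))$ which produces the claimed $c_0^\pm\sqrt{\tau}$, and the expansion of $1-\gamma(u)\asymp(1-u)^{-1}$ near $u=1$ in case (ii) to extract the $\tau^\mu$ or $\tau^{\mu/2}$ prefactor); this is precisely how the paper proceeds, and the rest is indeed routine bookkeeping.
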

\begin{proof}
Consider Regime $\Rf_2$ in Case(i)(a) (namely when $\Hh_+$ holds), and again for ease of notation drop the superscripts and $k$-dependence.
We now use Lemma~\ref{lemma:a-dynamics} and~\eqref{eq:VGammaAsymp} to write for large $\tau$:
\begin{align}\label{eq:step1asymp}
\E^{-\tau\left(ku^*_{\tau}-V(u^*_{\tau})\right)}
 & = \exp\left[-\tau(k u^*_{+}-V(u^*_{+}))-\sqrt{\tau}a_1(k-V')+r_0-a_2 k +\mathcal{O}(\tau^{-1/2})\right] \\ \nonumber
 & = \E^{-\tau V^*(k)-\sqrt{\tau}a_1(k-V')+r_0-a_2 k}\left[1+\mathcal{O}(\tau^{-1/2})\right],
\end{align}
with $r_0 := \frac{1}{2}V''a_1^2 + V' a_2$ and where we have used the characterisation of $V^*$ 
given in Lemma~\ref{lemma:V*Characterisation}. 
We now study the asymptotics of $H(u^*_\tau)$. 
Using the definition of $e_\tau$ in~\eqref{eq:etau} we write
\begin{equation}\label{eq:step2asymp}
 \E^{{H(u^*_{\tau})}}
 = \exp\left(\frac{V(u^*_{\tau})v \E^{-\kappa t}}{\kappa\theta -2 \beta _t V(u^*_{\tau})}
\right)
\left[\frac{\kappa\theta-2 \beta _t V(u^*_{\tau})}{\kappa\theta \left(1-\gamma\left(u^*_{\tau}\right)\right)}\right]^{-\mu} 
 = \tau^{\frac{\mu}{2}}
 \exp\left(\frac{V(u^*_{\tau})v \E^{-\kappa t}}{\kappa\theta -2 \beta _t V(u^*_{\tau})}
\right)
\left[\frac{e_{\tau}}{\kappa\theta \left(1-\gamma \left(u^*_{\tau}\right)\right)}\right]^{-\mu},
\end{equation}
and deal with each of these terms in turn. 
Now by~\eqref{eq:HFirstTerm2} we have, as $\tau$ tends to infinity,
\begin{align}\label{eq:step3asymp}
&\frac{v \E^{-\kappa t}V(u^*_{\tau})}{\kappa\theta -2 \beta _t V(u^*_{\tau})}=\frac{\kappa\theta  v \E^{-\kappa t}\sqrt{\tau}}{2 e_0 \beta _t}+v \E^{-\kappa t} \left(\frac{a_1 V'}{e_0}-\frac{e_1 \kappa\theta}{2 e_0^2 \beta _t}\right) +\mathcal{O}\left(\frac{1}{\sqrt{\tau}}\right).
\end{align}
Using the asymptotics of $e_{\tau}$ given in Lemma~\ref{lemma:e-dynamics} and those of~$\gamma$ in~\eqref{eq:VGammaAsymp} we find 
\begin{align}\label{eq:step4asymp}
 \left(\frac{e_{\tau}}{\kappa\theta\left(1-\gamma \left(u^*_{\tau}\right)\right)}\right)^{-\mu}
  = \left(\frac{e_{0}+e_1/\sqrt{\tau}+\mathcal{O}\left(1/\tau\right)}{\kappa\theta \left(1-\gamma\right)+\kappa\theta a_1\gamma'/\sqrt{\tau}+\mathcal{O}\left(1/\tau\right)}\right)^{-\mu} 
=
\left(\frac{\kappa\theta\left(1-\gamma \right)}{e_0}\right)^{\mu} 
\left(1+\mathcal{O}\left(\frac{1}{\sqrt{\tau}}\right)\right).
\end{align}
Using the definition of $e_0$ in~\eqref{eq:e}, note the simplification
$-a_1(k-V')+\frac{\kappa\theta v \E^{-\kappa t}}{2 e_0 \beta _t}=-2 a_1 (k-V')$.
Combining this,~\eqref{eq:step1asymp},~\eqref{eq:step2asymp},~\eqref{eq:step3asymp} and ~\eqref{eq:step4asymp} we find that
$$
D(\tau,k):=
\E^{-\tau\left(k(u^*_{\tau}-1)-\Lambda_{\tau}^{(t)}\left(u^*_{\tau}\right)\right)}
  = 
\exp\left(-\tau (V^*(k)-k)+\sqrt{\tau}c_0^{+}+c_1^{+}\right)\tau^{\mu/2}c_2^{+}
(1+\mathcal{O}(\tau^{-1/2})),
$$
with $c_0^+$, $c_1^+$ and $c_2^+$  in~\eqref{eq:c0}. 
All other cases follows in an analogous fashion and this completes the proof.
\end{proof}

In Lemma~\ref{lemma:FourierAsymptotics} below we provide asymptotic expansions 
for the function $F$ in~\eqref{eq:DF}.
However, we first need the following technical result, the proof of which can be found in~\cite[Lemma 7.3]{BR02}.
Let $p$ denote the density of a Gamma random variable with shape $\lambda$ and scale $\nu$, 
and $\widehat{p}$ the corresponding characteristic function:
\begin{equation}\label{eq:gammadensitycharact}
p(x) \equiv \frac{1}{\Gamma(\lambda)\nu^{\lambda}}x^{\lambda-1}\E^{-x/\nu}\ind_{\{x>0\}}, 
\qquad \widehat{p}(u) \equiv (1-\I \nu u)^{-\lambda}.
\end{equation}
\begin{lemma}\label{lem:Gammaasymplargecorrel}
The following expansion holds as $\tau$ tends to infinity:
$$
\int_{\RR} \exp\left({-\I \gamma u - \frac{\sigma^2u^2}{2\tau} }\right)u^{\beta} \widehat{p}(\gamma u)\D u
=\sum_{r=0}^{q}\frac{2\pi \sigma^{2r}}{\I^{\beta} \gamma^{2r+\beta+1}2^{r}r! \tau^{r}}p^{(2r+\beta)}(1)+\mathcal{O}\left(\frac{1}{\tau^{q+1}}\right),
$$
with $\gamma,\nu,\lambda\in\RR^*_{+}$, 
$\beta\in\mathbb{N}\cup\{0\}$, $q\in\mathbb{N}$ and $p^{(n)}$ denoting the 
$n$-th derivative of the Gamma density~$p$.
\end{lemma}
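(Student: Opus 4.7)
The plan is to interpret the integrand probabilistically as a characteristic function and then combine Fourier inversion with a short-time asymptotic expansion of the corresponding density. Let $X\sim\mathrm{Gamma}(\lambda,\nu)$ and $N_\tau\sim\mathcal{N}(0,\sigma^2/\tau)$ be independent. Since $\widehat{p}(\gamma u)=\EE[\E^{\I u(\gamma X)}]$, $\E^{-\I\gamma u}=\E^{\I u(-\gamma)}$ and $\E^{-\sigma^2 u^2/(2\tau)}=\EE[\E^{\I u N_\tau}]$, the integrand is exactly the characteristic function $\widehat{f}_\tau$ of $Y_\tau:=\gamma(X-1)+N_\tau$. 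The Gaussian convolution ensures $f_\tau$ is Schwartz, so differentiation under the integral sign is permitted.

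The first substantive step is to reduce the integral to a derivative of $f_\tau$ at the origin. Standard Fourier inversion applied $\beta$ times yields
\begin{equation*}
\int_\RR \E^{-\I\gamma u-\sigma^2 u^2/(2\tau)}u^{\beta}\widehat{p}(\gamma u)\,\D u \;=\; \frac{2\pi}{\I^{\beta}}\,f_\tau^{(\beta)}(0),
\end{equation*}
up to the standard sign convention. The next step is to expand $f_\tau^{(\beta)}(0)$ as $\tau\uparrow\infty$. Writing $f_\tau=g\ast\phi_\tau$, where $g(y)=\gamma^{-1}p\bigl((y+\gamma)/\gamma\bigr)\ind_{\{y>-\gamma\}}$ is the density of $\gamma(X-1)$ and $\phi_\tau$ is the density of $N_\tau$, we obtain $f_\tau^{(\beta)}(0)=\EE\bigl[g^{(\beta)}(-N_\tau)\bigr]$. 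Since $p\in C^\infty\bigl((0,\infty)\bigr)$, $g$ is smooth in a neighbourhood of $0$, so Taylor-expanding $g^{(\beta)}$ at $0$ to order $2q+1$, taking expectations (odd moments vanish) and using $\EE[N_\tau^{2r}]=\frac{(2r)!\sigma^{2r}}{2^r r!\tau^r}$ together with $g^{(n)}(0)=p^{(n)}(1)/\gamma^{n+1}$ reproduces the series announced in the lemma.

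The main obstacle is the control of the Taylor remainder $R_{q,\tau}$, since in the Heston application $\lambda=\mu$ may be less than~$1$, in which case $p$ and its derivatives blow up at $0$ and hence $g^{(n)}$ is unbounded near $y=-\gamma$. I would handle this by splitting $\EE[\,\cdot\,]=\EE[\,\cdot\,\ind_{\{|N_\tau|\leq\delta\}}]+\EE[\,\cdot\,\ind_{\{|N_\tau|>\delta\}}]$ for some fixed $\delta\in(0,\gamma)$. On the first event, the Lagrange form of the Taylor remainder, the uniform boundedness of $g^{(\beta+2q+2)}$ on $[-\delta,\delta]$ and the bound $\EE[|N_\tau|^{2q+2}]=\mathcal{O}(\tau^{-q-1})$ deliver the desired rate. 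On the second event, the Gaussian tail estimate $\PP(|N_\tau|>\delta)\leq 2\exp\bigl(-\delta^2\tau/(2\sigma^2)\bigr)$ decays exponentially in $\tau$, which easily absorbs any power-type singularity of $g$ near $-\gamma$ (one only needs $g^{(\beta)}$ to be locally integrable against a Gaussian, which follows from $\int_{-\gamma}^{1}(y+\gamma)^{\lambda-1-\beta}\E^{-y^2\tau/(2\sigma^2)}\D y<\infty$). Combining the two contributions yields $R_{q,\tau}=\mathcal{O}(\tau^{-q-1})$, completing the proof.
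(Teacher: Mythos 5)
Your reading of the lemma — recognising the integrand as the characteristic function of $Y_\tau=\gamma(X-1)+N_\tau$, reducing the integral to $f_\tau^{(\beta)}(0)$ by Fourier inversion, writing $f_\tau=g*\phi_\tau$ and Taylor-expanding $g^{(\beta)}$ at the origin — is the natural probabilistic route, and almost certainly the one taken in the cited reference [BR02, Lemma~7.3] (the paper itself does not prove the lemma). There are, however, two points that are not actually closed.

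First, the constant. With the paper's conventions $\widehat{f}(u)=\int\E^{\I ux}f(x)\D x$ and $f(x)=\frac{1}{2\pi}\int\E^{-\I ux}\widehat{f}(u)\D u$, differentiating $\beta$ times at $x=0$ gives $f_\tau^{(\beta)}(0)=\frac{(-\I)^\beta}{2\pi}\int u^\beta\widehat{f}_\tau(u)\D u$, hence $\int u^\beta\widehat{f}_\tau(u)\D u=2\pi\,\I^{\beta}f_\tau^{(\beta)}(0)$, whereas the statement has $2\pi\,\I^{-\beta}$; these differ by $(-1)^\beta$. You wave this away as a sign convention, but in a blind proof it must be tracked: either your computation is off by a sign for odd $\beta$, or the lemma statement is. (It is immaterial for the paper, where only $\beta=0$ is ever used.) Second, and more substantively, the identity $f_\tau^{(\beta)}(0)=\EE[g^{(\beta)}(-N_\tau)]$ and the tail estimate on $\{|N_\tau|>\delta\}$ require $g^{(\beta)}$ to be integrable on a neighbourhood of its singular endpoint $y=-\gamma$. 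Since $p^{(n)}(x)\sim c_n x^{\lambda-1-n}$ as $x\downarrow 0$, one has $g^{(\beta)}(y)\sim c'(y+\gamma)^{\lambda-1-\beta}$ as $y\downarrow -\gamma$, and the integral you write, $\int_{-\gamma}^{1}(y+\gamma)^{\lambda-1-\beta}\E^{-y^2\tau/(2\sigma^2)}\D y$, is finite if and only if $\lambda>\beta$: the Gaussian factor is a strictly positive constant near $y=-\gamma$ and does nothing to tame the singularity. You assert this integral is finite without checking the exponent, so the argument as written has a gap when $\lambda\le\beta$ (a range permitted by the lemma's hypotheses $\lambda\in\RR^*_+$, $\beta\in\mathbb{N}\cup\{0\}$). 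The fix is standard — when $\lambda\le\beta$, move derivatives onto the smooth Gaussian kernel, $f_\tau^{(\beta)}(0)=\int g(-z)\phi_\tau^{(\beta)}(z)\D z$, before localising near $z=0$ and Taylor-expanding — but it needs to be said. In the paper's application $\lambda=1+\mu>1\ge\beta=0$, so your version does go through there; it is the general statement that requires either the extra hypothesis $\lambda>\beta$ or the rearranged argument.
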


\begin{lemma}\label{lemma:FourierAsymptotics}
The following expansions hold as $\tau$ tends to infinity
(with $\zeta_{\pm}$ in~\eqref{eq:variance} and $u^*_{\pm}$ in~\eqref{eq:DefUpmU*pm}):
\begin{enumerate}[(i)]
\item
In Regimes $\Rf_2$, $\Rf_{3a}$ and $\Rf_{3b}$, 
\begin{enumerate}[(a)]
\item under $\Hh_\pm$: 
$
F(\tau,k,3/4) = 
\frac{\tau^{-3/4}}{\zeta_\pm(k) u^*_+(u^*_{\pm}-1)\sqrt{2\pi}}(1+\mathcal{O}(\tau^{-1/2}))
$;
\item under $\widetilde{\Hh}_\pm$: 
$
F(\tau,k,1/2) = 
\frac{\tau^{-1/2}}{u^*_{\pm}(u^*_{\pm}-1)\sqrt{6\pi V''(u^*_{\pm})}}(1+\mathcal{O}(\tau^{-1/3}))
$;
\end{enumerate}
\item
In Regimes $\Rf_{3b}$ and $\Rf_4$,
\begin{enumerate}[(a)]
\item
 For $k>V'(1)$: 
$F(\tau,k,1)
=-\frac{\E^{-\mu}\mu^{\mu}}{\Gamma(1+\mu)}(1+\mathcal{O}(\tau^{-1}))$;
\item 
For $k=V'(1)$: 
$F(\tau,k,1/2)
=-\frac{\E^{-\mu/2}(\mu/2)^{\mu/2}}{2\Gamma(1+\mu/2)}(1+\mathcal{O}(\tau^{-1/2}))$.
\end{enumerate}
\end{enumerate}
\end{lemma}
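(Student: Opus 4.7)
The plan is to evaluate $F(\tau,k,\alpha)=\tfrac{1}{2\pi}\int_{\RR}\Phi_{\tau,k,\alpha}(u)\overline{C_{\tau,k,\alpha}(u)}\,\D u$ case by case, by inserting the uniform asymptotic expansion of $\Phi_{\tau,k,\alpha}$ from Lemma~\ref{lemma:CharactNonSteepExp} together with the Taylor expansion of the rational kernel $\overline{C_{\tau,k,\alpha}}$ obtained by substituting the asymptotics of $u^*_\tau(k)$ from Lemma~\ref{lemma:a-dynamics}, and then computing the leading Fourier integral explicitly. In every case I first split the real line at a cutoff $R_\tau$ compatible with the validity range stated in Lemma~\ref{lemma:CharactNonSteepExp} (typically $R_\tau=\tau^{1/6}$, or $R_\tau=\tau^{2/3}$ in Case~(ii)(a)); the tail $\{|u|>R_\tau\}$ is controlled using the $\mathcal{O}(|u|^{-2})$ decay of $\overline{C_{\tau,k,\alpha}}$ together with analyticity of $\Lambda_{\tau}^{(t)}$ in a strip around the real axis (Lemma~\ref{lem:fwdmgflargetauexpansion}), producing an $\mathcal{O}(\E^{-\beta\tau})$ remainder for some $\beta>0$.

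For Cases~(i)(a) and~(i)(b), the saddle $u^*_\tau$ converges to $u^*_\pm\in(u_-,0)\cup(1,u_+)$, so that both $\I\tau^\alpha u^*_\tau$ and $\I\tau^\alpha(u^*_\tau-1)$ diverge to infinity and the kernel simplifies uniformly on compact $u$-sets:
\[
\overline{C_{\tau,k,\alpha}(u)}=-\frac{\tau^{-\alpha}}{u^*_\pm(u^*_\pm-1)}\bigl(1+o(1)\bigr).
\]
Combining this with the Gaussian limits of $\Phi_{\tau,k,\alpha}$ from Lemma~\ref{lemma:CharactNonSteepExp}, namely $\E^{-\zeta_\pm^2u^2/2}$ under $\Hh_\pm$ and $\E^{-3V''(u^*_\pm)u^2/2}$ under $\widetilde{\Hh}_\pm$, and using the identity $\int_\RR\E^{-cu^2/2}\D u=\sqrt{2\pi/c}$, one directly obtains the stated leading terms; the sign matches because $u^*_\pm(u^*_\pm-1)>0$ in both sub-cases.

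For Cases~(ii)(a) and~(ii)(b), the analysis is genuinely more delicate because $u^*_\tau\to 1$ while $\tau^\alpha(u^*_\tau-1)$ converges to a finite non-zero limit by Lemma~\ref{lemma:a-dynamics}(ii); accordingly only one pole of $\overline{C_{\tau,k,\alpha}}$ escapes to infinity. The partial-fraction decomposition
\[
\overline{C_{\tau,k,\alpha}(u)}=\I\left[\frac{1}{u-\I\tau^\alpha(u^*_\tau-1)}-\frac{1}{u-\I\tau^\alpha u^*_\tau}\right]
\]
isolates the leading contribution in the first summand (the denominator of the second still grows like $\tau^\alpha$). In Case~(ii)(a), setting $\gamma=k-V'(1)$ and $\beta=\gamma/\mu$, the limiting pole sits at $u=-\I/\beta$; using the factorisation $1-\I u\beta=-\I\beta(u+\I/\beta)$, the product of the limiting Gamma characteristic function $(1-\I u\beta)^{-\mu}$ with $\I/(u+\I/\beta)$ reduces to a constant multiple of $(1-\I u\beta)^{-(\mu+1)}$, and the integrand matches the canonical form of Lemma~\ref{lem:Gammaasymplargecorrel} with $(\lambda,\nu,\sigma^2)=(\mu+1,1/\mu,V''(1))$. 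The $r=0$, $\beta=0$ leading term then equals $2\pi\gamma^{-1}p(1)$ with $p(1)=\mu^{\mu+1}\E^{-\mu}/\Gamma(\mu+1)$, delivering the stated constant after collecting prefactors. Case~(ii)(b) is structurally analogous but the Gaussian factor $\E^{-V''(1)u^2/2}$ appearing in $\Phi_{\tau,k,1/2}$ is $\tau$-independent; completing the square via the shift $u\mapsto u+\I\sqrt{\mu/V''(1)}$ absorbs the drift $-\sqrt{\mu V''(1)}$ and, after an appropriate rescaling reducing the integrand to a Gamma density of shape $\mu/2$ and scale $2/\mu$, its value at $1$ produces the announced constant.

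The main obstacle I anticipate is the careful book-keeping of complex branches in Case~(ii): both sides of the identity $(1-\I u\beta)^{-\mu}=(-\I\beta)^{-\mu}(u+\I/\beta)^{-\mu}$ must use consistent principal branches, and the resulting sign of $F$ must agree with Lemma~\ref{lem:nonsteeplem}, where option-price positivity enforces $F<0$ in the intermediate regime $0<u^*_\tau<1$ and $F>0$ in the extremal regimes $u^*_\tau<0$ or $u^*_\tau>1$. A secondary technical point is the tail estimate in Case~(ii)(a): since the pole of the kernel sits at a fixed distance from the real axis and the Gamma-type factor $(1-\I u\beta)^{-\mu}$ only decays polynomially at infinity, one must combine this polynomial decay with the $\E^{-V''(1)u^2/(2\tau)}$ regularisation from Lemma~\ref{lemma:CharactNonSteepExp}(ii)(a) on an intermediate range, and exploit analyticity of $\Lambda_{\tau}^{(t)}$ in a strip around the real axis to dispose of the very far tail $|u|>\tau^{2/3}$.
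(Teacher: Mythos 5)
Your route is essentially the paper's: insert the uniform expansion of $\Phi_{\tau,k,\alpha}$ from Lemma~\ref{lemma:CharactNonSteepExp}, Taylor-expand the rational kernel $\overline{C_{\tau,k,\alpha}}$ via Lemma~\ref{lemma:a-dynamics}, evaluate the resulting Gaussian or Gamma integral (Lemma~\ref{lem:Gammaasymplargecorrel} in Case~(ii)(a), a Gaussian--Gamma convolution in Case~(ii)(b)), and control the far tail via Lemma~\ref{lem:expsmalllargetime}. The gap is the sign, which you flag but do not close. Evaluating~\eqref{eq:CConj} at $u=0$ gives
$\overline{C_{\tau,k,\alpha}(0)}=\tau^{\alpha}/\bigl[(-\I\tau^{\alpha}(u^*_{\tau}-1))(-\I\tau^{\alpha}u^*_{\tau})\bigr]=-\tau^{-\alpha}/\bigl[(u^*_{\tau}-1)u^*_{\tau}\bigr]$,
so your expansion with the leading minus is the correct one for the definition as written; but the paper's own proof uses $\overline{C(\tau,k,3/4)}=+\tau^{-3/4}/\bigl[(u^*_{+}-1)u^*_{+}\bigr](1+\mathcal{O}(\tau^{-1/2}))$ without it, and the lemma's stated constants are consistent with that version. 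Since $u^*_{\pm}(u^*_{\pm}-1)>0$, your $F$ in Case~(i) comes out \emph{negative} while the statement has it positive; your remark that ``the sign matches because $u^*_\pm(u^*_\pm-1)>0$'' is therefore wrong. The same happens in Case~(ii)(a): your partial-fraction limit $\overline{C}\approx\I/(u+\I\mu/\nu)=(\mu/\nu-\I u)^{-1}$ is the negative of the paper's $-(\mu/\nu-\I u)^{-1}$, and carrying it through Lemma~\ref{lem:Gammaasymplargecorrel} gives $F=+\mu^{\mu}\E^{-\mu}/\Gamma(1+\mu)$ rather than the stated $-\mu^{\mu}\E^{-\mu}/\Gamma(1+\mu)$. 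The tension originates upstream in Lemma~\ref{lem:optpricerepnonsteep}: a direct computation of the Fourier transform gives $\mathcal{F}\widetilde{g}_1=-C_{\tau,k,\alpha}$ (the two factors $(\I u-p)=\I(u+\I p)$ contribute a global $\I^2=-1$), not $+C_{\tau,k,\alpha}$ as claimed there, so the paper's $\EE^{\QQ_{k,\tau}}[\widetilde{g}_1]=F$ and its expansion of $\overline{C}$ carry mutually compensating sign slips. To prove the lemma \emph{as stated} you must either adopt that convention from eq.~\eqref{eq:simpconv} as a black box (and then drop your minus), or correct both Lemma~\ref{lem:nonsteeplem} and the present lemma consistently; as written, your calculation cannot reach the displayed signs, and the option-price positivity check you invoke then flags a contradiction rather than resolving it.

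A smaller point: for Case~(ii)(b) the paper recognises the integral as a Gaussian--Gamma convolution
$\frac{1}{2\pi}\int_{\RR}\E^{-\I\omega u}\widehat{n}(u)\widehat{p}(u)\D u=(n*p)(\omega)$
evaluated at $\omega=-a_1V''(1)$, computes it in closed form and simplifies with the duplication formula for the Gamma function; your description of ``completing the square and an appropriate rescaling reducing the integrand to a Gamma density of shape $\mu/2$ and scale $2/\mu$'' is not obviously that object (in particular the duplication step is what produces $\Gamma(1+\mu/2)$ and the factor $(\mu/2)^{\mu/2}$), so that step is too loose to verify. The tail discussion for Case~(ii)(a) is also slightly off: the cutoff that matters there is $|u|\le\tau$ (not $\tau^{2/3}$), and the paper bridges the intermediate range directly by bounding $|(1-\I u\nu/\mu)^{-1-\mu}|$ together with the $\E^{-V''(1)u^2/(2\tau)}$ factor, rather than appealing to analyticity of $\Lambda^{(t)}_\tau$ there.
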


\begin{proof}
Again, we only consider here Regime $\Rf_2$ under $\Hh_+$ in Case (i)(a).
Using the asymptotics of $u^*_{\tau}$ given in Lemma~\ref{lemma:a-dynamics}, 
we can Taylor expand for large~$\tau$ to obtain
$
 \overline{C(\tau,k,3/4)}
=\frac{\tau^{-3/4}}{(u_+^*-1) u_+^*}(1+\mathcal{O}(\tau^{-1/2})),
$
where the remainder $\mathcal{O}(\tau^{-1/2})$ is uniform in $u$ as soon as $u=\mathcal{O}(\tau^{3/4})$.
Combining this with the characteristic function asymptotics in Lemma~\ref{lemma:CharactNonSteepExp} we find that for large~$\tau$,
$
F(\tau,k,3/4)
=\frac{1}{\tau ^{3/4} \left(u_+^*-1\right) u_+^*}\int_{\RR}\exp\left(-\frac{\zeta_+^2(k)u^2}{2}\right)(1+\mathcal{O}(\tau^{-1/4}))\D u.
$
Using Lemma~\ref{lem:expsmalllargetime}, there exists $\beta>0$ such that as $\tau$ tends to infinity we can write this integral as
\begin{align*}
\int_{-\infty}^{\infty}\exp\left(-\frac{\zeta_+^2(k)u^2}{2}\right)\left(1+\mathcal{O}(\tau^{-1/4})\right)\D u
&=\int_{-\tau^{3/4}}^{\tau^{3/4}}\exp\left(-\frac{\zeta_+^2(k)u^2}{2}\right)
\left(1+\mathcal{O}(\tau^{-1/4})\right)\D u 
+\mathcal{O}(\E^{-\beta \tau}) \\
&=\int_{-\tau^{3/4}}^{\tau^{3/4}}\exp\left(-\frac{\zeta_+^2(k)u^2}{2}\right)\D u
\left(1+\mathcal{O}(\tau^{-1/4})\right)
+\mathcal{O}(\E^{-\beta \tau}) \\
&=\int_{\RR}\exp\left(-\frac{\zeta_+^2(k)u^2}{2}\right)\D u \left(1+\mathcal{O}(\tau^{-1/4})\right)
 = \frac{\sqrt{2\pi}}{|\zeta(k)|}\left(1+\mathcal{O}(\tau^{-1/4})\right).
\end{align*}
The second line follows from Lemma~\ref{lemma:CharactNonSteepExp} and in the third line we have used that the tail estimate for the Gaussian integral is exponentially small and absorbed this into the remainder $\mathcal{O}(\tau^{-1/4})$.
By extending the analysis to higher order the $\mathcal{O}(\tau^{-1/4})$ term is actually zero and the next non-trivial term is $\mathcal{O}(\tau^{-1/2})$.
For brevity we omit the analysis and we give the remainder as $\mathcal{O}(\tau^{-1/2})$ in the lemma.
Case (i)(b) follows from analogous arguments to above and we now move onto Case (ii)(a).
Using the asymptotics of~$u^*_{\tau}$ in Lemma~\ref{lemma:a-dynamics} we have
$
\overline{C(\tau,k,1)}
 = -\left(\frac{\mu}{\nu(k)} - \I u\right)^{-1}+\mathcal{O}(\tau^{-1})
 = \frac{-\nu(k)}{\mu}\left(1-\frac{\I u \nu(k)}{\mu}\right)^{-1}+\mathcal{O}(\tau^{-1}),
$
where we set 
$\nu(k):=k-V'(1)$ and the remainder $\mathcal{O}(\tau^{-1})$ is uniform in $u$ as soon as $u=\mathcal{O}(\tau)$.
Using the characteristic function asymptotics in Lemma~\ref{lemma:CharactNonSteepExp} and Lemma~\ref{lem:expsmalllargetime}, there exists $\beta>0$ such that as $\tau$ tends to infinity:
\begin{align}
F(\tau,k,1)
 & =\frac{-\nu}{2\pi \mu}\int_{-\tau}^{\tau} \exp\left(-\I u \nu - \frac{u^2 V''(1)}{2 \tau}\right)
 \left(1-\frac{\I u \nu}{\mu}\right)^{-\mu}
 \left[\left(1-\frac{\I u \nu}{\mu}\right)^{-1}+\mathcal{O}\left(\tau^{-1}\right)\right]  \D u
 + \mathcal{O}(\E^{-\beta \tau}) \nonumber \\ 
 & = \frac{-\nu}{2\pi \mu}\int_{-\tau}^{\tau} \exp\left(-\I u \nu - \frac{u^2 V''(1)}{2 \tau}\right)
\left(1-\frac{\I u \nu}{\mu}\right)^{-1-\mu}  \D u \left[1+\mathcal{O}\left(\tau^{-1}\right)\right]+\mathcal{O}(\E^{-\beta \tau}).\label{eq:largecorrelextend}
\end{align}
The second line follows from Lemma~\ref{lemma:CharactNonSteepExp} and Remark~\ref{rem:limitprops}(iii).
Further we note that
\begin{align*}
\left|\int_{|u|>\tau} \exp\left(-\I u \nu - \frac{u^2 V''(1)}{2 \tau}\right)
\left(1-\frac{\I u \nu}{\mu}\right)^{-1-\mu}  \D u \right|
&\leq \tau \int_{|z|>1} \E^{-\frac{1}{2}\tau z^2 V''(1)}
\left(1+\frac{ z^2 \tau^2 \nu^2}{\mu^2}\right)^{-1-\mu}  \D z \\
&\leq \tau \int_{|z|>1} \E^{-\frac{1}{2}\tau z^2 V''(1)}  \D z
=\mathcal{O}\left(\E^{-\Delta\tau}\right), 
\end{align*}
for some $\Delta>0$ as $\tau$ tends to infinity.
Combining this with~\eqref{eq:largecorrelextend} we can write
\begin{align*}
F(\tau,k,1)
& = \frac{-\nu}{2\pi \mu}\int_{-\infty}^{\infty} \exp\left(-\I u \nu - \frac{u^2 V''(1)}{2 \tau}\right)
\left(1-\frac{\I u \nu}{\mu}\right)^{-1-\mu}  \D u \left[1+\mathcal{O}\left(\tau^{-1}\right)\right],\\
 & = \left(-\frac{\E^{-\mu}\mu^{\mu}}{\Gamma(1+\mu)}
+\mathcal{O}\left(\tau^{-1}\right)\right)\left[1+\mathcal{O}\left(\tau^{-1}\right)\right],
\end{align*}
where we have absorbed the exponential remainder into $\mathcal{O}(\tau^{-1})$, 
and where the second line follows from Lemma~\ref{lem:Gammaasymplargecorrel}.
We now prove (ii)(b).
Using the asymptotics of $u^*_{\tau}$ for large $\tau$ in Lemma~\ref{lemma:a-dynamics}, we obtain
$\overline{C(\tau,k,1/2)} = \frac{1}{a_1(1+\I u/a_1)}+\mathcal{O}(\tau^{-1/2})$,
with $a_1=-\sqrt{\frac{\mu}{V''(1)}}$ and where the remainder $\mathcal{O}(\tau^{-1/2})$ is uniform in $u$ as soon as $u=\mathcal{O}(\tau^{1/2})$.
Using the characteristic function asymptotics in  Lemma~\ref{lemma:CharactNonSteepExp} and analogous arguments as above we have the following expansion for large~$\tau$:
$$
F(\tau,k,1/2)
=\frac{1}{2\pi a_1}
\int_{\RR} \frac{\exp\left(\I u a_1 V''(1) -\frac{1}{2}u^2 V''(1)\right)}{(1+\I u /a_1)^{1+\mu}}\D u \left(1+\mathcal{O}\left(\tau^{-1/2}\right)\right).
$$
Let $n$ and $\widehat{n}$ denote the Gaussian density and characteristic function with zero mean and variance $V''(1)$.
Using~\eqref{eq:gammadensitycharact}, we have
$$
\int_{\RR}\E^{-\I \omega u}\widehat{n}(u)\widehat{p}(u)\D u 
=2\pi \mathcal{F}^{-1}(\widehat{n}(u)\widehat{p}(u))(\omega)
=2\pi \mathcal{F}^{-1}(\mathcal{F}(n * p))
=2 \pi \int_{0}^{\infty} n(\omega-y) p(y) \D y,
$$
so that
$$
\frac{1}{2\pi a_1}\int_{\RR} \frac{\exp\left(\I u a_1 V''(1) - \frac{1}{2}u^2 V''(1)\right)}{(1+\I u /a_1)^{1+\mu}}\D u
=\frac{1}{a_1}\int_{0}^{\infty}n(-a_1 V''(1) - y) p(y) \D y.
$$
This integral can now be computed in closed form and the result follows after simplification 
using the definition of $a_1$ and the duplication formula for the Gamma function.
\end{proof}

\subsection{Asymptotics in the case of non-existence of the limiting Fourier transform}\label{sec:V0V1limFT}
In this section, we are interested in the cases where $k\in\{V'(0),V'(1)\}$ whenever $\Hh_0$ is in force, 
which corresponds to all the regimes except $\Rr_{3b}$ and $\Rr_4$ at $V'(1)$.
In these cases, the limiting  Fourier transform is undefined at these points.
We show here however that the methodology of Section~\ref{sec:Genmethlargetime} 
can still be applied, 
and we start by verifying Assumption~\ref{assump:ustartaueq}.
The following quantity will be of primary importance: 
\begin{equation}\label{eq:Upsilon}
\Upsilon(a) := 1 + \frac{a \rho\xi}{\kappa-\rho\xi}\E^{\kappa t},
\end{equation}
for $a\in\{0,1\}$, 
and it is straightforward to check that $\Upsilon$ is well defined whenever $\Hh_0$ is in force.
\begin{lemma} \label{lem:v0v1ustarexist}
Let $a\in\{0,1\}$ and assume that $v \ne \theta\Upsilon(a)$.
Then, whenever $\Hh_0$ holds, Assumption~\ref{assump:ustartaueq} is satisfied 
with $\mathcal{A}=\{V'(a)\}$ and $u_{\infty}^*=a$.
Additionally, if $v<\theta\Upsilon(a)$, then there exists $\tau_1^*>0$ such that $u_{\tau}^*(k)<0$ 
if $a=0$ and $u_{\tau}^*(k)>1$ if $a=1$ for all $\tau>\tau_1^*$, and if $v>\theta\Upsilon(a)$, then there exists $\tau_1^*>0$ such that $u_{\tau}^*(k)\in (0,1)$ for all $\tau>\tau_1^*$;
\end{lemma}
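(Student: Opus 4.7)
The plan is a perturbation analysis around $u=a$. Fix $a\in\{0,1\}$ and set $k=V'(a)$; we are under $\Hh_0$ with $v\ne\theta\Upsilon(a)$, so $a\in\mathcal{D}_\infty^o$, where $V$ is $C^\infty$ and strictly convex, in particular $V''(a)>0$. Equation~\eqref{eq:u^*tau} can be rewritten as $G_\tau(u):=V'(u)-V'(a)+\tau^{-1}H'(u)=0$. Since $G_\tau\to V'-V'(a)$ locally uniformly with unique zero at $a$ and $G_\infty'(a)=V''(a)>0$, the implicit function theorem (or a direct contraction argument on a small neighbourhood of $a$) produces $\tau_1>0$ such that for $\tau>\tau_1$ a unique solution $u^*_\tau(k)$ exists near $a$, converges to $a$, and admits the Taylor expansion
\begin{equation*}
u^*_\tau(k) = a - \frac{H'(a)}{V''(a)}\,\tau^{-1} + \mathcal{O}(\tau^{-2}).
\end{equation*}
Uniqueness on the whole of $\mathcal{D}_\infty^o$ follows because for $\tau$ large enough $G_\tau'=V''+\tau^{-1}H''$ remains strictly positive on compact subsets of $(u_-,u_+)$, hence $G_\tau$ is strictly increasing wherever it matters. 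Thus only the \emph{sign} of $H'(a)$ remains to identify.

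The key step is to compute $H'(a)$ for $a\in\{0,1\}$ and to verify that it carries the sign of $\theta\Upsilon(a)-v$ (for $a=0$) or $v-\theta\Upsilon(a)$ (for $a=1$), up to a strictly positive multiplicative factor. Differentiating Lemma~\ref{lem:fwdmgflargetauexpansion} in $u$ at $u=a$ yields
\begin{equation*}
\tau V'(a) + H'(a) = \left.\partial_u \log\EE\bigl[\E^{uX_\tau^{(t)}}\bigr]\right|_{u=a} + \mathcal{O}(\E^{-d(a)\tau}).
\end{equation*}
At $a=0$ the right-hand side equals $\EE[X_\tau^{(t)}]=-\tfrac12\int_t^{t+\tau}\EE[V_s]\D s$ with $\EE[V_s]=v\E^{-\kappa s}+\theta(1-\E^{-\kappa s})$; matching the $\mathcal{O}(1)$ term in $\tau$ gives $H'(0)=\frac{\E^{-\kappa t}(\theta-v)}{2\kappa}$. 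At $a=1$, using the martingale property $\EE[\E^{X_\tau^{(t)}}]=1$, the derivative equals $\EE[X_\tau^{(t)}\E^{X_\tau^{(t)}}]$. Under the share measure $\D\QQ/\D\PP|_{\mathcal{F}_{T}}=\E^{X_T}$, Girsanov shows that $V$ satisfies a CIR SDE with reversion speed $\kappa-\rho\xi$ and long-run mean $\theta':=\kappa\theta/(\kappa-\rho\xi)$, while $X$ picks up an extra drift $\tfrac12 V_s$. Conditioning on $\mathcal{F}_t$ and using $\EE^\QQ[V_s\mid V_t]=V_t\E^{-(\kappa-\rho\xi)(s-t)}+\theta'(1-\E^{-(\kappa-\rho\xi)(s-t)})$ together with $\EE[V_t]=v\E^{-\kappa t}+\theta(1-\E^{-\kappa t})$, and again matching the $\mathcal{O}(1)$ term in $\tau$, produces, after rearrangement,
\begin{equation*}
H'(1) = \frac{\E^{-\kappa t}\bigl(v-\theta\Upsilon(1)\bigr)}{2(\kappa-\rho\xi)}.
\end{equation*}

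With these identities the sign statements follow at once from $u^*_\tau(k)-a\sim -H'(a)/(V''(a)\tau)$. For $a=0$, since $\Upsilon(0)=1$, the sign of $u^*_\tau(k)$ is that of $v-\theta=v-\theta\Upsilon(0)$, yielding $u^*_\tau(k)<0$ when $v<\theta\Upsilon(0)$ and $u^*_\tau(k)\in(0,1)$ when $v>\theta\Upsilon(0)$ for $\tau$ large (using $u^*_\tau(k)\to 0$). For $a=1$, $\kappa-\rho\xi>0$ under $\Hh_0$ at $V'(1)$ (which forces $\rho<\kappa/\xi$), so $u^*_\tau(k)-1$ has the sign of $\theta\Upsilon(1)-v$, giving $u^*_\tau(k)>1$ when $v<\theta\Upsilon(1)$ and $u^*_\tau(k)\in(0,1)$ when $v>\theta\Upsilon(1)$. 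The assumption $v\ne\theta\Upsilon(a)$ ensures $H'(a)\ne 0$ and hence a strict, $\tau$-independent sign for $\tau$ large. The one genuinely non-routine step is the Girsanov computation at $a=1$: the constant $\Upsilon(1)=1+\rho\xi\E^{\kappa t}/(\kappa-\rho\xi)$ emerges only after a somewhat delicate simplification in which the term $-\theta'/(\kappa-\rho\xi)$ from the $\theta(1-e^{-\kappa t})$ piece combines with the $e^{-\kappa t}$ contribution from $\EE[V_t]$ to expose exactly the $\Upsilon$ structure.
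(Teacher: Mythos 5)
Your proof is correct, but it follows a genuinely different route from the paper's. The paper proves existence and convergence of $u_\tau^*$ by a direct monotonicity argument: it observes that the solution of $H'(u)/\tau = V'(a)-V'(u)$ must lie on a definite side of $a$ (by the sign of $H'$ near $a$ combined with strict convexity of $V$), constructs an increasing (or decreasing) sequence of solutions, and identifies the limit as $a$ by contradiction; crucially, it simply \emph{asserts} the sign conditions on $H'(0)$ and $H'(1)$, leaving them to "straightforward calculus" on the explicit formula~\eqref{eq:VandH}. You instead get existence, local uniqueness and the first-order expansion $u_\tau^*(k)=a - H'(a)/(V''(a)\tau) + \mathcal{O}(\tau^{-2})$ in one stroke from the implicit function theorem applied to $(u,\varepsilon)\mapsto V'(u)-V'(a)+\varepsilon H'(u)$ at $(a,0)$, and — this is the interesting departure — you compute $H'(0)$ and $H'(1)$ \emph{probabilistically}: matching the $\mathcal{O}(1)$ term of $\partial_u\log\EE(\E^{uX_\tau^{(t)}})$ at $u=0$ against $\EE[X_\tau^{(t)}]$, and at $u=1$ against the share-measure conditional mean of the CIR process with reverted parameters $(\kappa-\rho\xi,\,\kappa\theta/(\kappa-\rho\xi))$. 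This makes the constant $\Upsilon(1)$ conceptually transparent (it is the correction to the long-run mean under the share measure mixed with the $\PP$-law of $V_t$), whereas in the paper $\Upsilon(1)$ appears only as the root of the algebraic identity $H'(1)=0$. I verified both closed forms $H'(0)=\tfrac{\E^{-\kappa t}(\theta-v)}{2\kappa}$ and $H'(1)=\tfrac{\E^{-\kappa t}(v-\theta\Upsilon(1))}{2(\kappa-\rho\xi)}$ against direct differentiation of $H$; they agree. Two small caveats, neither fatal: (a) differentiating the expansion of Lemma~\ref{lem:fwdmgflargetauexpansion} in $u$ requires that the exponentially small remainder be differentiable with uniform control near $a$, which you use without comment (it holds because $\Lambda_\tau^{(t)}$, $V$, $H$ are explicit and smooth on $\mathcal{D}_\infty^o$); (b) your global-uniqueness argument ("$G_\tau'>0$ on compact subsets") does not by itself exclude spurious solutions near the endpoints of $\mathcal{D}_\infty^o$ where $H''$ may blow up — but the paper's own proof is no more explicit on this point, so you are at the same level of rigour.
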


\begin{proof}
Recall that the function $H$ is defined in~\eqref{eq:VandH}.
We first prove the lemma in the case $a=0$, in which case $\Upsilon(0) = 1$.
Note that $H'(0)>0(<0)$ if and only if $v/\theta<1(>1)$ and $H'(0)=0$ if and only if $v=\theta$.
Now let $k=V'(0)$ and $v<\theta$ and consider the equation $H'(u)/\tau=V'(0)-V'(u)$.
Since $H'$ is continuous $H'$ is strictly positive in some neighbourhood of zero. 
In order for the right-hand side to be positive we require our solution to be in $(-\delta_0,0)$ for some $\delta>0$ since $V$ is strictly convex.
So let $\delta_1\in(-\delta_0,0)$.
With the right-hand side locked at $V'(0)-V'(\delta_1)>0$ we then adjust $\tau$ accordingly so that 
$H'(\delta_1)/\tau_1=V'(0)-V'(\delta_1)$.
We then set $u_{\tau_1}=\delta_1$.
It is clear that for $\tau>\tau_1$ there always exists a unique solution to this equation and furthermore $u^*_{\tau}$ is strictly increasing and bounded above by zero.
The limit has to be zero otherwise the continuity of $V'$ and $H'$ implies $\lim_{\tau\uparrow\infty}V'(u^*_{\tau})+H'(u^*_{\tau})/\tau=V'(\lim_{\tau\uparrow\infty}u^*_{\tau})<V'(0)$, a contradiction.
A similar analysis holds for $v>\theta$ and in this case $u^*_{\tau}$ converges to zero from above.
When $v=\theta$ then $u^*_{\tau}=0$ for all $\tau>0$ (i.e. it is a fixed point).
Analogous arguments hold for $k=V'(1)$: 
$H'(1)>0(<0)$ if and only if $v/\theta > \Upsilon(1)$ ($<\Upsilon(1)$) and $H'(1)=0$ if and only if $v/\theta=\Upsilon(1)$.
If $v/\theta>\Upsilon(1)$ ($<\Upsilon(1)$) then $u_\tau^*$ converges to $1$ from below (above) and 
when $v/\theta = \Upsilon(1)$, $u_{\tau}^*=1$ for all $\tau>0$.
\end{proof}
We now provide expansions for $u^*_{\tau}$ and the characteristic function $\Phi_{\tau,k,1/2}$.
Define the following quantities:
\begin{equation}\label{eq:v0v1apam}
\alpha_0:=
\frac{2\E^{-\kappa t}(v-\theta)\kappa}{\theta((2\kappa-\xi)^2+4\kappa\xi(1-\rho^2))}
,\quad
\alpha_1:=
\frac{2\E^{-\kappa t}(\kappa-\rho\xi)^2}{\kappa\theta ((2\kappa-\xi)^2+4\kappa\xi(1-\rho^2))}
(\theta\Upsilon(1) - v).
\end{equation}
The proofs are analogous to Lemma~\ref{lemma:a-dynamics} and~\ref{lemma:CharactNonSteepExp} 
and omitted.
Note that the asymptotics are in agreement with the properties of~$u^*_{\tau}(k)$ in Lemma~\ref{lem:v0v1ustarexist}.

\begin{lemma}\label{lem:v0v1}
Let $a\in\{0,1\}$ and assume that $v \ne \theta\Upsilon(a)$.
When $k=V'(a)$, the following expansions hold as~$\tau$ tends to infinity
(for some integer $s$):
\begin{align*}
u^*_\tau(k) & =  a + \alpha_a\tau^{-1} + \mathcal{O}\left(\tau^{-2}\right),\qquad
D(\tau,k) = \E^{\tau V'(a)(1-a)}\left(1+\mathcal{O}\left(\tau^{-1}\right)\right),\\
\Phi_{\tau,k,1/2}(u) & = \E^{-\frac{1}{2}u^2V''(a)}
\left(1+\left(\I \alpha_a u V''(a) - \frac{\I u^3 V'''(a)}{6}+\I u H'(a)\right)\tau^{-1/2}+\max(1,u^s)\mathcal{O}(\tau^{-1})\right).
\end{align*}
\end{lemma}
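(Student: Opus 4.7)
The plan is to mirror the proofs of Lemmas~\ref{lemma:a-dynamics} and~\ref{lemma:CharactNonSteepExp}, which handled the analogous boundary cases at $u_\pm^*$. Lemma~\ref{lem:v0v1ustarexist} already guarantees the existence of $u_\tau^*(k)$ solving~\eqref{eq:u^*tau} and converging to~$a$, so I posit the ansatz $u_\tau^*(k) = a + c_1 \tau^{-1} + \mathcal{O}(\tau^{-2})$ and substitute into $V'(u_\tau^*(k)) + \tau^{-1} H'(u_\tau^*(k)) = V'(a)$. Taylor-expanding each side around~$a$ and matching the $\tau^{-1}$ coefficient gives $V''(a)c_1 + H'(a) = 0$, hence $c_1 = -H'(a)/V''(a)$; one then checks that this coincides with the explicit constants $\alpha_0, \alpha_1$ defined in~\eqref{eq:v0v1apam}.

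With this expansion at hand, the asymptotics of $D(\tau,k)$ follow by substitution into~\eqref{eq:DF}. A second-order Taylor expansion yields $V(u_\tau^*) - V(a) = V'(a)\alpha_a \tau^{-1} + \mathcal{O}(\tau^{-2})$ and $H(u_\tau^*) - H(a) = \mathcal{O}(\tau^{-1})$; the key cancellation is that since $k = V'(a)$ the cross term $(k-V'(a))\alpha_a$ vanishes, so that $-\tau(k(u_\tau^* - 1) - V(u_\tau^*)) = \tau(V'(a)(1-a) + V(a)) + \mathcal{O}(\tau^{-1})$. Combined with the direct verification that $V(a) = H(a) = 0$ (using $d(0) = \kappa$, and $d(1) = \kappa - \rho\xi$ since $\rho \leq \kappa/\xi$ whenever $\Hh_0$ is in force at $V'(1)$, so that $\gamma(0) = \gamma(1) = 0$), only the term $\E^{\tau V'(a)(1-a)}$ survives at leading order. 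For the characteristic function, I apply Lemma~\ref{lem:charactsimp} and Taylor-expand $V(\I u/\sqrt{\tau} + u_\tau^*)$ to cubic order and $H(\I u/\sqrt{\tau} + u_\tau^*)$ to linear order around~$a$. The $\sqrt{\tau}$-order term $\I u V'(a)\sqrt{\tau}$ is cancelled by $-\I u k \sqrt{\tau}$, the $\tau^0$ term produces the Gaussian exponent $-u^2 V''(a)/2$, and collecting what remains at order $\tau^{-1/2}$ yields precisely $\I u[V''(a)\alpha_a - u^2 V'''(a)/6 + H'(a)]$.

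The main obstacle is the algebraic identification of $-H'(a)/V''(a)$ with the constants $\alpha_0, \alpha_1$ of~\eqref{eq:v0v1apam}, which reduces to a tedious but elementary calculation from~\eqref{eq:VandH} and~\eqref{eq:DGammaBeta}. The assumption $v \ne \theta\Upsilon(a)$ is precisely the condition $H'(a) \ne 0$, already apparent in the proof of Lemma~\ref{lem:v0v1ustarexist}, so that $\alpha_a$ does not vanish and the expansion is genuine. Uniformity of the remainder on $|u| \leq \tau^{1/6}$ together with the polynomial factor $\max(1,u^s)$ follows exactly as in the analogous bounds from~\cite{BR02}, already invoked in Lemma~\ref{lemma:CharactNonSteepExp}.
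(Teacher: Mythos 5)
Your argument is correct and is exactly what the paper intends --- the proof of this lemma is explicitly omitted in the paper with a pointer to Lemmas~\ref{lemma:a-dynamics} and~\ref{lemma:CharactNonSteepExp}, whose template you replicate faithfully: posit the $\tau^{-1}$-ansatz for $u^*_\tau$ in the saddle equation~\eqref{eq:u^*tau} to get $\alpha_a=-H'(a)/V''(a)$, Taylor-expand about~$a$ for $D$ using $V(a)=H(a)=0$ together with the cancellation $(k-V'(a))\alpha_a=0$, and feed Lemma~\ref{lem:charactsimp} into a third-order expansion for $\Phi_{\tau,k,1/2}$. As a side remark when you identify $-H'(a)/V''(a)$ with the constants of~\eqref{eq:v0v1apam}: a direct computation gives $H'(0)=(\theta-v)\E^{-\kappa t}/(2\kappa)$ and $V''(0)=\theta\eta^2/(4\kappa^2)$, so the denominator $(2\kappa-\xi)^2+4\kappa\xi(1-\rho^2)$ in~\eqref{eq:v0v1apam} appears to be a typographical slip for $\eta^2=(2\kappa-\rho\xi)^2+\xi^2(1-\rho^2)$; your method produces the correct quantity regardless.
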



We now define the following functions from $\RR^*\times\{0,1\}$ to $\RR$ and then provide expansions for $F$ in ~\eqref{eq:DF}:
\begin{equation}\label{eq:varpiauxil}
\left\{
\begin{array}{ll}
\varpi_1(q,a) & :=
\E^{q^2 V''(a)/2}\pi \left[2 \Nn(q\sqrt{V''(a)})-1-\sgn(q)\right],\\
\varpi_2(q,a) & :=
-\sqrt{\frac{2\pi}{V''(a)}}+\E^{q^2 V''(a)/2}\pi q \left[1+\sgn(q)-2 \Nn\left(q\sqrt{V''(a)}\right)\right],\\
\varpi_3(q,a) & :=
\frac{\sqrt{2\pi}(q^2V''(a)-1)}{(V''(a))^{3/2}}
 - 2\pi q^2 |q| \exp\left(\frac{q^2V''(a)}{2}\right)\Nn\left(-|q|\sqrt{V''(a)}\right),\\
\varpi(q,a) & :=
\frac{\varpi_1(q)}{2\pi}+\frac{1}{2\pi \sqrt{\tau}}\left((a_1V''(a)+H'(a))\varpi_2(q,a)
+\frac{V'''(a) \varpi_3(q,a)}{6}\right).
\end{array}
\right.
\end{equation}
\begin{lemma}\label{lem:v0v1F}
Let $a\in\{0,1\}$ and assume that $v \ne \theta\Upsilon(a)$.
Then the following expansions hold as $\tau$ tends to infinity (with $a_0$ given in~\eqref{eq:v0v1apam}):
$$
F\left(\tau,V'(a),1/2\right)  = 
\frac{\ind_{\{a=1\}}\sgn(\alpha_1) - \ind_{\{a=0\}}\sgn(\alpha_0)}{2}
 - \frac{1}{\sqrt{2\pi \tau V''(a)}}
\left[1 + \sgn(a)\left(\frac{V'''(a)}{6V''(a)}-H'(a)\right)\right]\left[1+\mathcal{O}\left(\frac{1}{\tau}\right)\right].
$$
\end{lemma}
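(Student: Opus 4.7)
The plan is to start from the definition
$F(\tau,V'(a),1/2)=\frac{1}{2\pi}\int_{\RR}\Phi_{\tau,V'(a),1/2}(u)\,\overline{C_{\tau,V'(a),1/2}(u)}\,du$,
substitute the expansions for $u^*_{\tau}(V'(a))$ and $\Phi_{\tau,V'(a),1/2}(u)$ from Lemma~\ref{lem:v0v1}, and then perform a partial fraction decomposition of the kernel. Writing $A_\tau:=\sqrt{\tau}(u_{\tau}^*-1)$ and $B_\tau:=\sqrt{\tau}u_{\tau}^*$ and using $A_\tau-B_\tau=-\sqrt{\tau}$, we obtain
$\overline{C_{\tau,V'(a),1/2}(u)}=\frac{\I}{u-\I A_\tau}-\frac{\I}{u-\I B_\tau}$.
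For $a=0$, Lemma~\ref{lem:v0v1} yields $B_\tau=\alpha_0/\sqrt\tau\to 0$ with $\sgn(B_\tau)=\sgn(\alpha_0)$, while $A_\tau\to-\infty$; for $a=1$ the roles are swapped, with $A_\tau=\alpha_1/\sqrt\tau\to 0$ (sign $\sgn(\alpha_1)$) and $B_\tau\to+\infty$. This is the source of the asymmetric signs appearing in the leading constant of the lemma.

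For the leading-order contribution, the far pole gives an integrand bounded by $\|\Phi\|_{L^1}/|\cdot|=\mathcal{O}(\tau^{-1/2})$, and is absorbed into the stated first-order correction. For the near-origin pole, we approximate $\Phi_{\tau,V'(a),1/2}(u)$ by the Gaussian $\E^{-u^{2}V''(a)/2}$ and invoke the Sokhotski--Plemelj identity
$\lim_{\epsilon\downarrow 0}\int_{\RR}\frac{\E^{-u^{2}V''(a)/2}}{u-\I\epsilon}\,du=\I\pi$,
together with its mirror for $\epsilon\uparrow 0$. Combined with the explicit partial-fraction prefactor and with the fact that the near-origin pole corresponds to $B_\tau$ (coefficient $-\I$) for $a=0$ and to $A_\tau$ (coefficient $+\I$) for $a=1$, this produces the leading term $(\ind_{\{a=1\}}\sgn(\alpha_1)-\ind_{\{a=0\}}\sgn(\alpha_0))/2$.

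The $\tau^{-1/2}$ correction comes from two sources. First, one inserts the $\tau^{-1/2}$-order terms in the expansion of $\Phi_{\tau,V'(a),1/2}(u)$ from Lemma~\ref{lem:v0v1}, namely the polynomial factor $\I\alpha_a u V''(a)-\tfrac{\I u^{3}}{6}V'''(a)+\I u H'(a)$ multiplying the Gaussian. When integrated against $1/(u-\I A_\tau)$ (resp.\ $1/(u-\I B_\tau)$), these produce Gaussian-times-Cauchy integrals with odd polynomial weights, which are precisely the quantities $\varpi_1,\varpi_2,\varpi_3$ introduced in~\eqref{eq:varpiauxil}: the $\varpi_j$'s can be evaluated in closed form via the complementary error function and its limits at zero. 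Second, one takes the next-order expansion of the partial-fraction prefactor in $A_\tau\sim\alpha_a/\sqrt\tau$ together with the $\mathcal{O}(1/\sqrt\tau)$ error in approximating $\Phi$ by its Gaussian limit. Combining both contributions, the signed combinations inside $\varpi(q,a)$ collapse (after using $V''(a)>0$ and standard Gaussian moment identities) to the stated expression $-(2\pi\tau V''(a))^{-1/2}[1+\sgn(a)(V'''(a)/(6V''(a))-H'(a))]$.

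The main obstacle is bookkeeping: tracking which side of the real axis the vanishing pole approaches, making sure the $\I/(u-\I A_\tau)$ versus $\I/(u-\I B_\tau)$ distinction is attributed to the correct value of $a$, and then, at subleading order, checking that the odd polynomial moments indeed assemble into exactly the combination $1+\sgn(a)(V'''(a)/(6V''(a))-H'(a))$ after the Plemelj-type limit is taken. Truncation of the integrals from $\RR$ to a window of size $\tau^{1/6}$ (so that the uniform remainder in Lemma~\ref{lem:charactsimp} can be applied) and a Gaussian tail estimate yield the error $\mathcal{O}(\tau^{-1})$ stated in the lemma; this same estimate provides the uniform control needed to justify exchanging the limit $\tau\uparrow\infty$ with the integrations carried out above.
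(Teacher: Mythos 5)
Your proposal follows the paper's own proof quite closely: the same partial-fraction decomposition of $\overline{C_{\tau,k,1/2}}$, the same observation that one pole collapses to the origin while the other escapes to $\mp\infty$, the same Gaussian-times-Cauchy integrals $\varpi_1,\varpi_2,\varpi_3$, and the same $\tau^{-1/2}$ bookkeeping driven by the polynomial factor $P(u)$ from Lemma~\ref{lem:v0v1}. Phrasing the near-pole leading order via Sokhotski--Plemelj is just another way to state the small-argument asymptotics of $\varpi_1$, so at that level the two arguments coincide.

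The one place your writeup under-specifies a genuinely load-bearing step is the \emph{far} pole. You correctly bound its contribution by $\mathcal{O}(\tau^{-1/2})$, but this is exactly the order at which it matters, so a bound is not enough: you must compute its leading coefficient. Neither the Plemelj identity nor the erfc ``limits at zero'' you invoke apply there, because its argument behaves like $\mp\sqrt{\tau}+\mathcal{O}(\tau^{-1/2})$, not like $o(1)$; what is needed is the opposite, Gaussian-tail (Mills-ratio) regime $\varpi_1(q,a)\sim\sqrt{2\pi}/(|q|\sqrt{V''(a)})$ and $\varpi_2,\varpi_3\to 0$ as $q\to\mp\infty$. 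That far-pole tail expansion is precisely what produces the ``$1$'' inside the square bracket $[\,1+\sgn(a)(\,\cdot\,)\,]$; the $H'(a)$ and $V'''(a)$ pieces come from the near pole with the $P(u)\tau^{-1/2}$ insertion, while the $\alpha_a$-dependent terms cancel between $\varpi_1$ and $\varpi_2$. Your stated ``second source'' --- a ``next-order expansion of the partial-fraction prefactor'' --- does not identify this, since the prefactor is the constant $\pm\I$ and has nothing to expand. Supplement the argument with the $q\to\mp\infty$ asymptotics of the $\varpi_j$'s (as the paper does via the cumulative normal tail), and the rest of your assembly goes through as claimed.
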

\begin{proof}
Consider the case $a=0$.
Set $P(u):=\I \alpha_0 u V''(0)-\I u^3 V'''(0)/6+\I u H'(0)$ and note that
$\overline{C(u,\tau,1/2)}:=\frac{1}{\left(-\I u-u^*_{\tau}\sqrt{\tau}\right)} - \frac{1}{\left(-\I u-u^*_{\tau}\sqrt{\tau}+\sqrt{\tau} \right)}$.
Using Lemma~\ref{lem:v0v1} and the definition of $F$ in~\eqref{eq:DF}:
\begin{equation}\label{eq:v0v1Fint}
F(\tau,V'(0),1/2) = 
\frac{1}{2\pi} \int_{\RR}\E^{-V''(0) u^2/2} \overline{C(u,\tau)} (1+P(u)\tau^{-1/2}+\mathcal{O}(\tau^{-1})) \D u.
\end{equation}
We cannot now simply Taylor expand $\overline{C(u,\tau,1/2)}$ for small $\tau$ and integrate term by term since in the limit $\overline{C(u,\tau,1/2)}$ is not $L^1$. 
This was the reason for introducing the time dependent term $u^*_{\tau}(V'(0))$ so that the Fourier transform exists for any $\tau>0$. 
Indeed, we easily see that $\overline{C(u,\tau,1/2)}=- \I/u + \mathcal{O}(\tau^{-1/2})$.
We therefore integrate these terms directly and then compute the asymptotics as $\tau$ tends to infinity.
Note first that since $|\overline{C(u,\tau,1/2)}|=\mathcal{O}(1)$, then
$\overline{C(u,\tau,1/2)} (1+P(u)\tau^{-1/2}+\mathcal{O}(\tau^{-1}))
=\overline{C(u,\tau,1/2)}(1+P(u)\tau^{-1/2})+\mathcal{O}(\tau^{-1})$.
Further for any $q\neq0$, $\int_{\RR}\E^{-V'' (0)u^2/2}\frac{1}{-\I u - q} \D u =\varpi_1(q,0)$,
$
\int_{\RR}\E^{-V''(0) u^2/2}\frac{\I u}{-\I u - q} \D u =\varpi_2(q,0)$
 and 
 $\int_{\RR}\E^{-V''(0) u^2/2}\frac{\I u^3}{-\I u - q} \D u =\varpi_3(q,0)$.
Now using the definition of $\varpi$ in~\eqref{eq:varpiauxil} and exchanging the integrals and the asymptotic (an analogous justification to the proof of Lemma~\ref{lemma:FourierAsymptotics}(i)) in~\eqref{eq:v0v1Fint} we obtain
$$
F(\tau,V'(0),1/2)
=\varpi\left(u^*_{\tau}\sqrt{\tau},0\right) - \varpi\left((u^*_{\tau}-1)\sqrt{\tau},0\right)
 + \mathcal{O}\left(\tau^{-1}\right).
$$
Using Lemma~\ref{lem:v0v1} and asymptotics of the cumulative normal distribution function we compute:
\begin{align*}
\varpi\left(u^*_{\tau}\sqrt{\tau},0\right) & = \varpi
\left(\alpha_0\tau^{-1/2}+\mathcal{O}\left(\tau^{-3/2}\right),0\right)
=-\frac{\sgn(\alpha_0)}{2}
-\frac{6H'(0)V''(0)-V'''(0)}{6\sqrt{2\pi}(V''(0))^{3/2}\sqrt{\tau}}
+\mathcal{O}\left(\tau^{-1}\right),\\
\varpi((u^*_{\tau}-1)\sqrt{\tau},0) & = 
\varpi\left(-\sqrt{\tau} + \alpha_0 \tau^{-1/2}+\mathcal{O}\left(\tau^{-3/2}\right),0\right)
=\frac{1}{\sqrt{2\pi V''(0) \tau}}
+\mathcal{O}\left(\tau^{-1}\right).
\end{align*}
The case $a=1$ is analogous using $\varpi(\cdot, 1)$ and the lemma follows.
\end{proof}

\begin{remark}
Consider $\Rf_{3b}$ and $\Rf_4$ with $k=V'(1)$ in Section~\ref{sec:asymmetricproofs}.
Here also $u^*_{\tau}(k)$ tends to $1$ and it is natural to wonder why we did not encounter the same issues with the limiting Fourier transform as we did in the present section.
The reason this was not a concern was that the speed of convergence ($\tau^{-1/2}$) of~$u^*_{\tau}$ to~$1$ 
was the same as that of the random variable $Z_{\tau,k,1/2}$ to its limiting value.
Intuitively the lack of steepness of the limiting lmgf was more important than any issues with the limiting Fourier transform.
In the present section steepness is not a concern, but again in the limit the Fourier transform is not defined. 
This becomes the dominant effect since $u^*_{\tau}(k)$ converges to~$1$
at a rate of~$\tau^{-1}$ while the re-scaled random variable~$Z_{\tau,k,1/2}$ 
converges to its limit at the rate~$\tau^{-1/2}$.
\end{remark}


\subsection{Forward smile asymptotics: Theorem~\ref{theorem:HestonLargeMatFwdSmileNonSteep}}\label{sec:fwdsmileproofnonsteeplargemat}

The general machinery to translate option price asymptotics into implied volatility asymptotics
has been fully developed by Gao and Lee~\cite{GL11}. 
We simply outline the main steps here. 
There are two main steps to determine forward smile asymptotics: 
(i) choose the correct root for the zeroth-order term in order to line up the domains (and hence functional forms) in Theorem~\ref{theorem:HestonLargeMatFwdSmileNonSteep} and Corollary~\ref{Cor:BSOptionLargeTime}; 
(ii) match the asymptotics. 

We illustrate this with a few cases from Theorem~\ref{theorem:HestonLargeMatFwdSmileNonSteep}.
Consider $\Rf_{3b}$ and $\Rf_4$ with $k>V'(1)$.
We have asymptotics for forward-start call option prices for $k>V'(1)$ in Theorem~\ref{theorem:HestonLargeMatFwdSmileNonSteep}.
The only BSM regime in Corollary~\ref{Cor:BSOptionLargeTime} where this holds is where 
$k\in(-\Sigma^2/2,\Sigma^2/2)$.
We now substitute our asymptotics for $\Sigma$ and at leading order we have the requirement: $k>V'(1)$ implies that 
$k\in(-v_0(k)/2,v_0(k)/2)$.
We then need to check that this holds only for the correct root $v_0$ used in the theorem.
Note that we only use the leading order condition here since if $k\in(-v_0(k)/2,v_0(k)/2)$ then there will always exist a $\tau_1>0$ such that $k\in(-v_0(k)/2+o(1),v_0(k)/2+o(1))$, for $\tau>\tau_1$.
Suppose now that we choose the root not as given in Theorem~\ref{theorem:HestonLargeMatFwdSmileNonSteep}.
Then for the upper bound we get the condition $kV(1)>0$. 
Since $V(1)<0$ we require $V'(1)<0$ and then this only holds for $V'(1)<k<0$.
This already contradicts $k>V'(1)$ but let's continue since it may be true for a more limited range of $k$.
The lower bound gives the condition
$(k-V(1))k>0$. 
But the upper bound implied that we needed $V'(1)<k<0$ and so further $k<V'(1)$.
Therefore $V'(1)<k<V(1)$ but this can never hold since simple computations show that $V'(1)>V(1)$.
Now let's choose the root according to the theorem.
For the upper bound we get the condition 
$
-\sqrt{(V^*-k)^2+k(V^*(k)-k)}< V^*(k)-k = -V(1)>0
$
and this is always true.
For the lower bound we get the condition $-\sqrt{(V^*-k)^2+k(V^*(k)-k)}< V^*(k)=k-V(1)$
and this is always true for $k>V'(1)$ since $V'(1)>V(1)$.
This shows that we have chosen the correct root for the zeroth-order term and we then simply match asymptotics for higher order terms.

As a second example consider $\Rf_2$ and $k>V'(u_{+}^*)$ in Theorem~\ref{theorem:HestonLargeMatFwdSmileNonSteep}.
Substituting the ansatz
$\sigma_{t,\tau}^2(k\tau)=v_0^{\infty}(k)+v_{1}^{\infty}(k,t)\tau^{-1/2}
 + v_{2}^{\infty}(k,t)\tau^{-1} + \mathcal{O}(\tau^{-3/2})$
into the BSM asymptotics for forward-start call options in Corollary~\ref{Cor:BSOptionLargeTime}, we find
$$
\EE\left(\E^{X_{\tau}^{(t)}}-\E^{k\tau}\right)^+
 = \exp\left(-\alpha_0^{\infty} \tau +\alpha_1^{\infty} \sqrt{\tau }+\alpha_2^{\infty}\right)
 \frac{4v_0^{3/2} }{\sqrt{2 \pi\tau} \left(4k^2-v_0^2\right)}\left(1+\mathcal{O}\left(\tau^{-1/2}\right)\right),
$$
where 
$\alpha_0^{\infty}:= \frac{k^2}{2v_0^{\infty}}-\frac{k}{2}+\frac{v_0^{\infty}}{8}$,
and
$\alpha_1^{\infty}:=  v_1^{\infty} \frac{4 k^2-v_0^2 }{8 v_0^2}$
and $\alpha_2^{\infty}$ is a constant, the exact value does not matter here.
We now equate orders with Theorem~\ref{theorem:largematasympcalls}.
At the zeroth order we get two solutions and since $V'(u^*_{+})>V(1)$, 
we choose the negative root such that matches the domains in Corollary~\ref{Cor:BSOptionLargeTime} and Theorem~\ref{theorem:largematasympcalls} for large~$\tau$ (using similar arguments as above).
At the first order we solve for $v_{1}^{\infty}$. 
But now at the second order, we can only solve for higher order terms if $\mu=1/2$ due to the term $\tau^{\mu/2-3/4} = \tau^{-1/2}$ in the forward-start option asymptotics in Theorem~\ref{theorem:largematasympcalls}. 
All other cases follow analogously.
\newpage


\appendix

\section{Proof of Lemma~\ref{lem:nonsteeplem}}\label{sec:prooflargematfourtransf}

Define the function $C_{\tau,k,\alpha}:\RR\to\mathbb{C}$ by
\begin{equation}\label{eq:Cdeflargetime}
C_{\tau,k,\alpha}(u):=\frac{\tau^{\alpha}}{(u+\I \tau^{\alpha}(u^*_{\tau}-1)(u+\I \tau^{\alpha}u^*_{\tau})},
\end{equation}
with its conjugate given in~\eqref{eq:CConj}.

\begin{lemma}\label{lem:L1lem}
There exists $\tau^*_0>0$ such that
$\int_{\RR}|\Phi_{\tau,k,\alpha}(u) \overline{C_{\tau,k,\alpha}(u)}|  \D u 
<\infty$
for all $\tau>\tau^*_0$, $k\in\mathcal{A}$, $u^*_{\tau}(k)\not\in\{0,1\}$.
\end{lemma}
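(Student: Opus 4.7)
The strategy is a short one. By construction in~\eqref{eq:CharacNonSteep}, $\Phi_{\tau,k,\alpha}$ is the characteristic function of the real-valued random variable $Z_{\tau,k,\alpha}$ under the probability measure $\QQ_{k,\tau}$, so $|\Phi_{\tau,k,\alpha}(u)|\leq 1$ for every $u\in\RR$. Consequently, the lemma reduces to proving that $\overline{C_{\tau,k,\alpha}}\in L^1(\RR)$ for $\tau$ sufficiently large and $u^*_\tau(k)\notin\{0,1\}$, a purely deterministic estimate on a simple rational function.

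The function $\overline{C_{\tau,k,\alpha}}$ defined in~\eqref{eq:CConj} is meromorphic in $u$, with two simple poles lying on the imaginary axis at $\I\tau^\alpha u^*_\tau(k)$ and $\I\tau^\alpha(u^*_\tau(k)-1)$. Both poles lie strictly off the real line precisely when $u^*_\tau(k)\notin\{0,1\}$, which is the hypothesis of the lemma. Using the elementary inequality $|u-\I a|\geq\max(|u|,|a|)$ valid for any real $a$, I intend to derive the two complementary bounds
$$
|\overline{C_{\tau,k,\alpha}(u)}|\leq \frac{\tau^\alpha}{u^2}
\quad\text{and}\quad
|\overline{C_{\tau,k,\alpha}(u)}|\leq \frac{1}{\tau^\alpha\,|u^*_\tau(k)(u^*_\tau(k)-1)|},
$$
both valid for all $u\in\RR$. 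Splitting the integration at $|u|=1$ and using the second bound on the bounded piece and the first (giving an $L^1$ tail) on $\{|u|>1\}$ immediately produces the desired finite estimate; no dominated convergence or further oscillatory cancellation is required, since we only claim integrability, not an asymptotic value.

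As for the threshold $\tau_0^*$: Assumption~\ref{assump:ustartaueq} guarantees that $u^*_\tau(k)$ lies in $\mathcal{D}_\infty^o$ for all $\tau>\tau_1$ with $\lim_{\tau\uparrow\infty}u^*_\tau(k)=u^*_\infty\in\overline{\mathcal{D}_\infty}\cap(u_-,u_+)$. In the regimes where $u^*_\infty\notin\{0,1\}$, continuity ensures $u^*_\tau(k)$ stays bounded away from $\{0,1\}$ eventually, so both bounds above remain meaningful. In the borderline cases where $u^*_\infty\in\{0,1\}$ (relevant for Section~\ref{sec:V0V1limFT}), Lemma~\ref{lem:v0v1ustarexist} asserts that $u^*_\tau(k)$ approaches the limit strictly from one side and is therefore never equal to it for $\tau$ past some threshold; I will take $\tau_0^*$ to be the maximum of these thresholds.

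I do not anticipate any genuine obstacle: the only subtlety is the trivial observation that the pole structure of $\overline{C_{\tau,k,\alpha}}$ degenerates exactly at $u^*_\tau(k)\in\{0,1\}$, which is why that case is explicitly excluded from the statement, and must simply be flagged rather than overcome.
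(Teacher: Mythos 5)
Your proof is correct and follows essentially the same route as the paper's: bound $|\Phi_{\tau,k,\alpha}|\leq 1$, bound $|\overline{C_{\tau,k,\alpha}}|$ by $\tau^{-\alpha}/|u^*_\tau(k)(u^*_\tau(k)-1)|$ on a bounded region and by $\tau^\alpha/u^2$ on its complement, then integrate --- the paper splits at $|u|=\tau^\alpha$ rather than $|u|=1$, but the choice is immaterial. Your closing discussion of the threshold $\tau^*_0$ via Lemma~\ref{lem:v0v1ustarexist} is harmless but unnecessary, since $u^*_\tau(k)\notin\{0,1\}$ is already part of the hypothesis of the lemma.
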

\begin{proof}
We compute:
\begin{align}
\nonumber
\int_{\RR}\left| \Phi_{\tau,k}(u) \overline{C_{\tau,k,\alpha}(u)}  \right|  \D u 
&=
\int_{|u|\leq \tau^{\alpha}}\left| \Phi_{\tau,k,\alpha}(u) \overline{C_{\tau,k,\alpha}(u)}  \right|  \D u 
+\int_{|u|>\tau^{\alpha}}\left| \Phi_{\tau,k,\alpha}(u) \overline{C_{\tau,k,\alpha}(u)}  \right|  \D u  \\ \label{eq:finiteineq}
&\leq \frac{2\tau^{-\alpha}}{|u^*_{\tau}(k)(u_{\tau}^*(k)-1)|} \int_{|u|\leq \tau^{\alpha}}\left| \Phi_{\tau,k,\alpha}(u) \right| \D u 
+\int_{|u|>1}\frac{\D u}{u^2},
\end{align}
where the inequality follows from the simple bounds
$$
\left|\overline{C_{\tau,k,\alpha}(u)}  \right|\leq \frac{\tau^{-2\alpha}}{|u^*_{\tau}(k)(u_{\tau}^*(k)-1)|},\text{ for all }|u| \leq \tau^{\alpha},
\qquad
\left|\overline{C_{\tau,k,\alpha}(u)}  \right|\leq \frac{\tau^{\alpha}}{u^2}
\qquad
\text{and}\qquad
|\Phi_{\tau,k,\alpha}|\leq1.
$$
Finally~\eqref{eq:finiteineq} is finite since $u^*_{\tau}(k)\neq 1$, $u^*_{\tau}(k)\neq 0$.
\end{proof}

We denote the convolution of two functions $f,h\in L^1(\RR)$ by $(f\ast g)(x):=\int_{\RR}f(x-y)g(y) \D y$,
and recall that $(f\ast g)\in L^1(\RR)$.
For $f\in L^1(\RR)$, we denote its Fourier transform by
$(\mathcal{F}f)(u):=\int_{\RR}\E^{\I u x}f(x) \D x$ 
and the inverse Fourier transform by $
(\mathcal{F}^{-1}h)(x):=\frac{1}{2\pi}\int_{\RR}\E^{-\I u x}h(u) \D u.
$ 
For $j=1,2,3$, define the functions~$g_j:\RR_+^2\to\RR_+$ by
$$
g_j(x,y):=\left\{ 
  \begin{array}{ll}
(x-y)^+, \quad & \text{if } j=1, \\
(y-x)^+,\quad & \text{if } j=2,\\
 \min(x,y),\quad & \text{if } j=3.
 \end{array} \right.
$$
and define~$\widetilde{g}_j:\RR\to\RR_+$ by
$\label{eq:gjtilde}
\widetilde{g}_j(z) := \exp\left(-u^*_{\tau}(k) z \tau^{\alpha}\right)g_j(\E^{z \tau^{\alpha}},1)$.
Recall the $\QQ_{k,\tau}$-measure defined in~\eqref{eq:MeasureChange} 
and the random variable~$Z_{k,\tau,\alpha}$ defined on page~\pageref{eq:ztaukalpha}. 
We now have the following result:
\begin{lemma}~\label{lem:optpricerepnonsteep}
There exists $\tau^*_1>0$ such that for all $k\in\mathcal{A}$ and $\tau>\tau^*_1$:
\begin{align}~\label{eq:Parslargetime}
\EE^{\QQ_{k,\tau}}\left[\widetilde{g}_j(Z_{k,\tau,\alpha})\right]
= \begin{dcases*}
       \frac{1}{2\pi}\int_{\RR} \Phi_{\tau,k,\alpha}(u)\overline{C_{\tau,k,\alpha}(u)} \D u,  & if $j=1,u_{\tau}^*(k)>1$,\\
       \frac{1}{2\pi}\int_{\RR}\Phi_{\tau,k,\alpha}(u)\overline{C_{\tau,k,\alpha}(u)} \D u,  & if $j=2, u_{\tau}^*(k)<0$, \\
  -\frac{1}{2\pi}\int_{\RR}\Phi_{\tau,k,\alpha}(u)\overline{C_{\tau,k,\alpha}(u)} \D u,  & if $j=3, 0<u_{\tau}^*(k)<1. $
        \end{dcases*}
\end{align}
\end{lemma}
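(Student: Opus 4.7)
The plan is to apply Parseval's identity: rewrite the density of~$Z_{k,\tau,\alpha}$ under~$\QQ_{k,\tau}$ as the inverse Fourier transform of~$\Phi_{\tau,k,\alpha}$, exchange the order of integration, and identify the resulting Fourier transform of~$\widetilde{g}_j$ (up to a fixed sign) with~$\overline{C_{\tau,k,\alpha}}$. Concretely, one aims at
\begin{equation*}
\EE^{\QQ_{k,\tau}}\bigl[\widetilde{g}_j(Z_{k,\tau,\alpha})\bigr]
= \frac{1}{2\pi}\int_\RR \Phi_{\tau,k,\alpha}(u)\,\overline{\mathcal{F}\widetilde{g}_j(u)}\,\D u,
\end{equation*}
and the three cases of the lemma differ only in the support structure of~$\widetilde{g}_j$ and the resulting overall sign.

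The key elementary step is the exact computation of~$\mathcal{F}\widetilde{g}_j$. Under the sign hypothesis on~$u^*_\tau(k)$ in each case, $\widetilde{g}_j$ is a combination of two exponential tails on~$\{z>0\}$ or~$\{z<0\}$ whose rates are exactly~$u^*_\tau\tau^\alpha$ and~$(u^*_\tau-1)\tau^\alpha$; direct integration over the relevant half-line is elementary and yields, after factoring~$\pm\I$ out of each bracket of the denominator, an expression proportional to~$C_{\tau,k,\alpha}$. For~$j=1,2$ the relevant integral is a \emph{difference} of two reciprocals (since the $(\cdot)^+$ structure in~$g_1,g_2$ produces a sign change), whereas for~$j=3$ the decomposition $\min(x,1)=x\ind_{\{x<1\}}+\ind_{\{x\geq 1\}}$ splits~$\widetilde{g}_3$ into two half-line exponentials that combine as a \emph{sum}, flipping the overall sign and thus producing the minus sign appearing in the third case of the lemma. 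In all three cases, the hypothesis $u^*_\tau(k)\notin\{0,1\}$ is precisely what guarantees that the relevant exponentials are integrable over the appropriate half-line.

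The main technical obstacle is the interchange of integrals in the Parseval calculation. Since~$\Phi_{\tau,k,\alpha}$ is only bounded (by~$1$) but not known to be in~$L^1(\RR)$ a priori, one cannot simply invoke Fubini on the double integral $\int\!\int \widetilde{g}_j(z)\,\E^{-\I u z}\Phi_{\tau,k,\alpha}(u)\,\D u\,\D z$. The correct route is to use the~$L^1$ bound of Lemma~\ref{lem:L1lem} on the product $\Phi_{\tau,k,\alpha}\overline{C_{\tau,k,\alpha}}$, valid for~$\tau$ sufficiently large, together with the explicit identification $\mathcal{F}\widetilde{g}_j=\pm C_{\tau,k,\alpha}$ from the preceding paragraph; this allows the right-hand side of the desired identity to be expressed as an absolutely convergent Fourier integral, and then Parseval's identity (equivalently, Fourier inversion applied to $\widetilde{g}_j\in L^1\cap L^2$ and to~$p_Z$) concludes. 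Careful sign bookkeeping---matching the formula in the lemma for each~$j\in\{1,2,3\}$---completes the proof.
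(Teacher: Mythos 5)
Your proposal follows essentially the same route as the paper: compute $\mathcal{F}\widetilde{g}_j$ explicitly on the relevant half-lines (with the sign hypothesis on $u^*_\tau(k)$ ensuring integrability and determining which boundary terms vanish), then close the argument by identifying the left side with a Fourier integral of $\Phi_{\tau,k,\alpha}\overline{C_{\tau,k,\alpha}}$, with Lemma~\ref{lem:L1lem} supplying the $L^1$ control needed to make the inversion rigorous. The only cosmetic difference is that you invoke Parseval directly on the density of~$Z_{k,\tau,\alpha}$, whereas the paper writes $\EE^{\QQ_{k,\tau}}[\widetilde{g}_j(Z_{\tau,k,\alpha})]=(q_j\ast p)(k\tau^{1-\alpha})$ with~$p$ the density of~$X^{(t)}_\tau\tau^{-\alpha}$ and applies the convolution theorem followed by the inversion theorem; these are equivalent formulations of the same Fourier argument, and both rest on Lemma~\ref{lem:L1lem} at exactly the same point.
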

\begin{proof}
Assuming (for now) that $\widetilde{g}_j\in L^1(\RR)$, we have for any $u\in\RR$,
$
\left(\mathcal{F}\widetilde{g}_{j}\right)(u)
:=\int_{\RR}\widetilde{g}_j(z)\E^{\I u z}\D z,
$
for $j=1,2,3$.
For $j=1$ we can write
$$
\int_0^{\infty}\E^{-u^*_{\tau}z\tau^{\alpha}}\left(\E^{z\tau^{\alpha}}-1\right)\E^{\I u z}\D z
=\left[\frac{\E^{z\left(\I u-u^*_{\tau}\tau^{\alpha}+\tau^{\alpha}\right)}}
{\left(\I u-u^*_{\tau}\tau^{\alpha}+\tau^{\alpha}\right)}\right]_{0}^{\infty}
-\left[\frac{\E^{z\left(\I u-u^*_{\tau}\tau^{\alpha}\right)}}
{\left(\I u-u^*_{\tau}\tau^{\alpha}\right)}\right]_{0}^{\infty} 
=C_{\tau,k,\alpha}(u),
$$
which is valid for $u^*_{\tau}(k)>1$ with $C_{\tau,k,\alpha}$ in~\eqref{eq:Cdeflargetime}. 
For $j=2$ we can write
$$
\int_{-\infty}^0\E^{-u^*_{\tau}z\tau^{\alpha}}\left(1-\E^{z\tau^{\alpha}}\right)\E^{\I u z}\D z
=\left[\frac{\E^{z\left(\I u-u^*_{\tau}\tau^{\alpha}\right)}}
{\left(\I u-u^*_{\tau}\tau^{\alpha}\right)}\right]_{-\infty}^{0}-\left[\frac{\E^{z\left(\I u-u^*_{\tau}\tau^{\alpha}+\tau^{\alpha}\right)}}
{\left(\I u-u^*_{\tau}\tau^{\alpha}+\tau^{\alpha}\right)}\right]_{-\infty}^{0} 
=C_{\tau,k,\alpha}(u),
$$
which is valid for $u_{\tau}^*(k)<0$. Finally, for $j=3$ we have
\begin{align*}
\int_{\RR}\E^{-u^*_{\tau}z\tau^{\alpha}}\left(\E^{z\tau^{\alpha}}\wedge1\right)\E^{\I u z}\D z
&=\int_{-\infty}^{0}\E^{-u^*_{\tau}z\tau^{\alpha}}\E^{z\tau^{\alpha}}\E^{\I u z}\D z
+\int_{0}^{\infty}\E^{-u^*_{\tau}z\tau^{\alpha}}\E^{\I u z}\D z \\
&=\left[\frac{\E^{z\left(\I u-u^*_{\tau}\tau^{\alpha}+\tau^{\alpha}\right)}}
{\left(\I u-u^*_{\tau}\tau^{\alpha}+\tau^{\alpha}\right)}\right]_{-\infty}^{0}
+\left[\frac{\E^{z\left(\I u-u^*_{\tau}\tau^{\alpha}\right)}}
{\left(\I u-u^*_{\tau}\tau^{\alpha}\right)}\right]_{0}^{\infty} 
=-C_{\tau,k,\alpha}(u),
\end{align*}
which is valid for $0<u^*_{\tau}(k)<1$. 
From the definition of 
the $\QQ_{k,\tau}$-measure in~\eqref{eq:MeasureChange} and the random variable $Z_{k,\tau,\alpha}$ on page~\pageref{eq:ztaukalpha} we have
$$
\EE^{\QQ_{k,\tau}}\left[\widetilde{g}_j(Z_{\tau,k,\alpha})\right]
=\int_{\RR}q_j(k\tau^{1-\alpha}-y)p(y) \D y = (q_j\ast p)(k\tau^{1-\alpha}),
$$
with $q_j(z)\equiv\widetilde{g}_j(-z)$ and $p$ denoting the density of $X_{\tau}^{(t)} \tau^{-\alpha}$.
On the strips of regularity derived above we know there exists $\tau_0>0$ such that
$q_j\in L^1(\RR)$ for $\tau>\tau_0$.
Since $p$ is a density, $p\in L^1(\RR)$, and therefore 
\begin{equation}\label{eq:conv}
\mathcal{F}(q_j\ast p)(u)=\mathcal{F}q_j(u) \mathcal{F}p(u).
\end{equation}
We note that $\mathcal{F}q_j(u)\equiv\mathcal{F}\widetilde{g}_j(-u)\equiv\overline{\mathcal{F}\widetilde{g}_j(u)}$ and hence
\begin{equation}\label{eq:simpconv}
\mathcal{F}q_j(u) \mathcal{F}p(u)
\equiv\E^{\I u k\tau^{1-\alpha}}\Phi_{\tau,k,\alpha}(u) \overline{C_{\tau,k,\alpha}(u)}.
\end{equation}
Thus by Lemma~\ref{lem:L1lem} there exists $\tau_1>0$ 
such that $\mathcal{F}q_j \mathcal{F}p\in L^1(\RR)$ for $\tau>\tau_1$. 
By the inversion theorem~\cite[Theorem 9.11]{R87} this then implies from~\eqref{eq:conv} and~\eqref{eq:simpconv} that for $\tau>\max(\tau_0,\tau_1)$:
\begin{align*}
\EE^{\QQ_{k,\varepsilon}}\left[\widetilde{g}_j(Z_{\tau,k,\alpha})\right]
&= (q_j\ast p)(k\tau^{1-\alpha}) 
=\mathcal{F}^{-1}\left(\mathcal{F}q_j(u) \mathcal{F}p(u)\right)(k\tau^{1-\alpha}) \\ 
&=\frac{1}{2\pi}\int_{\RR} \E^{-\I u k\tau^{1-\alpha}}\mathcal{F}q_j(u) \mathcal{F}p(u) \D u
= \frac{1}{2\pi}\int_{\RR} \Phi_{\tau,k,\alpha}(u) \overline{C_{\tau,k,\alpha}(u)}  \D u .
\end{align*}
\end{proof}

We now move onto the proof of Lemma~\ref{lem:nonsteeplem}.
We use our time-dependent change of measure defined in~\eqref{eq:MeasureChange} to write our forward-start option price for $j=1,2,3$ as
$$
\EE\left(g_j(\E^{X_{\tau}^{(t)}},\E^{k\tau})\right)
= \E^{-\tau\left[ku^*_{\tau}(k)-\Lambda^{(t)}_{\tau}\left(u^*_{\tau}(k)\right)\right]}\E^{k\tau}\EE^{\QQ_{k,\tau}}\left[\widetilde{g}_j(Z_{\tau,k,\alpha})\right],
$$
with $Z_{\tau,k,\alpha}$ defined on page~\pageref{eq:ztaukalpha}.
We now apply Lemma~\ref{lem:optpricerepnonsteep} and then convert to forward-start call option prices
using Put-Call parity and that in the Heston model $(\E^{X_t})_{t\geq0}$ is a true martingale~\cite[Proposition 2.5]{AP07}.
Finally the expansion for $\exp\left({-\tau\left(k (u^*_{\tau}(k)-1)-\Lambda^{(t)}_{\tau}(u^*_{\tau}(k))\right)}\right)$
follows from Lemma~\ref{lem:fwdmgflargetauexpansion}.


\section{Tail Estimates}

\begin{lemma}\label{lem:expsmalllargetime}
There exists $\beta>0$ such that the following tail estimate holds for all $k \in \mathcal{A}$ and $u^*_{\tau}(k)\not\in\{0,1\}$
as $\tau$ tends to infinity:
$
\left| \int_{|u|>\tau^{\alpha}} \Phi_{\tau,k,\alpha}(u) \overline{C_{\tau,k,\alpha}(u)} \D u \right| = \mathcal{O}(\E^{-\beta \tau}).
$
\end{lemma}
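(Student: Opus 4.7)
The plan is to estimate the tail integrand uniformly by coupling a crude polynomial bound on the Fourier kernel $\overline{C_{\tau,k,\alpha}}$ with a uniform exponential-in-$\tau$ bound on the characteristic function $\Phi_{\tau,k,\alpha}$. Since by Assumption~\ref{assump:ustartaueq} $u_\tau^*(k)\to u_\infty^*\notin\{0,1\}$, the two factors in the denominator of~\eqref{eq:CConj} each dominate~$|u|$ on the tail~$\{|u|>\tau^\alpha\}$ for $\tau$ large, giving $|\overline{C_{\tau,k,\alpha}(u)}|\le \tau^\alpha/u^2$ there and in particular $\int_{|u|>\tau^\alpha}|\overline{C_{\tau,k,\alpha}(u)}|\,\D u\le 2$. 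It therefore suffices to find $\beta>0$ such that $|\Phi_{\tau,k,\alpha}(u)|\le \E^{-2\beta\tau}$ uniformly on this tail for~$\tau$ large.

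To prove this, I would use the exact representation
\[
|\Phi_{\tau,k,\alpha}(u)|=\exp\!\bigl(\tau\Re\Lambda_\tau^{(t)}(u_\tau^*+\I y)-\tau\Lambda_\tau^{(t)}(u_\tau^*)\bigr),\qquad y:=u\tau^{-\alpha},
\]
which follows from~\eqref{eq:MeasureChange} and~\eqref{eq:CharacNonSteep}. The analytic extension of Lemma~\ref{lem:fwdmgflargetauexpansion} reduces the task to bounding $\Re V(u_\tau^*+\I y)-V(u_\tau^*)$, up to an $\mathcal{O}(\log(2+|y|)/\tau)$ contribution from $H$ and an exponential remainder controlled by $\Re d(u_\tau^*+\I y)$. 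Starting from $d(u)^2=(\kappa-\rho\xi u)^2+\xi^2u(1-u)$ in~\eqref{eq:DGammaBeta}, a direct computation gives
\[
d(u_\tau^*+\I y)^2=d(u_\tau^*)^2+\xi^2(1-\rho^2)y^2+\I P(u_\tau^*,y),
\]
with $P$ real and linear in $y$. The elementary identity $\Re\sqrt{z}=\sqrt{(|z|+\Re z)/2}$ then yields $\Re d(u_\tau^*+\I y)>d(u_\tau^*)$ strictly for $y\ne 0$, with asymptotic rate $\xi\sqrt{1-\rho^2}|y|$ as $|y|\to\infty$. Since $V=\frac{\mu}{2}(\kappa-\rho\xi\cdot-d)$, this translates into $V(u_\tau^*)-\Re V(u_\tau^*+\I y)=\frac{\mu}{2}(\Re d(u_\tau^*+\I y)-d(u_\tau^*))>0$ for $y\ne 0$, tending to $+\infty$ linearly in $|y|$. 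Continuity in $y$ together with $u_\tau^*\to u_\infty^*$ then deliver $\sup_{|y|\ge 1}\{\Re V(u_\tau^*+\I y)-V(u_\tau^*)\}\le -3\beta$ for some $\beta>0$ and $\tau$ large; the $\tau^{-1}H$ correction only contributes $\mathcal{O}(\log(2+|y|))$ (as inspection of~\eqref{eq:VandH} shows) which is absorbed either by the constant gap $-3\beta\tau$ on $\{1\le|y|\le M\}$ or by the linear-in-$|y|\tau$ gain on $\{|y|>M\}$. Combining with the kernel bound,
\[
\left|\int_{|u|>\tau^\alpha}\!\Phi_{\tau,k,\alpha}(u)\overline{C_{\tau,k,\alpha}(u)}\,\D u\right|\le \tau^\alpha\E^{-2\beta\tau}\int_{|u|>\tau^\alpha}\frac{\D u}{u^2}=2\E^{-2\beta\tau}=\mathcal{O}(\E^{-\beta\tau}).
\]

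The main obstacle I expect is securing the uniformity of the gap $\beta$ in the boundary regimes, when $u_\infty^*\in\{u_\pm^*,1\}$ lies on the edge of $\mathcal{D}_\infty$ and $u_\tau^*$ approaches it at the rates prescribed by Lemmas~\ref{lemma:a-dynamics} and~\ref{lem:v0v1}. At the endpoints $u_\pm$ of the strict convexity interval one has $d(u_\pm)=0$, so $d(u_\tau^*)$ may be small and the exponential remainder in Lemma~\ref{lem:fwdmgflargetauexpansion} risks overwhelming the $-2\beta\tau$ leading term; one must then check that $\Re d(u_\tau^*+\I y)$ stays bounded away from zero uniformly in $\tau$ for $|y|\ge 1$, exploiting the precise rate at which $u_\tau^*$ reaches the boundary. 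If the $V+H/\tau$ expansion proves too loose at those exceptional points, a fallback is to argue directly from the closed-form expression~\eqref{eq:LambdaTau} for $\Lambda_\tau^{(t)}$ and inspect the real part of each of its three summands separately.
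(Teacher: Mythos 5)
Your argument is correct and follows essentially the same route as the paper: both bound the kernel by $\tau^\alpha/u^2$, then reduce the tail estimate to a uniform exponential bound on $|\Phi_{\tau,k,\alpha}|$ obtained from the $\Lambda_\tau^{(t)}=V+H/\tau+\text{exp.\ remainder}$ expansion combined with the strict decay of $\Re V(\I y+u_\tau^*)$ away from $y=0$ (the paper relegates this last point to Lemma~\ref{lem:largetimesaddlefwdmgf}(ii)--(iii) with proofs omitted; your explicit computation via $d(u_\tau^*+\I y)^2$ and the $\Re\sqrt{z}$ identity in effect supplies them). The boundary concern you raise at the end is in fact moot, since Assumption~\ref{assump:ustartaueq} already restricts $u_\infty^*$ to $(u_-,u_+)$, where $d(u_\infty^*)>0$, so $d(u_\tau^*)$ stays bounded away from zero.
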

\begin{proof}
By the definition of $\Phi_{\tau,k,\alpha}$ in~\eqref{eq:CharacNonSteep} we have
$
|\Phi_{\tau,k,\alpha}(z \tau^{\alpha})|=\exp\left(\tau(\Re [\Lambda_{\tau}^{(t)}(\I z+u^*_\tau)]-\Lambda_{\tau}^{(t)}(u^*_\tau))\right).
$
For  $|z|>1$ we have the simple estimate 
$
\left| \overline{C_{\tau,k,\alpha}(z\tau^{\alpha})} \right|
\leq \tau^{-\alpha}/z^2,
$
and therefore
$$
\left| \int_{|u|>\tau^{\alpha}} \Phi_{\tau,k,\alpha}(u) \overline{C_{\tau,k,\alpha}(u)} \D u \right|
\leq
\tau^{\alpha}\int_{|z|>1}\left| \Phi_{\tau}(z \tau^{\alpha}) \right| \left|  \overline{C_{\tau,k,\alpha}(z \tau^{\alpha})}  \right| \D z
\leq
  \int_{|z|>1}\E^{\tau(\Re [\Lambda_{\tau}^{(t)}(\I z+u^*_\tau)]-\Lambda_{\tau}^{(t)}(u^*_\tau))} \frac{\D z}{z^2},
$$
for all $\tau>0$.
We deal with the case $z>1$. Analogous arguments hold for the case $z<-1$. 
Lemma~\ref{lem:largetimesaddlefwdmgf}(i) implies that there exists $\tau_1$ such that for $\tau>\tau_1$: 
$$
\int_{z>1}\E^{\tau(\Re [\Lambda_{\tau}^{(t)}(\I z+u^*_\tau)]-\Lambda_{\tau}^{(t)}(u^*_\tau))} \frac{\D z}{z^2}
\leq
\E^{\tau(\Re [V(\I+u^*_\tau)]-V(u^*_\tau))+\mathcal{O}(1)} \int_{z>1} \frac{\D z}{z^2}.
$$
Using Lemma~\ref{lem:largetimesaddlefwdmgf}(ii) we compute
$$
\Re \Lambda_{\tau}^{(t)}(\I+u^*_\tau)-\Lambda_{\tau}^{(t)}(u^*_\tau)
=\Re V(\I+u^*_\tau)-V(u^*_\tau)
+(\Re H(\I+u^*_\tau)-H(u^*_\tau))/\tau
+\mathcal{O}(\tau^{-n}),
$$
for any $n>0$.
Now using that $V$ and $H$ are continuous and Assumption~\ref{assump:ustartaueq} we have that
$
\Re V(\I+u^*_\tau)-V(u^*_\tau)=\Re V(\I+u_{\infty})-V(u_{\infty})+o(1)
$
and
$
\Re H(\I+u^*_\tau)-H(u^*_\tau)=\Re H(\I+u_{\infty})-H(u_{\infty})+o(1),
$
as $\tau$ tends to infinity.
Lemma~\ref{lem:largetimesaddlefwdmgf}(iii) implies that $\Re V(\I+u_{\infty})-V(u_{\infty})<0$ and the lemma follows.
\end{proof}

\begin{lemma}\label{lem:largetimesaddlefwdmgf} \
\begin{enumerate}[(i)]
\item
The expansion $ \exp(\Lambda_{\tau}^{(t)}(\I z+u^*_{\tau}))= \exp(V(\I z+u^*_{\tau})+ H(\I z+u^*_{\tau})\tau^{-1})\mathcal{R}(\tau)$ holds as $\tau$ tends to infinity where $\mathcal{R}(\tau)=\E^{\mathcal{O}(\E^{-\beta\tau})}$ for some $\beta>0$ and $\mathcal{R}$ is uniform in $z$.
\item
There exists $\tau_1^*$ such that $\Re \Lambda_{\tau}^{(t)}(\I z+u^*_{\tau})\leq \Re \Lambda_{\tau}^{(t)}(\I \sgn(z)+u^*_{\tau})$ for all $z>|1|$ and $\tau>\tau_1^*$.
\item
For all $a\in\mathcal{D}_{\infty}^{o}$ the function $\RR\ni z \mapsto \Re V(\I z+a)$ 
has a unique maximum at zero.
\end{enumerate}
\end{lemma}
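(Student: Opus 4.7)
I proceed in the order (iii), (i), (ii), since each part feeds the next.

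\emph{Part (iii).} The core is an explicit computation. Writing $u = a + \I z$ in $d^2(u) = (\kappa-\rho\xi u)^2 + u(1-u)\xi^2$, the real and imaginary parts split as
\begin{equation*}
d^2(a+\I z) = R(a,z) + \I\, I(a,z),\quad
R(a,z) := d^2(a) + \xi^2(1-\rho^2)z^2,\quad
I(a,z) := z\bigl(\xi^2 - 2\rho\xi\kappa - 2a\xi^2(1-\rho^2)\bigr).
\end{equation*}
Taking the principal square root gives $\Re d(a+\I z) = \sqrt{(\sqrt{R^2+I^2}+R)/2}$. Both $R$ and $I^2$ are strictly increasing in $z^2$, so $\Re d(a+\I z)$ is strictly increasing in $|z|$, with minimum $d(a)$ at $z=0$. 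Since $\Re V(a+\I z) = \tfrac{\mu}{2}(\kappa - \rho\xi a - \Re d(a+\I z))$, this is strictly decreasing in $|z|$ with a unique maximum at $z=0$.

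\emph{Part (i).} I apply Lemma~\ref{lem:fwdmgflargetauexpansion} at $u = \I z + u^*_{\tau}$. Assumption~\ref{assump:ustartaueq} yields $u^*_{\tau} \to u^*_{\infty} \in (u_-,u_+)$, so for $\tau$ large $d(u^*_{\tau}) \geq \tfrac{1}{2}d(u^*_{\infty}) =: \beta > 0$. The calculation in (iii), applied at $a = u^*_\tau$, gives $\Re d(\I z + u^*_{\tau}) \geq d(u^*_{\tau}) \geq \beta$ uniformly in $z$, so tracking the $\mathcal{O}(e^{-d(u)\tau})$ factor through the explicit formula~\eqref{eq:LambdaTau} (where it enters through both $A(u,\tau)$ and $B(u,\tau)$) it becomes $\mathcal{O}(e^{-\beta\tau})$ uniformly in $z$. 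Exponentiating yields the claimed representation, with the $\mathcal{R}(\tau)$ remainder of the stated form.

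\emph{Part (ii).} From (i),
\begin{equation*}
\Re\Lambda_{\tau}^{(t)}(\I z + u^*_{\tau}) = \Re V(\I z + u^*_{\tau}) + \tau^{-1}\Re H(\I z + u^*_{\tau}) + \mathcal{O}(e^{-\beta\tau}),
\end{equation*}
uniformly in $z$. By (iii), $z \mapsto \Re V(\I z + u^*_{\tau})$ is strictly decreasing in $|z|$, so it attains its supremum on $\{|z|\geq 1\}$ at $z = \pm 1$ (equal there by the $z \leftrightarrow -z$ symmetry inherited from $V$'s real coefficients). Asymptotically $\Re d(\I z + u^*_{\tau}) \sim \xi\sqrt{1-\rho^2}\,|z|$, so $\Re V$ falls linearly in $|z|$, whereas inspection of~\eqref{eq:VandH} shows $\Re H(\I z + u^*_\tau)$ grows at most logarithmically in $|z|$ (the first term saturates at $-v\E^{-\kappa t}/(2\beta_t)$ and the logarithm is $\mathcal{O}(\log|z|)$). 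I split $\{|z|>1\}$ into a compact annulus $\{1 < |z| \leq M\}$ and a tail $\{|z|>M\}$: on the tail the linear $V$-decay dwarfs the logarithmic $H/\tau$-correction for $M,\tau$ large; on the annulus, a first-order Taylor expansion at $z=\pm 1$, using the strictly positive $|z|$-derivative of $\Re d$ at $|z|=1$ from (iii), yields $\Re V(\I z + u^*_{\tau}) - \Re V(\I\sgn(z) + u^*_{\tau}) \leq -c(|z|-1)$ for some $c>0$ independent of $\tau$ (by continuity in $u^*_\tau \to u^*_\infty$), and this linear gap dominates the $\mathcal{O}(1/\tau)$ correction from $H/\tau$ for $\tau$ large, giving (ii).

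The principal obstacle is part (ii): near $|z|=1$ the $V$-gap vanishes to first order in $|z|-1$, so the $\tau^{-1}H$-correction could \emph{a priori} reverse the inequality. Overcoming this requires both a uniform-in-$z$ bound on $\Re H$ over the compact annulus (coming from the continuity of $H$ and the uniform control on $u^*_\tau$) and the tail argument pitting the linear $|z|$-growth of $-\Re V$ against the logarithmic growth of $\Re H$. Everything else reduces to the explicit computation of $d^2(a+\I z)$ performed for part (iii).
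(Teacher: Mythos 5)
Your parts (iii) and (i) are essentially the argument the paper has in mind (it delegates (iii) to~\cite[Appendix C]{JR12} and calls the rest ``tedious but straightforward''). The splitting of $d^2(a+\I z)$ into real and imaginary parts, the monotonicity of $\Re d(a+\I z)$ in $|z|$, and the resulting uniform lower bound $\Re d(\I z+u^*_\tau)\geq d(u^*_\tau)\geq\beta$ that feeds the exponential remainder in~(i) are all correct. (A small quibble: $I^2$ need not be \emph{strictly} increasing in $z^2$ --- the coefficient of $z$ in $I(a,z)$ can vanish --- but $R$ alone is strictly increasing, so the conclusion of (iii) is unaffected.)

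Part (ii) is where you diverge from the paper, and where there is a genuine gap. The paper does not pass through the asymptotic expansion at all: it reduces to showing that for \emph{every fixed} $\tau>0$ and $a\in\mathcal{D}_{t,\tau}^o$ one has $\Re\Lambda^{(t)}_\tau(\I z+a)\leq\Re\Lambda^{(t)}_\tau(\I\sgn(z)+a)$ for all $|z|>1$, i.e.\ a monotonicity-in-$|z|$ statement for the exact finite-$\tau$ lmgf, proved by a (more involved) computation in the same spirit as (iii). Your route instead replaces $\Lambda_\tau^{(t)}$ by $V+\tau^{-1}H+r_\tau$ and tries to dominate the corrections by the strict decrease of $\Re V$. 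Near $|z|=1$ this does not close. Writing the difference $\Re\Lambda^{(t)}_\tau(\I z+u^*_\tau)-\Re\Lambda^{(t)}_\tau(\I\sgn(z)+u^*_\tau)$ as the sum of the $V$-gap, the $\tau^{-1}H$-gap, and the remainder gap $\Re r_\tau(z)-\Re r_\tau(\sgn(z))$, you have at best $V$-gap $\leq -c(|z|-1)$ and ($H$-gap)$/\tau=\mathcal{O}((|z|-1)/\tau)$ (you would actually need this, not merely a uniform bound on $\Re H$ over the annulus, for the Taylor comparison to work); but the remainder gap is only controlled as $\mathcal{O}(\E^{-\beta\tau})$ uniformly in $z$, with no factor of $(|z|-1)$. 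For $|z|-1$ of order $\E^{-\beta\tau}$ the remainder can swamp the $V$-gap, so no single $\tau_1^*$ valid for all $|z|>1$ emerges from the stated estimates. To rescue this route you would need to show additionally that $z\mapsto r_\tau(z)$ has a uniformly small derivative (so that $|r_\tau(z)-r_\tau(\sgn(z))|=\mathcal{O}((|z|-1)\E^{-\beta\tau})$), which is not established. The paper's direct finite-$\tau$ argument sidesteps this by never invoking the expansion in part (ii).
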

\begin{proof}\
\begin{enumerate}[(i)]
\item
The proof of the expansion follows from Assumption~\ref{assump:ustartaueq}
and analogous steps to the proofs of Lemma~\ref{lem:fwdmgflargetauexpansion} and Lemma~\ref{lem:charactsimp}.
The proof of uniformity of the remainder $\exp\left( \Lambda_{\tau}^{(t)}(\I z + a) -V(\I z + a)-H(\I z + a)\tau^{-1}\right)$ in $z$ involves tedious but straightforward computations and is omitted for brevity.
See Figure~\ref{fig:hestlargetimetail}(a) for a visual illustration.
\item
Assumption~\ref{assump:ustartaueq} implies that there exists $\tau_1^*$ such that $u_{\tau}^*\in\mathcal{D}_{\infty}^{o}$ for all $\tau>\tau_1^*$.
So we need only show that for all $\tau>0$ and $a\in\mathcal{D}_{t,\tau}^{o}$:
$\Re \Lambda_{\tau}^{(t)}(\I z+a)\leq \Re \Lambda_{\tau}^{(t)}(\I \sgn(z)+a)$ for all $z>|1|$.
The proof of this result involves tedious but straightforward computations and is omitted for brevity.
See Figure~\ref{fig:hestlargetimetail}(b) for a visual illustration.
\item
The proof of (iii) is straightforward and follows the same steps as~\cite[Appendix C]{JR12}.
We omit it for brevity.
\end{enumerate}
\end{proof}
\begin{figure}[h!tb] 
\centering
\mbox{\subfigure[]{\includegraphics[scale=0.6]{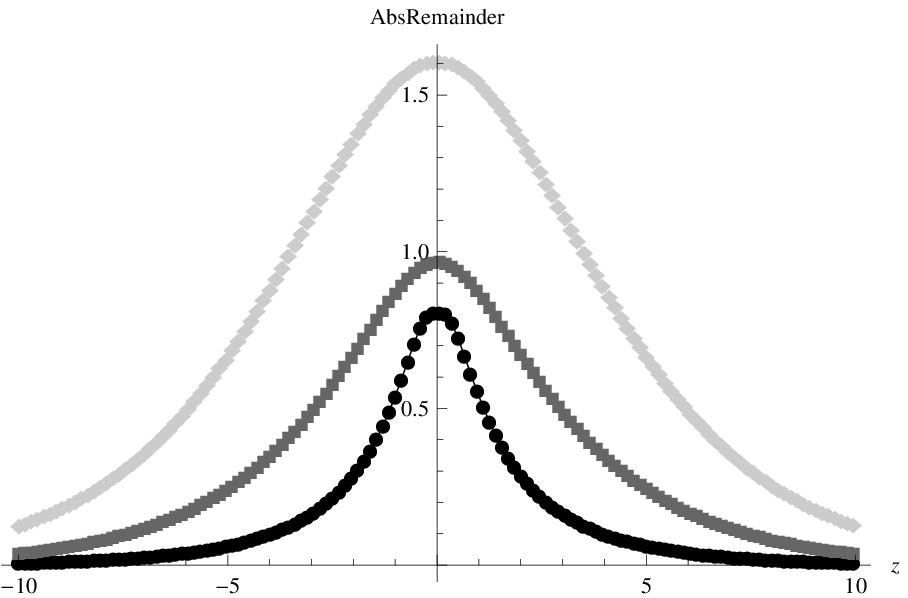}}\quad
\subfigure[]{\includegraphics[scale=0.6]{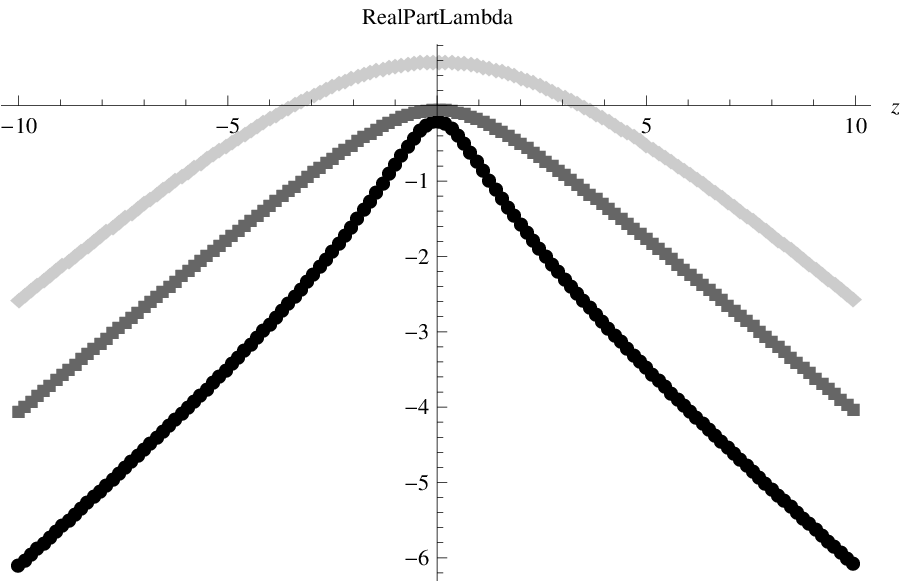}}}
\caption{On the left we plot the map $z\mapsto \left|\exp\left( \Lambda_{\tau}^{(t)}(\I z + a) -V(\I z + a)-H(\I z + a)\tau^{-1}\right)\right|$ and on the right we plot the map $z\mapsto \Re \Lambda_{\tau}^{(t)}(\I z + a)$. 
Here $a=-3$ (circles), $a=0.5$ (squares) and $a=3$ (diamonds) and the 
parameters are the same as Figure 2 with $t=1$ and $\tau=5$.}
\label{fig:hestlargetimetail}
\end{figure}

\end{document}